\DeclarePairedDelimiter\abs{\lvert}{\rvert}
\newcommand{\bbC}{\mathbb{C}}
\newcommand{\Tr}{\operatorname{tr}}
\def\iden{\mathds{1}}
\def\dist{\mathrm{dist}}
\def\Pr{\mathrm{Pr}}
\newcommand{\ot}{\otimes}
\newcommand{\brho}{\boldsymbol{\rho}}
\newcommand{\bI}{\boldsymbol{I}}
\newcommand{\bV}{\boldsymbol{V}}
\newcommand{\bX}{\boldsymbol{X}}
\newcommand{\Bell}[2]{\ket{\mathsf{Bell}_{#1, #2}}}
\newcommand{\epr}[1]{\ket{\mathsf{EPR}_{#1}}}
\newcommand{\fig}[1]{Figure~\ref{fig:#1}}
\newcommand{\secref}[1]{Section~\ref{sec:#1}}
\newcommand{\poly}{\operatorname{poly}}
\newcommand{\E}{\mathbb{E\/}}
\def\vcentcolon{\mathrel{\mathop\ordinarycolon}}
\newtheorem{theorem}{Theorem}[section]  % Theorem numbering by section
\newtheorem*{theorem*}{Theorem}         % Unnumbered theorem
\newtheorem{definition}[theorem]{Definition} 
\newtheorem{fact}{Fact}[section]
\newtheorem{remark}[theorem]{Remark}
\newtheorem{proposition}[theorem]{Proposition}
\newtheorem{lemma}[theorem]{Lemma}
\newtheorem{corollary}{Corollary}[section]
\newcommand{\appref}[1]{Appendix~\ref{sec:#1}}
\newcommand{\lemref}[1]{Lemma~\ref{lem:#1}}
\newcommand{\thmref}[1]{Theorem~\ref{thm:#1}}
\newcommand{\propref}[1]{Proposition~\ref{prop:#1}}
 \newcommand{\defref}[1]{Definition~\ref{def:#1}}
 \newcommand{\corref}[1]{Corollary~\ref{cor:#1}}
\newcommand{\factref}[1]{Fact~\ref{fact:#1}}
\def\iden{\mathds{1}}
\def\norm#1{ {|\hspace{-.022in}|#1|\hspace{-.022in}|} }
\def\Norm#1{ {\big|\hspace{-.022in}\big| #1 \big|\hspace{-.022in}\big|} }
\def\NORM#1{ {\left|\hspace{-.022in}\left| #1 \right|\hspace{-.022in}\right|} }
\def\iden{\mathds{1}}
\def\dist{\mathrm{dist}}
\def\Pr{\mathrm{Pr}}
\newcommand{\ketbra}[2]{|#1\rangle\langle #2|}
\newcommand{\lightconeee}{\raisebox{-0.5\height}{\includegraphics[height=12ex]{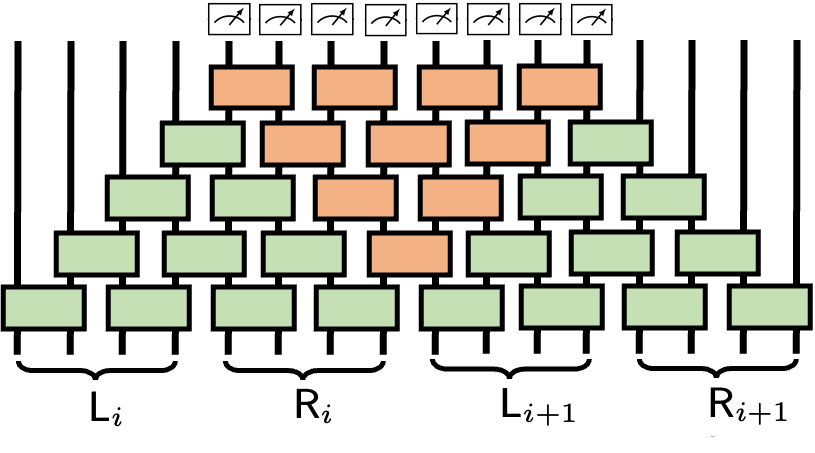}}}
\newcommand{\lightconex}{\raisebox{-0.5\height}{\includegraphics[height=12ex]{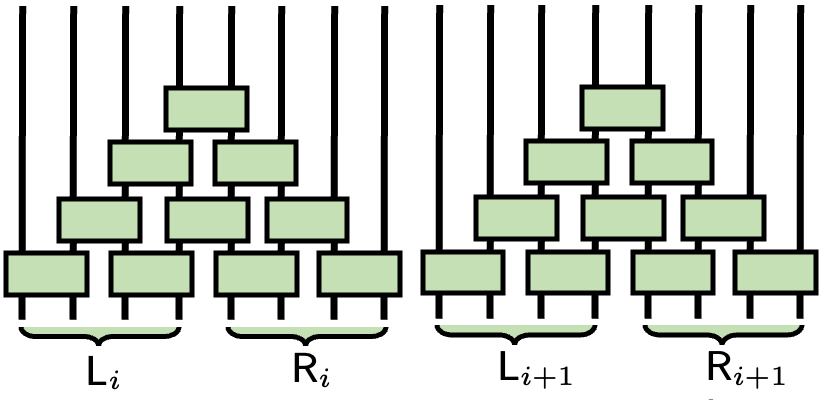}}}
\begin{document}

\title{When can classical neural networks represent quantum states?}
\author{Tai-Hsuan Yang$^{* 1,2,3}$}
\author{Mehdi Soleimanifar$^{* 3}$}
\author{Thiago Bergamaschi$^{4}$}
\author{John Preskill$^{3,5}$}

\affiliation{$^1$University of Illinois Urbana-Champaign, Urbana, IL, USA}
\affiliation{$^2$National Taiwan University, Taipei City, Taiwan}
\affiliation{$^3$California Institute of Technology, Pasadena, CA, USA}
\affiliation{$^4$University of California Berkeley, Berkeley, CA, USA}
\affiliation{$^5$AWS Center for Quantum Computing, Pasadena, CA, USA}

\thanks{Co-first author.}

\begin{abstract} 
    A naive classical representation of an $n$-qubit state requires specifying exponentially many amplitudes in the computational basis. 
    Past works have demonstrated that classical neural networks can succinctly express these amplitudes for many physically relevant states,  leading to computationally powerful representations known as neural quantum states.
    What underpins the efficacy of such representations?
    We show that \emph{conditional correlations} present in the measurement distribution of quantum states control the performance of their neural representations.
    Such conditional correlations are basis dependent, arise due to measurement-induced entanglement, and reveal features not accessible through conventional few-body correlations often examined in studies of phases of matter.
    By combining theoretical and numerical analysis, we demonstrate how the state’s entanglement and sign structure, along with the choice of measurement basis, give rise to distinct patterns of short- or long-range conditional correlations.
    Our findings provide a rigorous framework for exploring the expressive power of neural quantum states.
\end{abstract}
\maketitle

\section{Introduction}\label{sec:intro}
Modern machine learning (ML) algorithms based on artificial neural networks have achieved impressive results in image and language processing. 
Recent works have shown that similar architectures can also be employed to classically simulate the properties of many-body quantum systems \cite{carleo2017solving, melko2019restricted, Pfau2020, Carleo2019CNN, Hibat2020RNN, melko2024language}, or learn a classical representation of a quantum state from measurement data  \cite{carrasquilla2019reconstructing, melko2019restricted, torlai2018neural, iouchtchenko2023neural, zhao2023empirical, Torlai2020Precise}. 
To achieve this, a machine learning model such as a recurrent neural network \cite{Hibat2020RNN, Hibat2023RnnTopology}, convolutional neural network \cite{Carleo2019CNN}, or restricted Boltzmann machine \cite{carleo2017solving} is used to concisely represent the amplitudes $\psi(x) = \braket{x}{\psi}$ of an $n$-qubit state $\ket{\psi} = \sum_{x \in \{0,1\}^n} \psi(x) \ket{x}$ in some fixed basis. 
That is, upon querying with a bit string $x \in \{0,1\}^n$, the ML model returns a complex value $\psi(x)$ that we interpret as the amplitude of a possibly un-normalized state~$\ket{\psi}$. 
The ML model admits a series of weights collectively denoted by $w$. 
Each configuration of these weights~$w$, leads to the representation of a particular quantum state $\ket{\psi_w}$. 
For such a representation to be useful, the set of weights and the running time of the ML model should scale at most polynomially $\poly(n)$ with the number of qubits. 
Overall, this yields a parameterized family of $n$-qubit states $\{\ket{\psi_w}\}$ referred to as \emph{neural quantum~states}~\cite{carleo2017solving}.

The neural representation of quantum states provides a powerful computational and statistical model, allowing for sampling from the measurement distribution of the state \cite{Bravyi2022MCMC, Bravyi2023rapidlymixingmarkov, Torlai2020Precise}, Monte Carlo estimation of sparse observables \cite{carleo2017solving, melko2019restricted, Pfau2020, Carleo2019CNN, Hibat2020RNN, melko2024language}, state certification with few single-qubit measurements \cite{huang2024certifying}, and dequantization of quantum algorithms \cite{Tang2022Sampling, Tang2019Recom}.
Despite their wide range of applications and empirical effectiveness, the theoretical underpinning of why neural networks could, in principle, be used to represent the many-body states of interest is not fully understood.
Indeed at first sight, the practical success of neural quantum states seems in conflict with the fact that general $n$-qubit states require exponentially many parameters to be precisely specified.
However, in practice, the constraints imposed by the physics of the system may be leveraged to achieve an efficient classical representation of its state.
A notable example of such constraints is the area law for entanglement entropy in gapped ground states, which underpins our understanding of tensor network representations of these states \cite{hastings2007area_law, AradRigorousRG, Arad2013AreaLaw, anshu2019AreaLaw2D, AradFrustrationFreeAreaLaw}.

In this paper, we reveal a physical feature of quantum states that controls the efficiency of their neural representation. 
This property can be best introduced by considering a representation of the quantum state using~recurrent~neural~networks (RNN) \cite{Hibat2020RNN, Hibat2023RnnTopology, melko2024language}, although as we will see, our findings extend to other architectures. 
In the RNN representation of a quantum state, we express the amplitudes in polar form, $\psi(x) = |\psi(x)| \cdot e^{i g(x)}$, with the magnitude $|\psi(x)|$ and phase $e^{i g(x)}$ components treated separately.
We focus primarily on the power and limitations of RNN in representing the magnitudes $|\psi(x)|$, setting aside the phase component $e^{ig(x)}$. 
This simplification incurs no loss of generality for the sign-free states considered in \secref{nonNegativAmp}. 
More broadly, as discussed in \secref{long-rangeCMI}, the presence of complex phases often hinders finding neural representations for quantum states—a challenge encountered across various simulation and learning applications \cite{parkKastoryano2020expressive, Carrasquilla2019Gutzwiller, Castelnovo2020signproblem, Bukov2021nonStoquastic}.

The squared-magnitudes $p(x):=|\psi(x)|^2$ simply correspond to the distribution obtained by measuring the state $\ket{\psi}$ in the computational basis.
The RNN representation of this distribution is based on the chain rule of probabilities
\begin{align}
    p(x_1,\dots, x_n) = p(x_1) p(x_2|x_1)\cdots p(x_n|x_1,\dots, x_{n-1}).\label{eq:chianRule}
\end{align}
Starting with $p(x_1)$ and proceeding successively, each conditional probability $p(x_k|x_1,\dots, x_{k-1})$ is computed by a feedforward neural network acting on $x_1,\dots, x_{k-1}$ and a hidden state $h_{k-1}$ from the last step,~see~\fig{FNNRNN}.

The conditional probability $p(x_k|x_1,\dots, x_{k-1})$ for $k\leq n$ could in general be a complicated $k$-bit function with an exponential complexity $\mathcal{O}(2^{k})$.  
Hence, for the RNN to be able to efficiently compute these probabilities, we need to impose further conditions. 
A sufficient (though not necessary) criterion is \emph{conditional independence} of the variables.
That is, the distribution of each variable $x_k$ is only dependent on the previous $\ell$ variables $x_{k-\ell},\dots, x_{k-1}$ for some $\ell\in[k-1]$ such that 
\begin{align}
    p(x_k|x_1,\dots, x_{k-1}) = p(x_k|x_{k-\ell}, \dots, x_{k-1}).\label{eq:CMIProbabilityVersion}
\end{align}
This reduces the worst-case complexity of computing each conditional distribution to $\mathcal{O}(2^{\ell})$, which can be polynomially-bounded for a sufficiently small $\ell\leq \mathcal{O}(\log n)$, and leads to an efficient neural network representation as we show in \secref{ShortRange}.

In fact, a similar argument applies even when the conditional independence in Eq.~\eqref{eq:CMIProbabilityVersion} only approximately holds.
This is relevant because many quantum states of interest, such as those examined in \thmref{CMI_line_2d}, are known to satisfy the approximate version.
To quantify this, we use \emph{classical conditional mutual information} $I(\mathsf{A}:\mathsf{C}|\mathsf{B})$ as a measure of conditional correlations.
This quantity represents the average correlation between two sets of variables, $\mathsf{A}$ and $\mathsf{C}$, conditioned on the value of the variables in $\mathsf{B}$ (see \secref{ShortRange} for a formal definition).

Altering the measurement basis can significantly affect the amount of conditional correlations, reflecting the basis-dependent nature of neural quantum states and our analysis throughout this work. 
This contrasts with `localizable entanglement' \cite{Verstraete2004Localizable, Popp2005Localizable}, which corresponds to the \emph{maximum} entanglement that can be created on average between subsystems $\mathsf{A}$ and $\mathsf{C}$ through local measurements on the qubits in $\mathsf{B}$. 
Conditional correlations are also distinct from conventional two-point (or bipartite) correlations, which are often used to characterize phases of matter or explain the success of classical simulation techniques. Quantum states with short-range two-point correlations can show either short- or long-range conditional correlations upon measurement (e.g., cluster state in the $Z$ versus $X$ basis). 
Likewise, those with long-range two-point correlations can also display either type of conditional correlations (e.g., output states of deep quantum circuits versus quantum phase states \cite{ji2018pseudorandom, brakerski2019pseudo}).

Previous works have identified distinct computational and algebraic barriers that limit the (worst-case) expressiveness of neural quantum states \cite{gao2017efficient, huang2020predicting, NNrepresentationTNSharir, jiang2023local}.
We will see that our information-theoretic approach, based on characterizing conditional correlations, provides a unified and physically tangible framework for studying these and more general obstacles to the neural representation of quantum states.

Motivated by these observations, we explore the presence or absence of conditional independence in the measurement distribution $p(x_1,\dots, x_n)$ of an $n$-qubit quantum state, and how this influences both the existence and the computational tractability of finding an efficient neural representation.
We consider quantum states that, roughly speaking, exhibit one of the following features:
\begin{enumerate}[leftmargin=1.5em]
\item[(1)]The distribution $p(x_1, \dots, x_n)$ has short-range conditional correlations, corresponding to a small $\ell$ in Eq.~\eqref{eq:CMIProbabilityVersion}, and can be efficiently represented by shallow neural networks. 
\item[(2)] The distribution $p(x_1, \dots, x_n)$ exhibits long-range conditional correlations, but still allows for an efficient neural network representation. However, variationally finding such a representation (e.g., via gradient descent) may be challenging. 
\item[(3)] The distribution $p(x_1, \dots, x_n)$ has long-range conditional correlations that are difficult to represent with efficient neural networks. 
\end{enumerate}

We emphasize that this is a non-exhaustive list of the types of conditional correlations that can occur in quantum systems, and our work does not rule out many other possibilities.
For instance, the conditional distributions in \eqref{eq:CMIProbabilityVersion} may depend on a small number of variables $\ell$, which are not necessarily geometrically local. 
Alternatively, the conditional mutual information $I(\mathsf{A}:\mathsf{C}|\mathsf{B})$ may decay, but not necessarily at an exponential rate.
There might also exist efficiently trainable neural quantum states with long-range conditional correlations.

Through rigorous analysis and numerical experiments, we explore how each of the above cases (1), (2), and (3) arises in physically motivated states.
We identify the amount of (pre-measurement) entanglement in the state, the sign structure of the amplitudes, and the local measurement basis as crucial factors determining the type of measurement-induced correlations in the state.
Our results are summarized as follows:

\vspace{0.5em}

\noindent \textbf{Result 1.} Quantum states with short-range conditional correlations admit shallow neural network representations.

\vspace{0.5em}

We state this result more formally in \thmref{CMI_line_2d} and prove it in \appref{CNNfromCMI}.

\vspace{0.5em}

\noindent \textbf{Result 2.} The ground states of prototypical gapped spin systems, states generated by random shallow 1D quantum circuits, and random tensor network states with predominantly positive tensor entries display short-range conditional correlations.

\vspace{0.5em}

The setting of random shallow 1D quantum circuits is stated in \thmref{CMIdecayRandom1D} and proved in \appref{CMIShallowCkt}. Numerical evidence for spin systems and random tensor network states is presented in \secref{numerics} and \secref{nonNegativAmp} respectively.

\vspace{0.5em}

\noindent\textbf{Result 3.} The 1D cluster state, subjected to single-qubit rotations, exhibits conditional correlations that transition from short- to long-range as the rotation angle varies. This transition is accompanied by a decline in the performance of variational algorithms in finding the neural representation of the state.

The conditional correlations in cluster states are bounded in \thmref{decayCMIClusterState}, and the numerical evidence for the computational hardness of finding the rotated cluster state is discussed in \secref{clusterStateCMI}.

Lastly, in \secref{HigherDimensions}, we discuss how recent studies on random 2D tensor networks and states prepared by random 2D shallow quantum circuits suggest that the proliferation of long-range correlations induced by measurements poses a significant obstacle to efficiently representing these states with neural networks.

While our primary focus is on neural representations of quantum states and the Monte Carlo algorithms based on them, our work lies within the broader context of measurement-induced entanglement and monitored quantum circuits, which have been the subject of much interest in relation to phases of many-body entanglement \cite{Li2018Monitored, Skinner2019Monitored, Chan2019Monitored,bao2021finite, google2023measurementInduced, Popp2005Localizable}, classical simulation of random quantum circuits \cite{napp2019efficient2d, BeneWatts2024MIE}, and probing the sign problem \cite{hastings2016signfree, lin2022probingsign}.

\begin{figure}[t!]
    \centering
\includegraphics[width=0.85 \columnwidth]{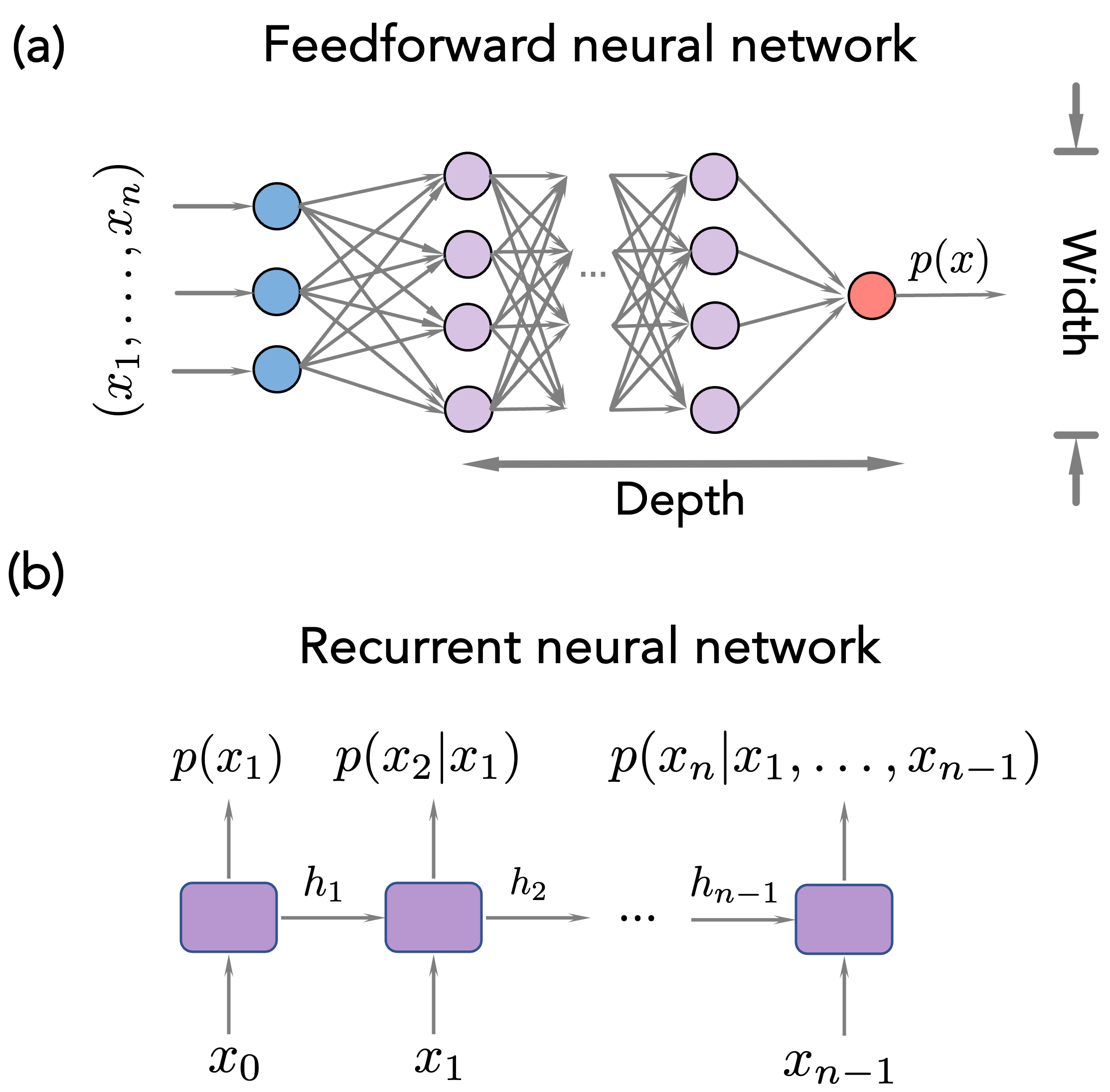}
    \caption{\textbf{Neural representation of} $\bm{p(x) = |\psi(x)|^2}$.
\textbf{(a)} A feedforward neural network with $n$ input nodes and specified depth and width.
    \textbf{(b)} A recurrent neural network involves multiple recurrent cells, each efficiently computing a conditional distribution using a feedforward neural network and a hidden state, which is passed to the next cell.
    This allows for an autoregressive sampling from the joint distribution~$p(x_1,\dots, x_n)$ using the chain rule \eqref{eq:chianRule}. Here $h_1,\dots, h_{n-1}$ are the hidden states, $x_0$ is the start token initialized as zero, and $x_1,\dots, x_n$ are inputs bits. 
    }
    \label{fig:FNNRNN}
\end{figure}

\section{Short-range conditional correlations}\label{sec:ShortRange}

We consider an entangled system of qubits arranged on the vertices of a finite-dimensional lattice such as a one-dimensional chain or two-dimensional grid. 
Consider three subsets of qubits $\mathsf{A}$, $\mathsf{B}$, and $\mathsf{C}$ where $\mathsf{A} \cup \mathsf{B} \cup \mathsf{C} \subseteq [n]$, with $\dist(\mathsf{A}, \mathsf{C})$ denoting the length of the shortest path connecting $\mathsf{A}$ and $\mathsf{B}$ on the lattice.
Suppose when the qubits in the subset $\mathsf{B}$ are measured in the computational basis, we observe the outcome ${x}_{\mathsf{B}} \in \{0,1\}^{|\mathsf{B}|}$. 
We let $p_{\mathsf{A},\mathsf{C}|\mathsf{B}}(x_\mathsf{A}, x_\mathsf{C}|{x}_\mathsf{B})$ denote the marginal distribution of measurement outcomes for qubits in $\mathsf{A}$ and $\mathsf{C}$ conditioned on observing ${x}_B$.
The marginal distributions $p_{\mathsf{A}|\mathsf{B}}(x_{\mathsf{A}}|{x}_{\mathsf{B}})$ and $p_{\mathsf{C}|\mathsf{B}}(x_{\mathsf{C}}|{x}_{\mathsf{B}})$ are defined in a similar way.
We say the measurement outcomes in regions $\mathsf{A}$ and $\mathsf{C}$ are \emph{conditionally independent} with respect to the outcomes in region $\mathsf{B}$ when for any $x_{\mathsf{B}}$ such that $p_{\mathsf{B}}(x_{\mathsf{B}}) > 0$ and any choices of $x_{\mathsf{A}}$ and $x_{\mathsf{C}}$, we have
\begin{align}
p_{\mathsf{A},\mathsf{C}|\mathsf{B}}(x_\mathsf{A}, x_\mathsf{C}|{x}_\mathsf{B}) = p_{\mathsf{A}|\mathsf{B}}(x_{\mathsf{A}}|{x}_{\mathsf{B}}) p_{\mathsf{C}|\mathsf{B}}(x_{\mathsf{C}}|{x}_{\mathsf{B}}).\nonumber
\end{align}

A stronger notion of conditional independence is captured using the classical conditional mutual information~(CMI) defined by 
\begin{align}
    I(\mathsf{A}:\mathsf{C}|\mathsf{B}) :=  \E_{\mathsf{B}}D_{\text{KL}} (p_{\mathsf{A},\mathsf{C}|\mathsf{B}} \parallel p_{\mathsf{A}|\mathsf{B}}\otimes p_{\mathsf{C}|\mathsf{B}}),\label{eq:defCMI}
\end{align}
where the expectation is over $x_{\mathsf{B}}\sim p_{\mathsf{B}}(x_{\mathsf{B}})$ and the Kullback–Leibler~(KL) divergence between two distributions $q$ and $q'$ is given by $D_{\text{KL}}(q\parallel q') = \E_{{x}\sim q}\log \left({q({x})/q'({x})}\right)$ in their support.
These two notions are connected by $\E_{\mathsf{B}}\Norm{p_{\mathsf{A},\mathsf{C}|\mathsf{B}}-p_{\mathsf{A}|\mathsf{B}}\otimes p_{\mathsf{C}|\mathsf{B}}}_1 \leq  \sqrt{2 \cdot I(\mathsf{A}:\mathsf{C}|\mathsf{B})}$, which follows from the well-known Pinsker's inequality.

Quantum states with short-range conditional correlations may exhibit an \emph{approximate} version of conditional independence defined as follows:
\begin{definition}[Approximate conditional independence in measurement distribution]\label{def:ApproximateConditionalIndependence}
    The probability distribution obtained by measuring an $n$-qubit quantum state $\ket{\psi}$ in the computational basis satisfies the approximate conditional independence with respect to subsets $\mathsf{A} \cup \mathsf{B} \cup \mathsf{C} \subseteq [n]$, if there is a constant $\xi$ and a function $\alpha(\mathsf{A}, \mathsf{C})$ such that
    \begin{align}
    I(\mathsf{A}:\mathsf{C}|\mathsf{B}) \leq \alpha({\mathsf{A}, \mathsf{C}}) \cdot e^{-\dist(\mathsf{A},\mathsf{C})/\xi}.\label{eq:ConditionalIndependence}
\end{align}
    and $\alpha({\mathsf{A}, \mathsf{C}})\leq \poly(|\mathsf{A}|, |\mathsf{C}|)$. 
    We refer to parameter $\xi$ as the \emph{CMI length}.
\end{definition}
 \begin{figure}[t!]
    \centering
    \includegraphics[width=\columnwidth]{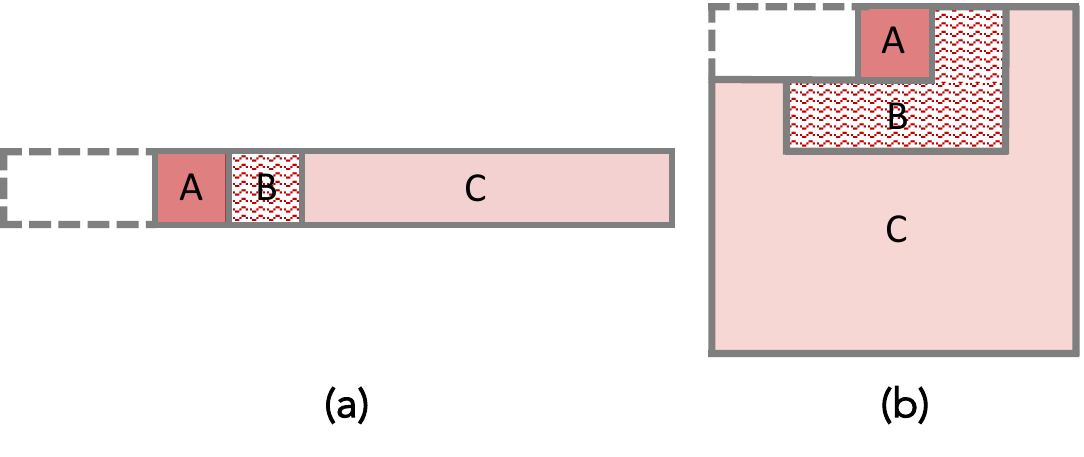}
    \caption{\textbf{Partitioning the lattice.} Geometrically contiguous regions $(\mathsf{A},\mathsf{B},\mathsf{C})$ needed for the neural network construction in \thmref{CMI_line_2d} in \textbf{(a)} one-dimensional, and \textbf{(b)} two-dimensional lattices.
    In each example, a contiguous part of the lattice has been traced out and a tripartition of the remainder is considered such that $\mathsf{A} \cap \mathsf{C} = \emptyset$.
    The approximate conditional independence in \defref{ApproximateConditionalIndependence} is assumed to hold for such tripartitions.}
     \label{fig:CMI_line_2d}
\end{figure} 
Based on \defref{ApproximateConditionalIndependence}, we classify conditional correlations as \emph{long-range} under two circumstances: (1) the conditional mutual information (CMI) does not decay exponentially as in \eqref{eq:ConditionalIndependence}; or (2) for a fixed number of qubits, as a Hamiltonian parameter (e.g., the rotation angle in \secref{clusterStateCMI}) is varied, the CMI length $\xi$ increases and becomes comparable to the size of the lattice. Otherwise, we describe the conditional correlations as being \emph{short-range}.

To connect this notion of approximate conditional independence to efficient neural network representation of the state $\ket{\psi}$, we consider two architectures for neural quantum states, depicted in \fig{FNNRNN} and introduced formally in \appref{NeuralNetworks}. 
The first is the feedforward neural network. This architecture consists of a network of neurons, each of which applies an affine transformation to its input followed by a non-linear activation function.
The number of layers in the network is referred to as the \emph{depth}, while the maximum number of neurons in any layer is known as the \emph{width} of the neural network.
As discussed in \secref{intro}, we also consider recurrent neural networks (RNN), which compute the probabilities $p(x_1,\dots, x_n)$ via the chain rule stated in Eq.~\eqref{eq:chianRule}.
An RNN consists of a cascade of cells, functioning as feedforward neural networks. 
The $k$th cell processes an input bit $x_k$ and a hidden memory state $h_k$. 
The number of neurons in each cell and the size of the memory determine the complexity of evaluating an RNN.

We now state our first result which applies the conditional independence property \defref{ApproximateConditionalIndependence} in a quantum state to construct its neural representation. The proof is given in \appref{CNNfromCMI}.

\begin{theorem}[Neural quantum state from conditional independence]\label{thm:CMI_line_2d}
    Consider $n$ qubits arranged on a $D_{\Lambda}$-dimensional lattice $\Lambda$ with a quantum state $\ket{\psi}=\sum_{x \in \{0,1\}^n} \psi(x) \ket{x}$.
    Suppose the measurement distribution $p(x) = |\psi(x)|^2$ in this basis satisfies the approximate conditional independence property given in \defref{ApproximateConditionalIndependence} with respect to geometrically contiguous regions shown in \fig{CMI_line_2d}. Then, there exists a feedforward neural network with 
    \begin{align}
        \text{depth}  =\mathcal{O}\left(\log\log(n)\right)  \text{ \ and width} = n^{\mathcal{O}(\log^{D_{\Lambda}-1}(n))}\label{eq:depthWidth2}
    \end{align}
        that computes a function $q(x)$ such that $\sum_{x}|q(x)-p(x)|\leq 1/\poly(n)$.
This can also be achieved using a recurrent neural network with $D_{\Lambda}$-dimensional geometry,  a memory size of $\mathcal{O}(\log^{D_{\Lambda}}(n))$, and  feedforward cells with width and depth given by \eqref{eq:depthWidth2}.
\end{theorem}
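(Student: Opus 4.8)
The plan is to construct the network directly from the chain-rule factorization in Eq.~\eqref{eq:chianRule}, replacing every exact conditional by a \emph{local} conditional that the assumed CMI decay certifies as a good surrogate, and then to realize the resulting product of local conditionals by an explicit feedforward circuit. First I would fix an ordering of the qubits induced by a hierarchical tiling of $\Lambda$ into $D_{\Lambda}$-dimensional boxes of side length $r=\Theta(\xi\log n)$, and process the boxes one at a time. When box $m$ is processed, let $\mathsf{B}_m$ be the already-processed qubits within distance $r$ of box $m$, and define the surrogate $q(\mathrm{box}_m\mid x_{<m}):=p(\mathrm{box}_m\mid x_{\mathsf{B}_m})$. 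Since $\mathsf{B}_m$ together with the traced-out (unprocessed) region separates box $m$ from every processed qubit at distance $>r$, the relevant tripartition is exactly of the contiguous, partially-traced-out form of \fig{CMI_line_2d} to which \defref{ApproximateConditionalIndependence} applies with $\dist=\Omega(\log n)$, and each box and buffer contains $\mathcal{O}(r^{D_{\Lambda}})=\mathcal{O}(\log^{D_{\Lambda}} n)$ qubits.

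Next I would bound the accumulated error. Set $q(x):=\prod_m q(\mathrm{box}_m\mid x_{<m})$; this is automatically a normalized distribution because each factor is a genuine conditional marginal of $p$ depending on $x_{<m}$ only through $x_{\mathsf{B}_m}$. A hybrid/telescoping bound along the chain rule gives $\sum_x\abs{q(x)-p(x)}\le\sum_m\E_{x_{<m}\sim p}\Norm{p(\cdot\mid x_{<m})-p(\cdot\mid x_{\mathsf{B}_m})}_1$, and a short computation identifies each summand with $\E_{\mathsf{B}}\Norm{p_{\mathsf{A},\mathsf{C}\mid\mathsf{B}}-p_{\mathsf{A}\mid\mathsf{B}}\otimes p_{\mathsf{C}\mid\mathsf{B}}}_1$ for $\mathsf{C}=\mathrm{box}_m$, $\mathsf{B}=\mathsf{B}_m$, and $\mathsf{A}$ the far processed qubits. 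The Pinsker bound quoted in the excerpt then controls it by $\sqrt{2\,I(\mathsf{A}:\mathsf{C}\mid\mathsf{B})}\le\sqrt{2\,\poly(\log^{D_{\Lambda}} n)\,e^{-\Omega(\log n)}}$. Choosing the constant in $r=\Theta(\xi\log n)$ large enough makes each summand at most $1/\poly(n)$ with a sufficiently high power, so summing over the $\mathcal{O}(n)$ boxes yields $\sum_x\abs{q(x)-p(x)}\le 1/\poly(n)$.

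To realize $q$ as a network, in one layer I would form the affine indicators $\mathds{1}[x_i=0],\mathds{1}[x_i=1]$ for every qubit. For each box $m$ and each joint pattern $(\gamma,\beta)$ of $(\mathrm{box}_m,\mathsf{B}_m)$, the conjunction $\prod\mathds{1}[\cdot]$ over the $\mathcal{O}(\log^{D_{\Lambda}} n)$ relevant bits is computed by a balanced binary tree of two-input \textsc{and} gadgets of depth $\mathcal{O}(\log(\log^{D_{\Lambda}} n))=\mathcal{O}(\log\log n)$; this tree is the sole source of the depth bound. A linear read-out selects the stored value $\log p(\gamma\mid\beta)$, a single linear neuron sums these into $\log q(x)=\sum_m\log q(\mathrm{box}_m\mid x_{<m})$, and one exponential activation outputs $q(x)$. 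There are $\mathcal{O}(n/\log^{D_{\Lambda}} n)$ boxes and $2^{\mathcal{O}(\log^{D_{\Lambda}} n)}=n^{\mathcal{O}(\log^{D_{\Lambda}-1} n)}$ patterns per box, giving width $n^{\mathcal{O}(\log^{D_{\Lambda}-1} n)}$ as in \eqref{eq:depthWidth2}. The recurrent version follows by letting a $D_{\Lambda}$-dimensional array of cells carry the buffer configuration $x_{\mathsf{B}_m}$ (of size $\mathcal{O}(\log^{D_{\Lambda}} n)$) as the hidden state and emit each local conditional autoregressively.

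The hard part will be the first step: exhibiting an ordering/tiling of the $D_{\Lambda}$-dimensional lattice in which every newly processed box is screened from the entire processed past by a buffer of only $\mathcal{O}(\log^{D_{\Lambda}} n)$ qubits, so that the separation required by \defref{ApproximateConditionalIndependence} genuinely holds once the future is traced out. A naive row-by-row sweep produces a frontier of size $\Theta(n^{(D_{\Lambda}-1)/D_{\Lambda}}\log n)$, which would blow up both the width and the recurrent memory; controlling this frontier (and keeping the per-cell RNN memory at $\mathcal{O}(\log^{D_{\Lambda}} n)$) is the geometric heart of the argument. By contrast, the error bookkeeping and the circuit assembly—including the $\mathcal{O}(\log\log n)$-depth conjunction trees—are comparatively routine once the local regions are in place.
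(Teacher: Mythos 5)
Your proposal is correct and follows essentially the same route as the paper: the paper likewise tiles $\Lambda$ into boxes of side $\Theta(\xi\log n)$, telescopes the chain rule into local conditionals $p(\mathsf{A}_k\mid\partial\mathsf{A}_k\cap\mathsf{A}_{<k})$ (equivalently, the ratio of marginals $p_{\mathsf{A}_k,\mathsf{B}_k}/p_{\mathsf{B}_k}$ for the feedforward version), controls the accumulated $\ell_1$ error by Pinsker applied to the assumed CMI decay, and realizes each $\mathcal{O}(\log^{D_\Lambda}n)$-bit factor by an $\mathcal{O}(\log\log n)$-depth tree of pairwise multiplication gadgets of total width $2^{\mathcal{O}(\log^{D_\Lambda}n)}$. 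The ``hard part'' you flag is not actually an obstacle, and your own sketch already contains the resolution: because \defref{ApproximateConditionalIndependence} is assumed for tripartitions in which the (contiguous) unprocessed region is traced out, a snake ordering makes the processed annulus $\mathsf{B}_m$ of $\mathcal{O}(\log^{D_\Lambda}n)$ qubits a genuine separator of $\mathrm{box}_m$ from the far processed past, so each feedforward factor depends on only $\mathcal{O}(\log^{D_\Lambda}n)$ bits regardless of the $\Theta(n^{(D_\Lambda-1)/D_\Lambda})$ frontier; the frontier matters only for a strictly one-dimensional chain of RNN cells, and the paper (like you) avoids it by using a $D_\Lambda$-dimensional array of cells whose hidden states propagate along all lattice directions so that each cell receives just its geometric neighbors' buffers. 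One cosmetic slip: the Pinsker prefactor is $\poly(|\mathsf{A}|,|\mathsf{C}|)\leq\poly(n)$ rather than $\poly(\log^{D_\Lambda}n)$, since $\mathsf{A}$ is the entire far processed region, but this is absorbed by taking the constant in $r=\Theta(\xi\log n)$ large enough, exactly as in the paper.
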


While \thmref{CMI_line_2d} guarantees the existence of a shallow neural network representation for magnitudes $|\psi(x)|$ of states with short-range conditional correlations, it does not imply that statistically learning such a representation is easy. 
This can be illustrated with quantum phase states of the form $\ket{\psi} = \sum_{x \in \{0,1\}^n} \frac{1}{\sqrt{2^n}} e^{i g(x)} \ket{x}$, which exhibit no conditional correlations. 
However, for a random choice of phases $g(x)$, these states can be highly entangled and indistinguishable from Haar random states \cite{ji2018pseudorandom, brakerski2019pseudo}.

The proof of \thmref{CMI_line_2d} follows strategies similar to those employed in several previous works \cite{Brandao_gibbs_preparing, napp2019efficient2d, BeneWatts2024MIE, hastings2016signfree, Poulin_sdp}.
The decay of \emph{quantum} conditional mutual information is used in \cite{Brandao_gibbs_preparing} to prepare the Gibbs state of a quantum state.
Conditions similar to \defref{ApproximateConditionalIndependence} form the basis for the classical algorithms in  \cite{napp2019efficient2d, BeneWatts2024MIE} for simulating the measurement distribution of shallow random quantum circuits, as well as the ``coherent Gibbs state'' representation of non-negative wavefunctions in \cite{hastings2016signfree}. 
We emphasize that our use of conditional independence is limited to geometrically contiguous regions $\mathsf{A}, \mathsf{B}, \mathsf{C} \subset [n]$, while \cite{napp2019efficient2d, Brandao_gibbs_preparing} consider regions distributed non-locally on the lattice. 
There are examples of quantum states such as cluster states \cite{Haah2016SpuriousEntanglement} where conditional independence holds for geometrically local regions but not for arbitrary subsets of qubits.

In light of \thmref{CMI_line_2d}, we seek to identify processes in quantum systems that lead to short-range conditional correlations, as described in \defref{ApproximateConditionalIndependence}.
We discuss two such mechanisms in what follows: (1) sufficiently low amount of entanglement in the state prior to subsystem measurements, and (2) non-negative amplitudes $\psi(x)\geq 0$ corresponding to a sign-free (also known as stoquastic) state $\ket{\psi}= \sum_{x\in\{0,1\}^n} \psi(x) \ket{x}$.
Additionally, in \secref{long-rangeCMI}, we demonstrate how the choice of measurement basis can alter the range of conditional correlations.

\subsection{Sufficiently low entanglement}\label{sec:lowEntanglement}
\begin{figure}
    \centering
    \includegraphics[width=\columnwidth]{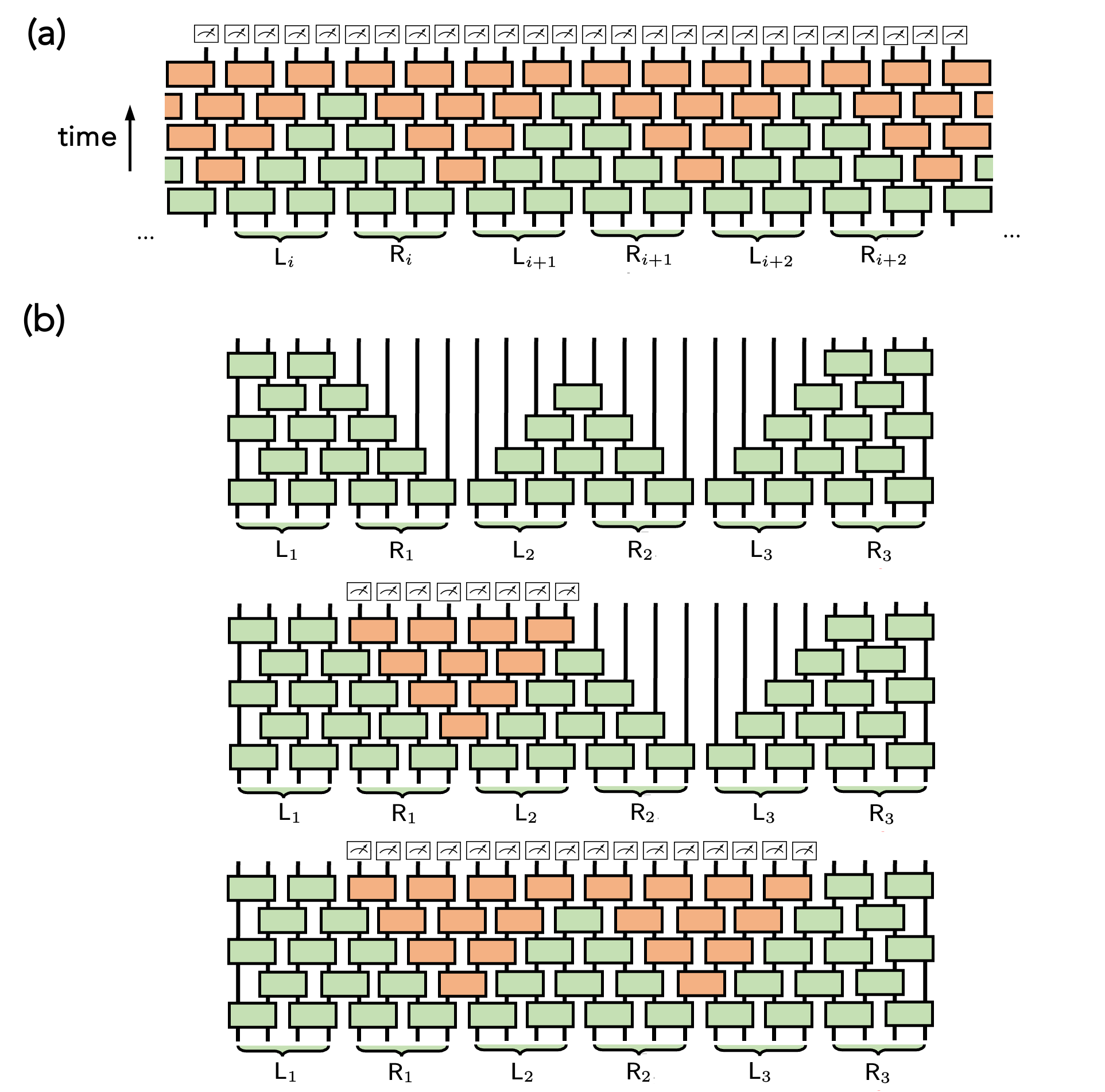}
    \caption{\textbf{Conditional correlations in shallow quantum circuits via entanglement swapping.} \textbf{(a)} The decomposition of a one-dimensional shallow circuit with the brickwork architecture into backward lightcones in green and forward lightcones in orange.
    \textbf{(b)} The conditional correlations generated between subsystems $\mathsf{L_1}$ and $\mathsf{R_3}$ due to computational basis measurements on subsystems $\mathsf{R_1}$, $\mathsf{L_2}$, $\mathsf{R_2}$, and $\mathsf{L_3}$ can be understood as two rounds of entanglement swapping measurements. 
    }

\label{fig:lightconesShallow}
\end{figure} 

The amount and structure of entanglement in a quantum state affect the conditional correlations in its measurement distribution.
This is most evident when the state $\ket{\psi}$ is generated by a low-depth quantum circuit. 
When this shallow circuit has a one-dimensional geometry, the study of conditional correlations effectively reduces to analyzing an \emph{entanglement swapping} setup.
In its simplest form, as defined in \defref{entanglementSwapping}, entanglement swapping refers to the formation of a maximally entangled state between the subsystems $\mathsf{L}_i$ and $\mathsf{R}_{i+1}$ of a four-qubit state $\ket{\mathrm{EPR}}_{\mathsf{L}_i\mathsf{R}_i} \otimes \ket{\mathrm{EPR}}_{\mathsf{L}_{i+1}\mathsf{R}_{i+1}}$ when subsystems $\mathsf{R}_i$ and $\mathsf{L}_{i+1}$ are jointly measured in the Bell basis. 
Such a Bell basis measurement is equivalent to applying an entangling gate followed by single-qubit measurements.
Now consider a one-dimensional circuit of constant depth $D$ with the brickwork architecture consisting of alternating layers of two-qubit gates as in \fig{lightconesShallow}(a). 
The effect of computational-basis measurements on the output state of this depth-$D$ circuit can be understood as consecutive applications of entanglement swapping measurements.
To illustrate this, we first partition the gates into disjoint groups: 'backward lightcones' in green and 'forward lightcones' in orange, as shown in \fig{lightconesShallow}.
We then replace $\ket{\mathrm{EPR}} \otimes \ket{\mathrm{EPR}}$ in the previous setup with the state
\begin{align}
    \ket{a_i}_{\mathsf{L}_i\mathsf{R}_i}\otimes\ket{a_{i+1}}_{\mathsf{L}_{i+1}\mathsf{R}_{i+1}} = \lightconex
\end{align} 
prepared by applying the gates in the backward lightcones on $\ket{0}_{\mathsf{L}_i \mathsf{R}_i} \otimes \ket{0}_{\mathsf{L}_{i+1}\mathsf{R}_{i+1}}$ with $\mathsf{L_i}$, $\mathsf{R}_i$, $\mathsf{L}_{i+1}$, and $\mathsf{R}_{i+1}$being subsystems of $D-1$ qubits. 
Finally, instead of Bell basis measurements, we perform entangling measurements by applying the gates in the forward lightcone on subsystems $\mathsf{R}_i$ and $\mathsf{L}_{i+1}$, followed by computational basis measurements on these subsystems.
The post-measurement state is given by
\begin{align}
    \lightconeee
\end{align} 
and can exhibit entanglement between disjoint subsystems $\mathsf{L}_i$ and $\mathsf{R}_{i+1}$ analogous to entanglement swapping with EPR pairs.
Measurements throughout the circuit induce multiple rounds of entanglement swapping, ultimately allowing distant parts of the circuit to become entangled.
This process is illustrated in \fig{lightconesShallow}(b) for a small circuit consisting of six subsystems.

As we will see in \appref{swappingprelim}, by carefully choosing the gates in the circuit, one can always implement a perfect entanglement swapping in which the initial states and joint measurements are maximally entangled. 
This leads to the formation of long-range measurement-induced entanglement between distant subsystems. 
However, for \emph{typical} circuits where the gates are not fine-tuned and instead \emph{randomly} chosen, this is no longer the case. 
For such random sufficiently-shallow circuits, we expect the states and joint measurements to only be \emph{partially} entangled. In \appref{CMIShallowCkt}, we show that in this case, each round of measurements reduces the post-measurement entanglement by a constant factor, leading to an overall exponential decay in measurement-induced entanglement and conditional correlations between distant subsystems. 
As the following theorem states, this decay ensures that conditional correlations between regions of length $\Omega(\log(n))$ are negligibly small.

\begin{theorem}[Conditional independence in random shallow 1D circuits]\label{thm:CMIdecayRandom1D}
       Consider the family of random depth-$D$ brickwork quantum circuits acting on a one-dimensional chain of qudits with local dimension $d$. 
       Fix $L \geq c_0 \cdot (4d)^{3D} \cdot \log(n \cdot D\log(d))$ for some sufficiently large constant $c_0$ and define contiguous regions $\mathsf{A}_1\cup \dots\cup\mathsf{A}_{n'}$ from left to right with $n' = n/L$ such that the size of each region is $|\mathsf{A}_i| = L \geq \Omega(\log(n))$ and $\dist(\mathsf{A}_i-\mathsf{A}_j)\geq \Omega(\log(n))$ for $|i-j|>1$.
    With probability $1 - 1/\poly(n)$ over the random choice of circuits, the measurement distribution of these random circuits satisfies the conditional~independence~property
     \begin{align}
    I(\mathsf{A}_i:\mathsf{A}_j|\mathsf{A}_{i+1},\dots, \mathsf{A}_{j-1}) \leq e^{-\Omega{\left(\dist(\mathsf{A}_i, \mathsf{A}_j)\right)}}\leq 1/\poly(n).\nonumber
     \end{align}
for any $1\leq i<j\leq n'$ with $|i-j|>1$.
\end{theorem}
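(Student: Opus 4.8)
The plan is to translate the classical conditional mutual information into a statement about \emph{measurement-induced entanglement}, and then show that this entanglement decays exponentially under the repeated entanglement-swapping dynamics induced by the random circuit. First I would invoke the lightcone decomposition of \fig{lightconesShallow} and the entanglement-swapping reduction (\defref{entanglementSwapping} and \appref{swappingprelim}) to rewrite the post-measurement state on $\mathsf{A}_i \cup \mathsf{A}_j$, conditioned on computational-basis outcomes in the intermediate regions $\mathsf{A}_{i+1},\dots,\mathsf{A}_{j-1}$, as the output of a chain of entanglement-swapping operations: the backward lightcones prepare partially entangled resource states across each lightcone boundary, while the forward lightcones followed by measurement implement partially entangling joint measurements. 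Since the circuit has depth $D$, each swapping unit acts on blocks of $O(D)$ qudits, so the intermediate region of length $\dist(\mathsf{A}_i,\mathsf{A}_j)$ factors into $\Omega(\dist(\mathsf{A}_i,\mathsf{A}_j)/D)$ successive swapping rounds.

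Next I would bound the classical CMI from above by a quantum quantity. Using the data-processing property (measuring can only decrease mutual information) together with Pinsker, it suffices to control the expectation over intermediate outcomes of the quantum mutual information $I_{\mathrm{quantum}}(\mathsf{A}_i:\mathsf{A}_j)$ of the conditional post-measurement state, which is in turn controlled by a purity- or R\'enyi-$2$-based entanglement measure across the cut separating $\mathsf{A}_i$ from $\mathsf{A}_j$. This reduces the theorem to showing that the expected post-swapping entanglement across this cut is at most $e^{-\Omega(\dist(\mathsf{A}_i,\mathsf{A}_j))}$, with the regions outside $[\mathsf{A}_i,\mathsf{A}_j]$ harmlessly traced out since, by finite depth, they only contribute bounded boundary terms.

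The heart of the argument, and the main obstacle, is a single-round contraction estimate: for Haar-random two-qudit gates, one round of \emph{partial} entanglement swapping reduces the relevant second-moment entanglement measure by a constant multiplicative factor $\gamma<1$ in expectation. I would establish this via a transfer-matrix / statistical-mechanics mapping, tracking the second moment of the conditional state over the random gates. On the replicated Hilbert space each swapping unit acts as a transfer matrix of effective local dimension $d^{O(D)}$, and the key step is to show this transfer matrix has a spectral gap bounded away from $1$, so that composing $\Omega(\dist/D)$ rounds yields an expected entanglement of $\gamma^{\Omega(\dist/D)} = e^{-\Omega(\dist)}$ since $D = O(1)$. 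The delicate part is that neither the resource states nor the joint measurements are maximally entangled, so both must be shown to be \emph{strictly} contracting on average; bounding the gap uniformly and ruling out the fine-tuned non-contracting configurations (which do occur for perfect swapping) is where the randomness of the gates is genuinely used.

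Finally I would convert the second-moment bound into the high-probability claim by Markov's inequality applied to the product of per-round contractions, followed by a union bound over all $\binom{n'}{2}$ region pairs. The hypothesis $L \ge c_0\,(4d)^{3D}\log(n\,D\log d)$ is exactly what this step requires: even the closest admissible pair ($|i-j|=2$) has an intermediate block of length $L$ and hence $\Omega(L/D) \gg (4d)^{3D}\log(n\,D\log d)$ independent swapping rounds, so the product of contractions concentrates below any fixed $1/\poly(n)$ threshold, and the per-pair failure probability is small enough to survive the union bound over the $\poly(n)$ pairs. This yields the stated decay $I(\mathsf{A}_i:\mathsf{A}_j\mid \mathsf{A}_{i+1},\dots,\mathsf{A}_{j-1}) \le e^{-\Omega(\dist(\mathsf{A}_i,\mathsf{A}_j))}$ with probability $1 - 1/\poly(n)$.
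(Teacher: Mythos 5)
Your high-level architecture matches the paper's: reduce to iterated entanglement swapping via the lightcone decomposition, bound the classical CMI by the expected post-measurement entanglement on $\mathsf{A}_i$ (the paper uses the Holevo-type bound $I(\mathsf{A}:\mathsf{C}|\mathsf{B}) \leq \E_{\bm{x}_{\mathsf{B}}} S(\rho_{\mathsf{A}|\bm{x}_{\mathsf{B}}})$ in Eq.~\eqref{eq:CMIfromES}), establish a per-round contraction using an Identity/Swap replica calculation for Haar-averaged gates, and finish with Markov plus a union bound over region pairs. Where you genuinely diverge is the monotone quantity: you track a purity/R\'enyi-$2$ entanglement measure of the conditional state and posit a uniform multiplicative contraction $\gamma<1$ coming from a transfer-matrix spectral gap. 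The paper instead tracks the top Schmidt coefficient $\norm{\rho_{\mathsf{A}}}_\infty$ and proves it increases as $\lambda \mapsto (1-\epsilon)\lambda + \epsilon$ per round (\thmref{entswap-claim}, \corref{DecayByConvexity}), where $\epsilon$ quantifies the imperfectness of the swapping measurement through lower bounds on the overlaps $|\langle \phi_{i,k}|\phi_{i,k'}\rangle|^2$ of the conditional states; the domain-wall/replica calculation (\lemref{UBCBounds}) is used only to lower-bound these overlaps by $\epsilon \gtrsim d^{-3D}$, and the entropy bound then follows from Schur concavity.

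Two concrete gaps in your route. First, the second moment of the \emph{conditional} (normalized) post-measurement state is a ratio of the form $\Tr(\tilde{\rho}_i^2)/p_i^2$, whereas the stat-mech mapping computes Haar averages of polynomials in the gates, i.e., of unnormalized quantities; the expectation of the ratio is not the ratio of expectations. The paper confronts exactly this obstacle (it remarks that handling the normalization factors $q_{i,k}$ when averaging over unitaries is the difficulty) and sidesteps it in \lemref{imperfect-expectation} by lower-bounding the unnormalized overlap $f$ and using $q_{i,k}\leq 1$, at the cost of a factor $d_0$. Your plan as stated does not address this. Second, you attribute the contraction partly to the resource states being only partially entangled; but for random brickwork circuits the backward-lightcone states are typically close to maximally entangled across the relevant cut, so a contraction factor of the $2|a_0 a_1|$ type that a purity/concurrence argument naturally yields (cf.\ \propref{qubit_bound} and \thmref{ChainQubitsCMI}) degenerates to $1$ there. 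The decay must be extracted from the imperfectness of the entangling measurements, and the contraction must hold uniformly over the intermediate conditional states, which change from round to round and are themselves random; the paper's top-eigenvalue formulation is convexity-friendly (\corref{DecayByConvexity}) precisely so that this uniformity comes for free. Until you specify how your transfer-matrix gap is bounded uniformly in the (normalized, outcome-dependent) input state, the single-round contraction estimate at the heart of your argument is not established.
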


Going beyond the 1D architecture, random shallow quantum circuits with a 2D geometry have also been conjectured and numerically shown to exhibit short-range conditional independence \cite{napp2019efficient2d, bao2021finite, BeneWatts2024MIE, Liu2022GraphState} for depths smaller than a critical constant value.
Similar behavior has been observed in random 2D tensor networks, known as projected entangled pair states (PEPS) \cite{Gonzalez2024CMIinPEPS}. 
When the bond dimension of these states is sufficiently small, the boundary entanglement of contracted PEPS follows an area law, indicating that the amplitudes of the state can be efficiently determined via tensor contraction methods.
In all these cases, however, a phase transition occurs beyond a small critical depth or bond dimension, where the conditional correlations become long-range.
We defer a more detailed discussion of this transition to \secref{long-rangeCMI}.

The properties of structured, physically relevant states may differ from those generated by random ensembles of quantum circuits and tensor networks \cite{Gonzalez2024CMIinPEPS}. To further support our findings, we next investigate the correlations in a class of physically motivated, sign-free local Hamiltonians.
Following this, in \secref{numerics}, we perform a numerical investigation of conditional correlations in several well-studied spin systems.

\begin{figure*}[t!]
    \centering
\includegraphics[width=0.8\textwidth]{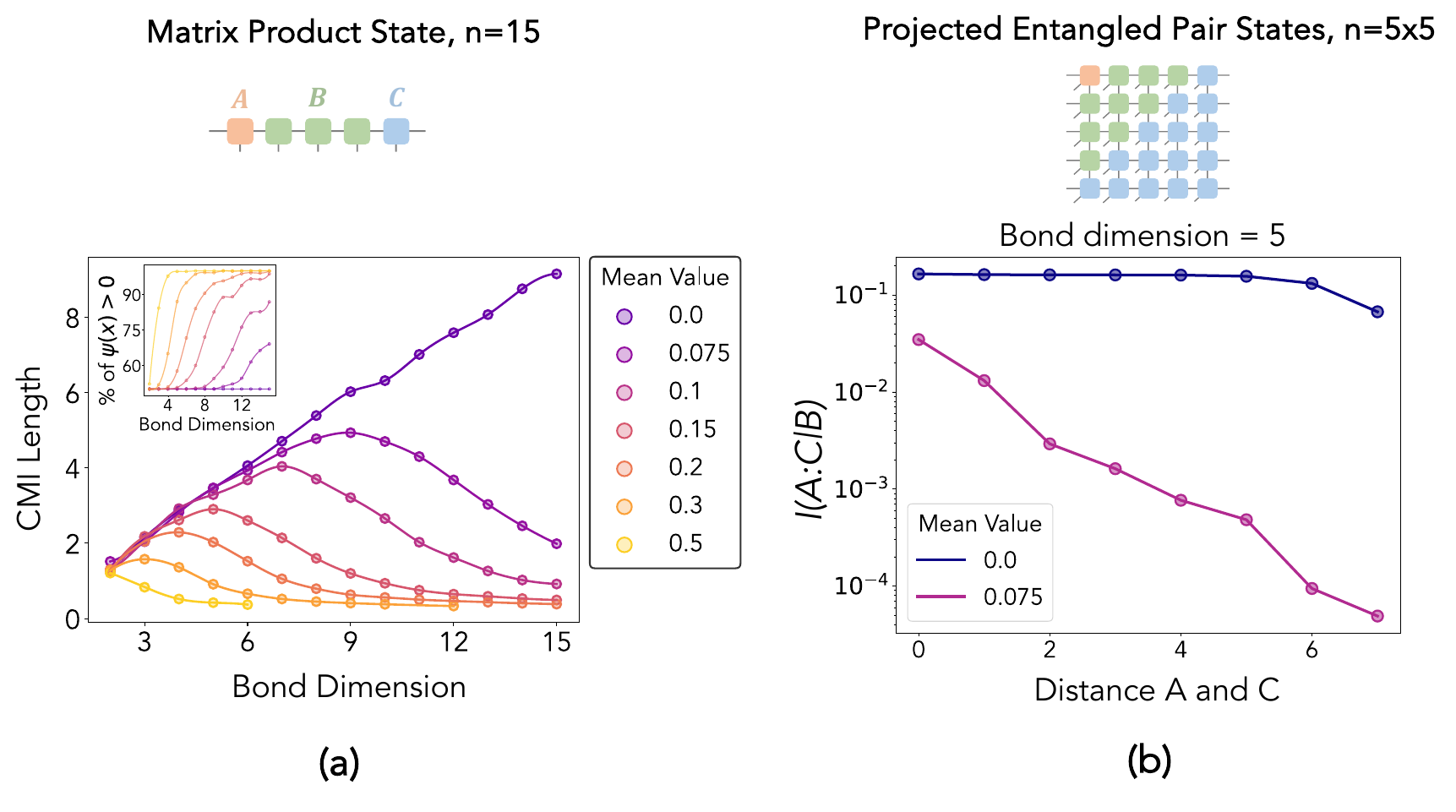}
\caption{\textbf{Conditional independence vs. sign structure in tensor networks:} \textbf{(a)} The decay of conditional mutual information (CMI)  \eqref{eq:ConditionalIndependence} in random MPS over $n=15$ qubits is demonstrated. 
The MPS tensor entries are real numbers drawn independently from the normal distribution $\mathcal{N}(\mu, 1)$ with mean $\mu$. 
For $\mu \simeq 0$, where the resulting amplitudes likely have varying signs $\psi(x)/|\psi(x)|$, increasing the bond dimension leads to longer-range conditional correlations. 
For larger $\mu\gg 0$, where more amplitudes are positive, correlations remain short-range, eventually with shorter correlation lengths at larger bond dimensions due to concentration effects.
\textbf{(b)} A similar behavior is observed in random PEPS on a $5 \times 5$ qubit grid. For a fixed bond dimension of $5$, the CMI decays exponentially when $\mu \gg 0$. However, for small $\mu \simeq 0$, the CMI does not decay, resulting in long-range conditional correlations. The MPS and PEPS plots show the average results over $200$ and $10$ samples per bond dimension, respectively.}
    \label{fig:SignFreeTN}
\end{figure*}

\begin{figure*}[t!]
    \centering
\includegraphics[width=\textwidth]{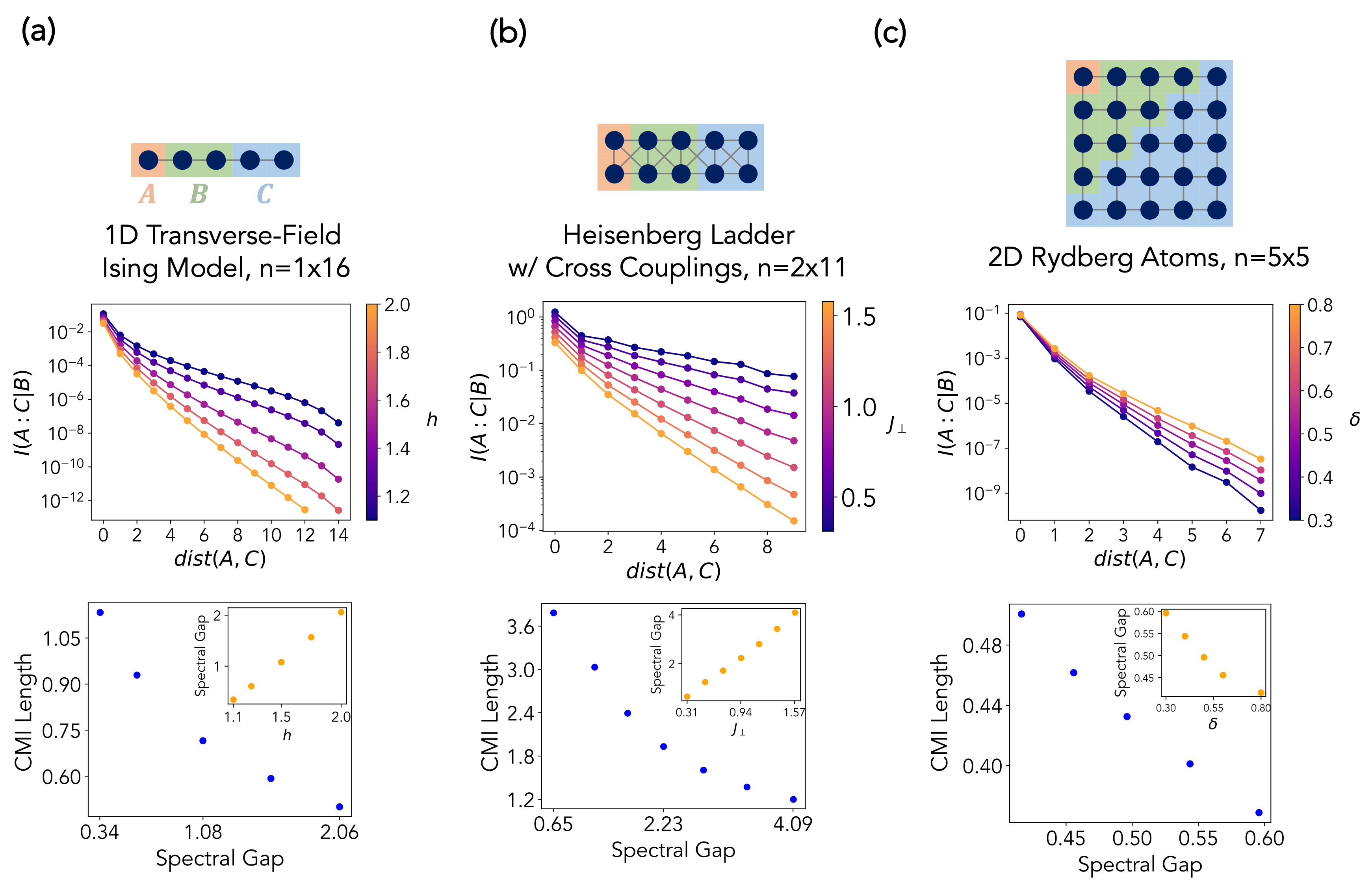}
    \caption{\textbf{Conditional independence in prototypical spin systems:} We observe the decay of conditional mutual information \eqref{eq:ConditionalIndependence} as a function of distance $\dist(\mathsf{A}, \mathsf{C})$ in the disordered phase of three gapped Hamiltonians with a unique ground state.
    \textbf{(a)} 1D transverse-field Ising model $H_{\text{Ising}} = J \sum_{i\in[n]} X_i X_{i+1} + h \sum_i Z_i$ with $J=1$ and $\mathsf{A}$ a single qubit. 
    \textbf{(b)} Heisenberg ladder with cross interactions:  $H_{\text{ladder}}= J_{\parallel} \sum_{i\in[n-1],j\in[2]} S_{i,j}\cdot S_{i+1,j} + J_{\perp}\sum_{i\in[n]} S_{i,1}\cdot S_{i,2} + J_{\times} \sum_{i\in[n-1]} (S_{i,1}\cdot S_{i+1,2} +  S_{i,2} \cdot S_{i+1,1})$ with $J_{\parallel} = 1$, $J_{\times} =  0.1$, and $\mathsf{A}$ consists of two qubits.
    \textbf{(c)} Rydberg atoms on a 2D lattice: $H_{\text{Rydberg}} = \sum_{i<j} \frac{C}{4\norm{r_i-r_j}^6}(\iden+Z_i)(\iden+Z_j)-\frac{\delta}{2} \sum_{i=1}^n (\iden+Z_i)-\frac{\Omega}{2}\sum_{i=1}^n X_i$ with $\Omega = 1.0$, $R_b = 1.2$, $C = \Omega \cdot R_b^6$, detuning parameter $\delta$, and $\mathsf{A}$ being a single qubit.}
\label{fig:CMIDecay}
\end{figure*}

\subsection{Non-negative amplitudes}\label{sec:nonNegativAmp}
Another property of a many-body quantum state that can affect conditional independence in its measurement distribution is the sign structure of its amplitudes.
A wide class of commonly studied local Hamiltonians enjoy the feature that their off-diagonal terms are non-positive. That is, in a local basis $\{\ket{x}: x\in \{0,1\}^n\}$, we have
\begin{align}
    \bra{x}H\ket{y}\leq 0\quad  \text{for}\quad  x\neq y.
\end{align}
Such Hamiltonians are called sign-free or ``stoquastic''.
The (unique) ground state of a sign-free Hamiltonian is a quantum state $\ket{\psi}=\sum_{x} \psi(x) \ket{x}$ with non-negative amplitudes $\psi(x)\geq 0$.

In the absence of topological order, the non-negativity of the amplitudes can constrain the amount of post-measurement entanglement that can be formed in the state.
This, for instance, may be seen in the entanglement swapping setup introduced earlier. 
Although the initial state $\ket{\mathrm{EPR}}\otimes \ket{\mathrm{EPR}}$ is non-negative, the Bell basis measurements performed subsequently are not, as they involve  basis states $(\ket{00}-\ket{11})/\sqrt{2}$ and $(\ket{01} - \ket{10})/\sqrt{2}$. 
In fact, the only sign-free measurement basis in this setting is the product basis $\{\ket{00},\ket{01},\ket{10}, \ket{11}\}$ which does not produce any entanglement. 

More generally, it has been shown \cite{hastings2016signfree} that when an $n$-qubit state $\ket{\psi}$ prepared by a constant depth 1D quantum circuit 
is non-negative, then the conditional correlations between two subsystems $\mathsf{A}$ and $\mathsf{C}$ upon measuring the rest of the system $\mathsf{B} = [n] \setminus{\mathsf{A}} \cup \mathsf{C}$ decay super-polynomially with the distance. 

Non-negative stabilizer states (also known as CSS codes) are another family of states whose conditional correlations have been studied.
Depending on their geometry, these states can host long-range topological order (e.g., the toric code). 
The limitations of neural networks in efficiently learning these states has been explored before in \cite{huang2020predicting}.
However, when such a state is the \emph{unique} ground state of a \emph{local} (stabilizer) Hamiltonian, it exhibits short-range conditional correlations \cite{lin2022probingsign}. 
It is also known that in the ground state of a stoquastic \emph{frustration-free} Hamiltonian, the ratio between amplitudes $\frac{\braket{y}{\psi}}{\braket{x}{\psi}}$ can be efficiently computed when $\braket{x}{\psi} > 0$ and $\bra{y}H\ket{x} < 0$ \cite{Bravyi2009FFStoquastic}.
This property has been further extended to the class of `magic ratio' Hamiltonians in~\cite{Bravyi2022MCMC}.

In addition to these cases, we also numerically investigate the link between the sign structure of the amplitudes and conditional correlations. 
In \fig{SignFreeTN}, we consider two families of tensor network states: matrix product states (MPS) and projected entangled pair states (PEPS). 
The amplitudes of these states are obtained by contracting the network of tensors. 
The sign structure in such tensor networks and its connection to efficient contraction algorithms has been recently studied in \cite{chen2024sign, jiang2024positive}.
Here, we investigate this connection from the viewpoint of conditional correlations.
The (real) entries in each tensor are shifted Gaussian variables $\mathcal{N(\mu, \sigma)}$ with mean $\mu \geq 0$ and variance $\sigma^2 = 1$.
By tuning the mean parameter $\mu$, we can change the sign structure of the corresponding amplitudes $\psi(x)$.
When $\mu$ becomes comparable to $\sigma$, the distribution of tensor entries shifts sufficiently into the positive region, resulting in amplitudes that are overwhelmingly positive,  
while choosing $\mu \simeq 0$ generates states with varying signs.
For a fixed mean value $\mu > 0$, the percentage of strictly positive amplitudes $\psi(x) > 0$ increases with the bond dimension $r$ due to the concentration effects. 
When $\mu \simeq 0$, increasing the bond dimension---thereby increasing entanglement---causes the conditional correlations to decay more slowly. 
This effect is particularly pronounced in random PEPS, where, beyond a bond dimension of $2$, the CMI does not decay with high probability when $\mu \simeq 0$.
Conversely, in the $\mu \gg 0$ regime, increasing the bond dimension eventually leads to shorter-range conditional correlations, as the majority of amplitudes become positive.
These results are illustrated in \fig{SignFreeTN}.
As a result of \thmref{CMI_line_2d} such states with short-range conditional correlations admit an efficient neural network representation.

Next, we show an intriguing relation between the conditional independence and the entanglement entropy area law for sign-free quantum states.
We note that the conditional independence studied so far is an information-theoretic property of the probability distribution obtained from measuring the state. 
The area law, however, is a bound satisfied by the von Neumann entropy of the reduced states. 
The sign-free assumption allows us to relate these classical and quantum information theoretic~notions. 
\begin{theorem}[Approximate conditional independence implies area law]\label{thm:signfreeArealaw}
    Consider an $n$-qudit sign-free state $\ket{\psi}\in \bbC^d\otimes \cdots \otimes \bbC^d$ on a lattice. Suppose the measurement distribution of this state satisfies the approximate conditional independence property \eqref{eq:ConditionalIndependence} for a given tripartition  $\mathsf{A} \cup \mathsf{B} \cup \mathsf{C} = [n]$ of the lattice such that $\mathsf{A} \cap \mathsf{C} = \emptyset$.
    The von Neumann entropy of the reduced state $\rho_{\mathsf{A}}$ in region $\mathsf{A}$ is upper bounded by $$S(\rho_{\mathsf{A}}) = \mathcal{O}(|\mathsf{B}|),$$
    which (up to $\log(|\mathsf{A}|)$ factors) implies an area law $S(\rho_{\mathsf{A}})= \mathcal{O}(|\partial \mathsf{A}|\cdot \log(|\partial\mathsf{A}|))$. 
\end{theorem}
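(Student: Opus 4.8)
My plan is to prove the bound first under \emph{exact} conditional independence, where sign-freeness forces a genuinely low-rank reduced state, and then reduce the approximate case to it via a fidelity estimate followed by entropy continuity. For the exact case, note that sign-freeness gives $\psi(x)=\sqrt{p(x)}$, so writing $x=(x_\mathsf{A},x_\mathsf{B},x_\mathsf{C})$ and assuming $I(\mathsf{A}:\mathsf{C}|\mathsf{B})=0$ we have $p(x)=p_\mathsf{B}(x_\mathsf{B})\,p_{\mathsf{A}|\mathsf{B}}(x_\mathsf{A}|x_\mathsf{B})\,p_{\mathsf{C}|\mathsf{B}}(x_\mathsf{C}|x_\mathsf{B})$. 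Taking square roots makes the amplitude factorize, so
\[
|\psi\rangle=\sum_{x_\mathsf{B}}\sqrt{p_\mathsf{B}(x_\mathsf{B})}\;|\phi_\mathsf{A}(x_\mathsf{B})\rangle\otimes|x_\mathsf{B}\rangle\otimes|\phi_\mathsf{C}(x_\mathsf{B})\rangle ,
\]
where $|\phi_\mathsf{A}(x_\mathsf{B})\rangle=\sum_{x_\mathsf{A}}\sqrt{p_{\mathsf{A}|\mathsf{B}}(x_\mathsf{A}|x_\mathsf{B})}\,|x_\mathsf{A}\rangle$ (and similarly for $\mathsf{C}$) are normalized. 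Because the $|x_\mathsf{B}\rangle$ are orthonormal in the computational basis, tracing out $\mathsf{B}$ and $\mathsf{C}$ yields $\rho_\mathsf{A}=\sum_{x_\mathsf{B}}p_\mathsf{B}(x_\mathsf{B})\,\ketbra{\phi_\mathsf{A}(x_\mathsf{B})}{\phi_\mathsf{A}(x_\mathsf{B})}$, a mixture of at most $d^{|\mathsf{B}|}$ states. Hence $\mathrm{rank}(\rho_\mathsf{A})\le d^{|\mathsf{B}|}$ and $S(\rho_\mathsf{A})\le|\mathsf{B}|\log d=\mathcal{O}(|\mathsf{B}|)$. The sign-free hypothesis is essential: only for non-negative amplitudes does conditional independence of $p$ pass to a product structure of $\psi$, so this step has no analogue for states with a nontrivial sign.

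For the approximate case I would introduce the conditionally-independent surrogate $\tilde p(x)=p_\mathsf{B}(x_\mathsf{B})\,p_{\mathsf{A}|\mathsf{B}}(x_\mathsf{A}|x_\mathsf{B})\,p_{\mathsf{C}|\mathsf{B}}(x_\mathsf{C}|x_\mathsf{B})$ and the sign-free state $|\tilde\psi\rangle=\sum_x\sqrt{\tilde p(x)}\,|x\rangle$, which by the exact case obeys $S(\tilde\rho_\mathsf{A})\le|\mathsf{B}|\log d$. The key bridge is the identity $D_{\text{KL}}(p\parallel\tilde p)=I(\mathsf{A}:\mathsf{C}|\mathsf{B})$, combined with the Hellinger–KL bound $1-\sum_x\sqrt{p(x)\tilde p(x)}\le\tfrac12 D_{\text{KL}}(p\parallel\tilde p)$. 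Since both states are sign-free, their overlap is exactly the classical fidelity, so $\langle\psi|\tilde\psi\rangle=\sum_x\sqrt{p(x)\tilde p(x)}\ge 1-\tfrac12 I(\mathsf{A}:\mathsf{C}|\mathsf{B})$. For pure states this converts to a trace distance $\tfrac12\big\||\psi\rangle\langle\psi|-|\tilde\psi\rangle\langle\tilde\psi|\big\|_1=\sqrt{1-\langle\psi|\tilde\psi\rangle^2}\le\sqrt{I(\mathsf{A}:\mathsf{C}|\mathsf{B})}$, and by monotonicity of trace distance under the partial trace over $\mathsf{B}\mathsf{C}$ we get $\tfrac12\|\rho_\mathsf{A}-\tilde\rho_\mathsf{A}\|_1\le\sqrt{I(\mathsf{A}:\mathsf{C}|\mathsf{B})}=:T$.

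To transfer the entropy bound I would invoke the Fannes–Audenaert continuity inequality on the $d^{|\mathsf{A}|}$-dimensional system $\mathsf{A}$, giving $S(\rho_\mathsf{A})\le S(\tilde\rho_\mathsf{A})+T\,|\mathsf{A}|\log d+h(T)\le|\mathsf{B}|\log d+\sqrt{I}\,|\mathsf{A}|\log d+h(\sqrt I)$, with $h$ the binary entropy. The hard part is precisely this correction: continuity necessarily introduces the ambient dimension $d^{|\mathsf{A}|}$ of region $\mathsf{A}$, which is exponentially larger than the target $\mathcal{O}(|\mathsf{B}|)$, so a naive bound is useless. What rescues it is the exponential CMI decay of \defref{ApproximateConditionalIndependence}. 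Taking $\mathsf{B}$ to be a shell of thickness $t=\dist(\mathsf{A},\mathsf{C})$ around $\mathsf{A}$, the assumption $I\le\alpha(\mathsf{A},\mathsf{C})e^{-t/\xi}$ with $\alpha\le\poly(|\mathsf{A}|,|\mathsf{C}|)$ gives $\sqrt{I}\,|\mathsf{A}|\log d\le\poly(n)\,e^{-t/2\xi}$, so choosing $t=\Theta(\xi\log n)$ drives this correction to $o(1)$ and yields $S(\rho_\mathsf{A})\le|\mathsf{B}|\log d+o(1)=\mathcal{O}(|\mathsf{B}|)$. Finally, for such a shell one has $|\mathsf{B}|=\mathcal{O}(|\partial\mathsf{A}|\cdot t)$, and optimizing the thickness against the polynomial prefactor (which is governed by $|\mathsf{A}|\le\poly(|\partial\mathsf{A}|)$ for well-shaped regions) gives $t=\Theta(\xi\log|\mathsf{A}|)$ and hence the claimed area law $S(\rho_\mathsf{A})=\mathcal{O}(|\partial\mathsf{A}|\log|\partial\mathsf{A}|)$ up to the logarithmic factor flagged in the statement.
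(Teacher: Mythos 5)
Your proposal is correct and follows essentially the same route as the paper's proof: construct the conditionally-independent sign-free surrogate state, lower-bound its overlap with $\ket{\psi}$ in terms of $I(\mathsf{A}:\mathsf{C}|\mathsf{B})$, transfer the entropy bound via trace-distance monotonicity and Fannes continuity, and bound the surrogate's entropy on $\mathsf{A}$ by $|\mathsf{B}|\log d$ (you via the rank of the $d^{|\mathsf{B}|}$-term mixture, the paper via the mixture-entropy inequality $S(\sum_i p_i\rho_i)\le H(p)+\sum_i p_i S(\rho_i)$ — equivalent here). Your overlap estimate via the Jensen/Hellinger route, $\braket{\psi}{\tilde\psi}\ge 1-\tfrac12 I$, is in fact quadratically tighter than the paper's Pinsker-based $1-\sqrt{I/2}$, but this does not change the argument.
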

The proof of this theorem is stated in \appref{areaLaw}.
The additional $\log(|\partial \mathsf{A}|)$ factor in the area law arises because controlling the error in applying the conditional independence property requires the distance between regions $\mathsf{A}$ and $\mathsf{C}$ (and thus the width of region $\mathsf{B}$) to satisfy $\operatorname{dist}(\mathsf{A}, \mathsf{C}) = \mathcal{O}(\log(|\mathsf{A}|))$. As a result, the size of region $\mathsf{B}$ is bounded by $|\mathsf{B}| = \mathcal{O}(|\partial \mathsf{A}| \cdot \log(|\mathsf{A}|))$.
The area law scaling of entanglement entropy in gapped ground states has been associated with the existence of efficient tensor network representations for these states \cite{AradRigorousRG, Arad2013AreaLaw, anshu2019AreaLaw2D, AradFrustrationFreeAreaLaw}. Consequently, \thmref{signfreeArealaw} establishes a link between accurate representations of quantum states using neural networks and tensor networks.

\subsection{Numerical simulations for prototypical spin systems}\label{sec:numerics}
In \fig{CMIDecay}, we examine the decay rate of conditional mutual information \eqref{eq:ConditionalIndependence} for some typical local Hamiltonians of interest: 1D transverse-field Ising model (TFIM), Heisenberg ladder Hamiltonian, and Rydberg atoms on a 2D lattice. 
For each case, a choice of regions $\mathsf{A}$, $\mathsf{B}$, and $\mathsf{C}$ is depicted in \fig{CMIDecay}, with region $B$ made larger in a natural way to change $\dist(\mathsf{A}, \mathsf{C})$.
Among these, the 1D TFIM and the Rydberg Hamiltonian are sign-free. 
All three Hamiltonians are gapped, have a unique ground state, and are studied in their disordered phase.
By tuning a parameter of each Hamiltonian, we vary the spectral gap and observe that the conditional mutual information decays exponentially with distance. However, the decay rate slows as the spectral gap decreases and the system approaches the phase transition point.

\section{Long-range conditional correlations}\label{sec:long-rangeCMI}

\subsection{Rotated cluster state}\label{sec:clusterStateCMI}
In this section, we consider quantum states whose measurement distribution exhibits long-range conditional correlations. 
The presence of these correlations is not necessarily an obstacle to the existence of efficient neural representations for the state. 
However, we show that such long-range correlations can affect the computational performance of conventional ML algorithms that aim to variationally find a neural representation of the state. 
To see this, we consider an $n$-qubit state on a one-dimensional chain, which is the ground state of
    \begin{align}
    H_{\mathrm{ES}} = &- \sum _{k=2}^{n-2} X_{k-1}Z_k X_{k+1}\nonumber\\
    &- Z_1 X_2 - X_{n-1} X_n -  X_{n-2} Z_{n-1} Z_n.\label{eq:rotated_clusterH}
    \end{align}
Up to Hadamard gates applied on all but the $n$-th qubit, this Hamiltonian is equivalent to the Hamiltonian for the 1D cluster state. The ground state of $H_{\mathrm{ES}}$ can be prepared using a depth-2 quantum circuit shown in \fig{ClusterState} that corresponds to the entanglement swapping (ES) setup in \fig{lightconesShallow}. Specifically, a series of EPR pairs are first created and then transformed into the Bell basis. 
Consequently, the measurement distribution of this state in the computational basis exhibits long-range conditional correlations. 
However, these correlations change when a local basis transformation, given by a $y$-rotation on each qubit, is applied. 
As stated in the following theorem, the resulting state---referred to as the \textit{rotated cluster state}---exhibits conditional correlations that decay at a rate determined by the rotation angle.

\begin{theorem}[Decay of CMI in rotated cluster state]\label{thm:decayCMIClusterState}

Consider the ground state of the Hamiltonian in \eqref{eq:rotated_clusterH} for an even $n$.
Apply a $y$-rotation $R_y(\theta)=\cos(\theta)\cdot \iden+ i \sin(\theta) \cdot Y$ for $\theta \in [0,\pi/2]$
to each qubit and proceed to measure all qubits in the standard $Z$ basis. 
Let $\mathsf{A}$, $\mathsf{C}$, and $\mathsf{B}$ represent the first qubit, the last qubit, and the remaining qubits, respectively.
The conditional mutual information $I(\mathsf{A}:\mathsf{C}|\mathsf{B})$ of the resulting measurement distribution satisfies
\begin{align}
I(\mathsf{A}:\mathsf{C}|\mathsf{B}) \leq \cos(\theta)^{n-2}. \label{eq:CMIDecayCluster_intro}
\end{align}
\end{theorem}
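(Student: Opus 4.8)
The plan is to reduce the conditional mutual information to a single two-qubit quantity, exploiting that $\mathsf{A}$ and $\mathsf{C}$ are the two endpoints of the chain while $\mathsf{B}$ is the entire bulk. By the definition in \eqref{eq:defCMI}, $I(\mathsf{A}:\mathsf{C}|\mathsf{B}) = \E_{x_{\mathsf{B}}} I(\mathsf{A}:\mathsf{C})_{x_{\mathsf{B}}}$ is the expected mutual information of the two endpoint bits given a fixed bulk outcome $x_{\mathsf{B}}$. Because the global rotated state is pure, conditioning on $x_{\mathsf{B}}$ leaves the endpoints in a pure two-qubit state $\ket{\chi_{x_{\mathsf{B}}}} \propto \bra{x_{\mathsf{B}}}_{\mathsf{B}}\, R_y(\theta)^{\otimes(n-2)}\ket{\psi_0}$, where $\ket{\psi_0}$ is the ground state of \eqref{eq:rotated_clusterH}; the endpoint rotations $R_y(\theta)_{\mathsf{A}} \otimes R_y(\theta)_{\mathsf{C}}$ are local unitaries and do not alter its entanglement. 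The inner mutual information $I(\mathsf{A}:\mathsf{C})_{x_{\mathsf{B}}}$ is that of the outcomes obtained by measuring $\ket{\chi_{x_{\mathsf{B}}}}$, and it vanishes exactly when this state is a product. The theorem thus reduces to showing that, averaged over $x_{\mathsf{B}}$, the endpoint state is so close to a product state that the induced mutual information is at most $\cos(\theta)^{n-2}$.

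To quantify the endpoint entanglement I would use that $\ket{\psi_0}$, being a cluster (graph) state up to the boundary terms in \eqref{eq:rotated_clusterH}, admits an exact bond-dimension-two matrix product state. Absorbing $R_y(\theta)$ into each tensor and projecting each bulk site $k$ onto $\ket{x_k}$ produces a single $2\times 2$ transfer matrix $T_{x_k}(\theta)$ acting on the virtual bond, so that the coefficient matrix of $\ket{\chi_{x_{\mathsf{B}}}}$ in the endpoint basis is $\Phi^{x_{\mathsf{B}}} = L\,\big(\prod_{k=2}^{n-1} T_{x_k}(\theta)\big)\,R$ for fixed boundary factors $L$ and $R$; its two singular values are the Schmidt coefficients of $\ket{\chi_{x_{\mathsf{B}}}}$. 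The heart of the argument is to compute $T_{x_k}(\theta)$ explicitly and show that each bulk factor suppresses the correlation-carrying (subdominant) channel by $\cos(\theta)$ relative to the dominant one. This makes the entanglement-swapping mechanism of \fig{lightconesShallow} quantitative: at $\theta = 0$ the bulk is effectively measured in the chain-propagating ($X$-type) basis, entanglement is swapped perfectly across all $n-2$ sites, and the endpoints become maximally entangled; turning on $\theta$ tilts each measurement towards the $Z$ basis, which disconnects the chain, so the surviving ``teleported'' correlation is multiplicative in $\cos(\theta)$ over the bulk. The outcome is a bound of $\cos(\theta)^{n-2}$ on the smaller Schmidt weight of $\ket{\chi_{x_{\mathsf{B}}}}$, uniform in $x_{\mathsf{B}}$.

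The final step converts this near-product structure into the stated bound on the classical mutual information and takes the expectation over $x_{\mathsf{B}} \sim p_{\mathsf{B}}$. I expect this conversion, together with the exact bookkeeping of the transfer matrix, to be the main obstacle: one must bound the mutual information of measuring a two-qubit pure state by a function of its smaller Schmidt weight that is sharp enough to land on the clean exponent $\cos(\theta)^{n-2}$ rather than a generic $e^{-\Omega(n)}$ decay, and confirm that averaging over bulk outcomes does not spoil the per-site $\cos(\theta)$ factor. A useful consistency check on the constants is furnished by the two endpoints of the parameter range: at $\theta = 0$ the bound reads $I \le 1$, matching the maximal correlation of a perfectly swapped Bell pair, while at $\theta = \pi/2$ it gives $I = 0$, matching the product (disconnected) limit, in agreement with \eqref{eq:CMIDecayCluster_intro}.
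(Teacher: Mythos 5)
Your high-level picture matches the paper's: condition on the bulk outcome $x_{\mathsf{B}}$ to get a pure two-qubit endpoint state, show that each bulk site multiplicatively suppresses the endpoint entanglement by a $\cos\theta$-type factor (the imperfect entanglement-swapping mechanism), and your consistency checks at $\theta=0$ and $\theta=\pi/2$ are the right sanity tests. However, the two steps you yourself flag as "the main obstacle" are precisely where the proof lives, and your proposal does not supply them. First, the conversion from "the conditional endpoint state is nearly product" to "the classical mutual information of its measurement outcomes is small": the paper does this with a Holevo-type bound $I(\mathsf{A}:\mathsf{C}|\mathsf{B}) \leq \E_{\bm{x}_{\mathsf{B}}} S(\rho_{\mathsf{A}|\bm{x}_{\mathsf{B}}})$ (Eq.~\eqref{eq:CMIfromES}), followed by the entropy--concurrence inequality $S(\rho)\leq C(\rho)$ for qubits (\thmref{concurrence}). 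Without these (or equivalents), there is no route from a Schmidt-coefficient bound to the clean $\cos(\theta)^{n-2}$ on a \emph{classical} mutual information.

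Second, your transfer-matrix bookkeeping via singular values is not quite the right invariant, and your claim that the bound on the smaller Schmidt weight is "uniform in $x_{\mathsf{B}}$" is too strong: for low-probability outcomes the normalized conditional state can be substantially more entangled, and only the \emph{average} over outcomes obeys the per-site decay. The paper sidesteps the normalization $1/p(x_{\mathsf{B}})$ by working with the concurrence of the \emph{unnormalized} post-measurement state, which satisfies $C(\tilde{\rho}^i_{\mathsf{A}}) = p_i\, C(\rho^i_{\mathsf{A}})$ (\lemref{qudit_expression}), so that $\E_{\bm{i}} C(\bm{\rho}_{\mathsf{A}}) = \sum_i C(\tilde{\rho}^i_{\mathsf{A}})$ is exactly computable and contracts by $\sum_{i=1}^4 |u^i_{00}u^i_{11}-u^i_{10}u^i_{01}| \cdot 2|a_0a_1|\cdot 2|b_0b_1| = \cos(\theta)^2$ per two-qubit measurement (\propref{qubit_bound} and \corref{decayCMIClusterState}); iterating over the $n/2-1$ swaps gives $\cos(\theta)^{n-2}$. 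Since the concurrence of a pure two-qubit state is $2|ad-bc|$, i.e.\ twice the determinant of the coefficient matrix, it is in fact the natural multiplicative quantity for your transfer-matrix picture --- but you would need to make that identification, prove the entropy and Holevo bounds, and replace the per-outcome claim with an averaged one for the argument to close.
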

The proof of this theorem is provided in \secref{CMIChainQubits}.
While \eqref{eq:CMIDecayCluster_intro} provides a provable upper bound on conditional correlations, \fig{ClusterState}(b) presents numerically simulated values of $I(\mathsf{A}:\mathsf{C}|\mathsf{B})$ for varying angles $\theta$, confirming that the conditional correlation length increases as~$\theta \rightarrow 0$.

\begin{figure*}[t!]
    \centering
\includegraphics[width=\textwidth]{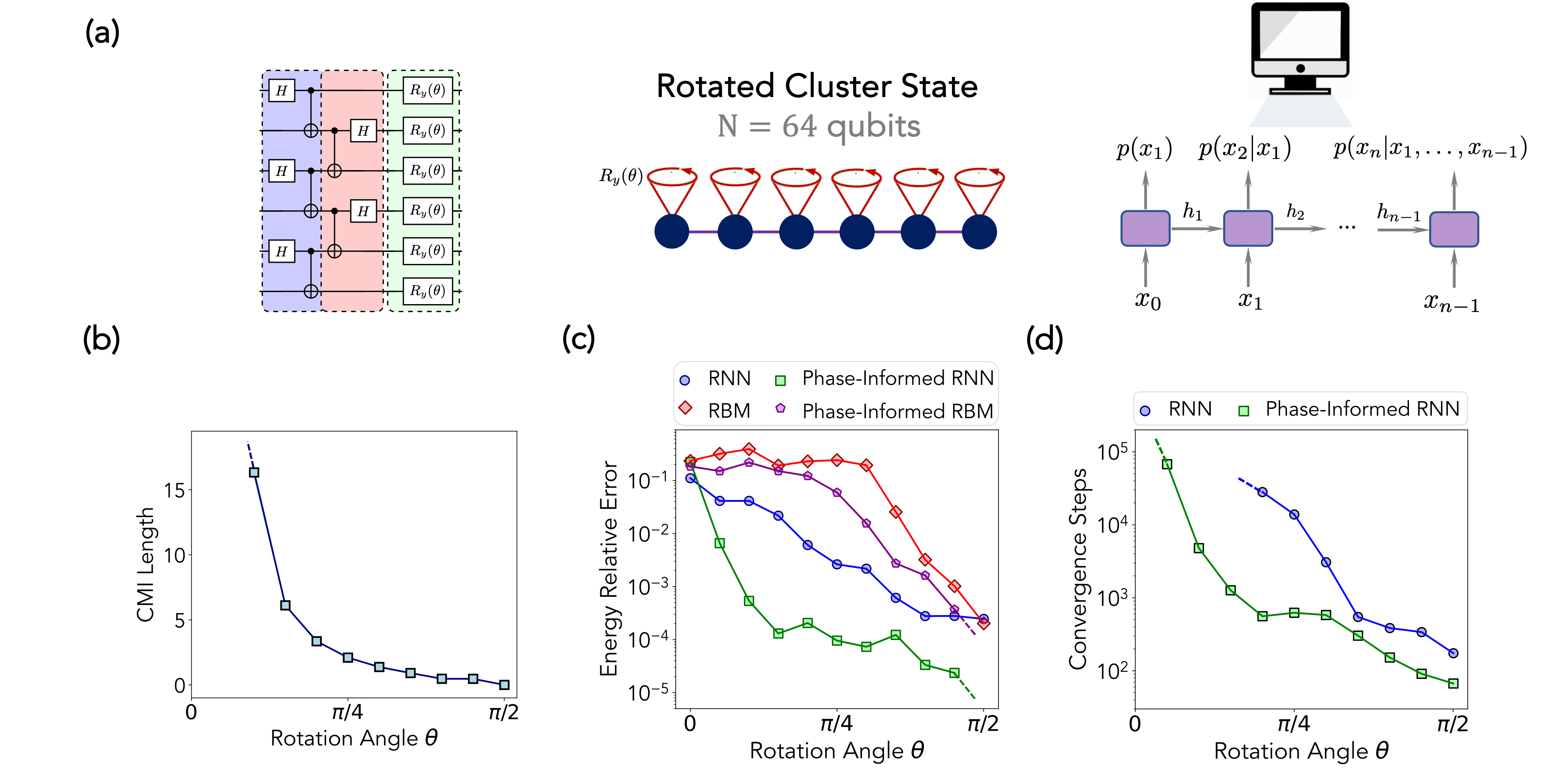}
\caption{\textbf{Effect of conditional correlations on Variational Monte Carlo:} \textbf{(a)} The rotated 1D cluster state is the ground state of the Hamiltonian $R_y^\dagger(\theta) H_{\mathrm{ES}} R_y(\theta)$, where $H_{\mathrm{ES}}$ is defined in Eq.~\eqref{eq:rotated_clusterH} and each qubit is rotated around the $y$-axis by an angle $\theta$. 
The performance of the variational Monte Carlo (VMC) with neural quantum states in estimating the energy of this rotated state with $64$ qubits is analyzed as a function of $\theta$. 
The true ground state energy is $E_0 = -64$.
\textbf{(b)} As $\theta$ increases from $0$ to $\pi/2$, the CMI length $\xi$ \eqref{eq:ConditionalIndependence} monotonically decreases, indicating that the state becomes conditionally independent.
For clarity, the data points corresponding to $\theta \in \{0, \pi/20\}$, where $\xi$ diverges, are not shown.
\textbf{(c)} The relative error in estimating the energy as a function of $\theta$ is shown for the RNN and RBM architectures, with and without complex phases in the amplitudes given as input to optimization. Overall, we observe a degradation in the efficacy of VMC as $\theta$ approaches $0$, where long-range conditional correlations are present.
The negligible error for $\theta = 0$ in phase-informed cases is not shown.
Focusing only on qualitative behavior, we did not optimize the relative performance between the RBM and RNN architectures. 
\textbf{(d)} The number of training steps required for the RNN ansatz to converge to mean energy $E = -63$ is shown. The omitted data points indicate non-convergence within $100,000$ training steps.
The number of training steps for the RBM architecture is $10,000$. 
}
\label{fig:ClusterState}
\end{figure*}

The rotated cluster state admits an efficient neural network representation.
This follows from the fact that this state has a matrix product representation with bond dimension $r=2$, which can be directly mapped to a recurrent neural network. 
An explicit construction of such a mapping is given in \cite{Wu2023RNNfromMPS} and a general correspondence between tensor network states and neural quantum states can be found in~\cite{NNrepresentationTNSharir}. 
Starting from a matrix product state with bond dimension $r$, the resulting RNN consists of hidden memories $h_k \in \mathbb{C}^r$ and cells that output conditional probabilities and complex phases by computing a weighted norm and affine map of $h_k$.

Despite the existence of an efficient RNN representation for the rotated 1D cluster state, we show that variationally finding this representation for \emph{small} rotation angles with long-range conditional correlations could be computationally difficult.
We employ Variational Monte Carlo (VMC) algorithm reviewed in \secref{VMC} for this task. 
This algorithm solves the optimization problem $\min_w \bra{\psi_w}H\ket{\psi_w}$ over a set of trial states, which in this case is a family of neural quantum states $\ket{\psi}_{w}$, each with a set of weights denoted by $w$.
Granted we can query the amplitudes $\braket{x}{\psi_w}$ and sample from $\abs{\braket{x}{\psi_w}}^2$, the energy $\bra{\psi_w}H\ket{\psi_w}$ can be efficiently evaluated. 
Additionally, if the neural quantum states are expressive enough with a favorable optimization landscape, the estimated energy $\min_w \bra{\psi_w}H\ket{\psi_w}$ can closely approximate the ground state energy of the local Hamiltonian~$H$.

We apply VMC to the rotated cluster state Hamiltonian \eqref{eq:rotated_clusterH} over $64$ qubits with two choices of neural networks as our variational ansatz: Recurrent Neural Networks (RNN) \cite{Hibat2020RNN} and Restricted Boltzmann Machines (RBM) \cite{carleo2017solving}.
Our numerical findings are summarized in \fig{ClusterState} and described in more detail in Appendix~\ref{sec:detailsNumerics}.
We observe that, in general and for both architectures, as $\theta\to 0$ and longer-range conditional correlations form, the performance of VMC degrades almost monotonically. 
This suggests that conventional optimization schemes may struggle to find the neural representation of simple classes of states with long-range conditional correlations.
Similar difficulties in learning long-term dependencies in recurrent networks using gradient descent are well-known in machine~learning~\cite{bengio1993problem}.

The longer-range correlations for smaller values of $\theta$ arise due to changes in both the magnitudes and phases of the probability amplitudes $\psi(x)$. 
As discussed in \secref{nonNegativAmp}, the emergence of a complex phase structure is likely necessary for the development of these longer-range correlations.
To what extent does the performance degradation of VMC, observed in \fig{ClusterState} for the rotated cluster state, result from changes in the complex phases $\psi(x)/|\psi(x)|$ versus the magnitudes $|\psi(x)|$? 
Indeed, complex phases are known to pose significant challenges for training neural quantum states with VMC or in quantum tomography applications \cite{parkKastoryano2020expressive, Carrasquilla2019Gutzwiller,Castelnovo2020signproblem, Bukov2021nonStoquastic}.

To address this, we first exactly determine the phases $\psi(x)/|\psi(x)|$ in the rotated cluster state. 
We then provide these phases as input to VMC, training the neural network solely to learn the magnitudes $|\psi(x)|$ of the state. 
The results shown in \fig{ClusterState}(c)-(d) indicate that the performance of VMC improves and achieves a more accurate approximation of the ground state energy except in the vicinity of $\theta = 0$, although the relative error and training steps required still increase as $\theta$ decreases.
In the vicinity of $\theta = 0$, VMC does not converge to the ground state even when the phases are known, underscoring the challenge of training the magnitudes $|\psi(x)|$ in regimes with long-range conditional correlations.
While having the phases improves VMC's performance, determining these phases or finding a local change of basis to remove them in general systems is computationally challenging~\cite{hangleiter2020easing}.

\subsection{Higher dimensions}\label{sec:HigherDimensions}
The 1D rotated cluster state serves as an example of a state with long-range conditional correlations, where the conditional probabilities $p(x_k \mid x_1, \dots, x_{k-1})$ are simple functions of their inputs and can be efficiently represented using neural networks.
Such correlations have also been observed in higher-dimensional variants of this state, such as 2D graph states \cite{Liu2022GraphState}.

Beyond these structured states, how prevalent are long-range conditional correlations in physically relevant quantum states, and do their conditional probabilities $p(x_k|x_1,\dots, x_{k-1})$ easily admit efficient neural representation?
Here we present evidence from a series of recent works that suggest certain random ensembles of states possess long-range conditional correlations with conditional probabilities $p(x_k|x_1,\dots, x_{k-1})$ 
which are complex functions not easily representable with neural networks.

Consider states generated by random ensembles of shallow geometrically-local quantum circuits or tensor networks with small bond dimension in two- or higher-dimensional lattices.
As discussed in \secref{lowEntanglement}, such states are believed to undergo a phase transition beyond a certain critical depth $d^*$ or bond dimension $r^*$ where the conditional correlations transition from short-range to long-range with high probability.
In these examples, the critical depth $d^*$ and the bond dimension $r^*$ are small constants independent of the number of qubits.
For instance, numerical simulations in 2D random circuits with the brickwork architecture suggest $d^*=6$ \cite{BeneWatts2024MIE, bao2021finite, napp2019efficient2d}, while for random PEPS, the critical bond dimension may be as small as $r^*=2$ \cite{levy2021entanglement, Gonzalez2024CMIinPEPS}. 
Although a rigorous proof of these phase transitions is still lacking, recent work \cite{BeneWatts2024MIE} shows that random 2D shallow quantum circuits satisfying a notion of anti-concentration---believed to occur at depths $\geq \Omega(\log(n))$ for the brickwork architecture---exhibit long-range conditional correlations.
This can, in fact, be unconditionally proven for a depth-$2$ ``coarse-grained'' circuit architecture composed of random gates acting on $\mathcal{O}(\log(n))$ qubits.
This  was further improved in recent work \cite{McGinley2024MIE}, which proves that certain random quantum circuit architectures generate long-range conditional correlations at constant depth and fixed qudit dimension.

The presence of long-range conditional correlations in the output states of random 2D shallow random circuits or tensor network states poses a challenge to accurately represent these states with neural networks. 
Here, we explore evidence for this from previous works.
Mapping the tensor network representation of these states to a neural network, as we previously did for the 1D cluster state, does not immediately apply here due to the difficulty of contracting the corresponding 2D tensor networks \cite{napp2019efficient2d, Gonzalez2024CMIinPEPS}. 
Moreover, the intricate patterns of multipartite measurement-induced entanglement, which non-locally appear everywhere in such randomly generated states \cite{BeneWatts2024MIE, bao2021finite, napp2019efficient2d}, suggest that the conditional probabilities $p(x_k|x_1,\dots, x_{k-1})$ are likely complex functions of their inputs $x_1,\dots, x_{k}$. 
Focusing on recurrent neural networks, this suggests that deeper feedforward neural networks in RNN cells are required to compute these complex dependencies. Additionally, the dimension of the hidden state vectors $h_k$ must scale as $\Omega(n)$ with the number of qubits to retain the information of previous variables $x_1, \dots, x_{k-1}$ for $k \in [n]$.

The complexity introduced by these long-range entanglement patterns has been further investigated in \cite{BeneWatts2024MIE}, where the authors provide formal evidence that classical probabilistic circuits whose depth grows sub-logarithmically with $n$ cannot simulate the output distribution $p(x_1,\dots, x_n)$ of random 2D shallow quantum circuits given a description of the quantum circuit as input. 
In fact, it is conjectured \cite{napp2019efficient2d} that this limitation extends to any polynomial-time classical algorithm.
There are also complexity theory barriers to the very precise computation of the output probabilities $p(x_1,\dots, x_n)$ of random shallow circuits.
In particular, it is known that achieving additive errors of $2^{-\Omega(n^2)}$ is \#$\mathsf{P}$-hard \cite{movassagh2023hardness,Bouland2022Frontier}.
Finally, the worst-case complexity of estimating the ground state energy of local Hamiltonians, whose ground state amplitudes admit a succinct representation (e.g., via classical neural networks) is shown to be $\mathrm{MA}$-complete \cite{jiang2023local}, where $\mathrm{MA}$ is a randomized extension of the complexity class $\mathrm{NP}$.

\section{Outlook}

Our work provides evidence that the performance of neural network representation of quantum states depends on the amount of conditional correlations in the state's measurement distribution.
Our results raise several intriguing open questions, some of which are briefly highlighted below.

\paragraph{Probability amplitudes with non-trivial phases:} We examined the expressive power of classical neural networks in capturing the magnitudes $|\psi(x)|$ of probability amplitudes, leaving the complex phases $\psi(x)/|\psi(x)|$ mostly unaddressed. 
What structures might these complex phases assume in physically relevant states, such as the ground states of local Hamiltonians?
It is known that not all these complex phases can be removed by a local change of basis, and when present, they degrade the performance of neural quantum states in learning and simulation tasks \cite{parkKastoryano2020expressive, Carrasquilla2019Gutzwiller,Castelnovo2020signproblem, Bukov2021nonStoquastic}.
In connection with this, we might ask whether the unique ground state of any gapped, sign-free local Hamiltonian exhibits short-range conditional correlations, or if there exist concrete counterexamples.
Resolving this question is tied to understanding the performance of Variational Monte Carlo (VMC). 
VMC avoids the sign problem by sampling directly from a parametrized wavefunction, rather than from high-dimensional path integrals or density matrices, as in quantum Monte Carlo (QMC). However, while the sign problem is not apparent in VMC’s construction, it might still implicitly encounter similar challenges due to limitations in efficiently representing the wavefunction.

\paragraph{Gapless phases and power-law interactions:}
Our analysis of conditional correlations focused on gapped local Hamiltonians. 
What is the nature of these correlations near critical points, in gapless phases, or when interactions are not strictly local but instead decay with distance according to a power law?

\paragraph{Practical performance with long-range conditional correlations:} In two- and higher-dimensional lattices, states prepared by random quantum circuits or tensor networks likely exhibit long-range conditional correlations, even at shallow depths or small bond dimensions.
While this suggests that precise neural representations of such states may be inefficient, the impact on practical applications, which often rely on coarser approximations and pertain to more structured states, remains unclear.
Can the performance of neural quantum states in representing long-range correlations be enhanced by increasing network parameters or designing tailored architectures, thereby improving on the methods in \fig{ClusterState}?

\paragraph{Other neural network architectures:} Modern architectures such as Receptance Weighted Key Value (RWKV) \cite{peng2023rwkv} and Transformer Quantum State (TQS) \cite{sprague2024variational} are equipped with a self-attention mechanism designed to better capture long-range correlations in speech or language modeling. 
Thus, one might hope that using them to solve the ground state problem with long-range correlations could lead to improved performance.
Our preliminary numerical simulations show that, with a limited number of parameters, these networks exhibit similar performance degradation in the presence of long-range conditional correlations. A more thorough investigation of these architectures is left for future work.

\subsection*{Code and Data Availability:}

The code and data for conducting the numerical experiments and for generating the figures in this work are openly available on GitHub \url{https://github.com/xiaotai-yang/NQS_cmi}.

\vspace{0.5em}

\subsection*{Acknowledgments:}
The authors thank Adam Bene Watts, Fernando Brand\~ao, Sergey Bravyi, David Gosset, Hsin-Yuan Huang, and Yinchen Liu for helpful discussions.
MS is grateful to John Wright for many discussions regarding the analysis of the entanglement swapping in \secref{CMIin1D}, and Tomotaka Kuwahara for suggesting the proof of \lemref{BoundonEntropy}.
MS thanks John Napp for insightful discussions in the early stage of this project, especially regarding \secref{areaLaw} and the relevance of \cite{hastings2016signfree,napp2019efficient2d}.
The computations presented in this work were conducted at the Resnick High Performance Computing Center, a facility supported by the Resnick Sustainability Institute at the California Institute of Technology, Illinois Campus Cluster Program at the University of Illinois at Urbana-Champaign, along with resources generously provided by Ying-Jer Kao at National Taiwan University.
MS is supported by AWS Quantum Postdoctoral Scholarship and funding from the National Science Foundation NSF CAREER award CCF-2048204.
JP acknowledges support from the U.S. Department of Energy Office of Science, Office of Advanced Scientific Computing Research (DE-NA0003525, DE-SC0020290), the U.S. Department of Energy, Office of Science, National Quantum Information Science Research Centers, Quantum Systems Accelerator, and the National Science Foundation (PHY-1733907). 
The Institute for Quantum Information and Matter is an NSF Physics Frontiers Center.

\bibliographystyle{plain}
\bibliography{main}
\onecolumngrid
\appendix
\newpage

\setcounter{section}{0}
\renewcommand*{\thesection}{\Alph{section}}

\section{Preliminaries}

\subsection{Architectures of neural quantum states}\label{sec:NeuralNetworks}

Neural networks can succinctly represent an $n$-qubit quantum state 
$\ket{\psi} = \sum_{x \in \{0,1\}^n} \psi(x) \ket{x}$
in the computational basis $\{\ket{x} : x \in \{0,1\}^n\}$ by encoding the complex amplitudes $\psi(x) \in \mathbb{C}$.
We denote the quantum state parametrized by neural network weights $w$ as $\ket{\psi_w}$.
When the context is clear, we may omit the dependence on $w$.
This, for instance,  can be achieved by specifying the squared-magnitude $p(x)$ and the phase $g(x)$ parts of each amplitude $\braket{x}{\psi} = \sqrt{p(x)} e^{i g(x)}$ or the real $\operatorname{Re}(\psi(x))$ and imaginary $\operatorname{Im}(\psi(x))$ components. 
Alternatively, the network may directly output $\psi(x)$ by allowing complex-valued neurons or using architectures such as complex-valued restricted Boltzmann machines \cite{carleo2017solving}. 
We primarily focus on the case of polar representation in terms of functions $p(x)$ and $g(x)$ using real weights $w$ for a clearer exposition.

\vspace{\baselineskip}

\noindent \textbf{Feedforward neural networks:}
The first neural network architecture we consider is the feedforward network, shown in \fig{FNNRNN}.
Such a network consists of a series of layers including input, hidden, and output layers.
The depth of the neural network refers to the number of layers $L$ after the input layer.
Each layer $i \in [L]$ consists of $n_i$ neurons which are connected to the following layer with certain weights. 
The maximum number of neurons over all layers $W = \max_{i \in [L]} n_i$ is known as the width of the~network.
The neuron located at position $j \in [n_i]$ within layer $i \in [L]$ operates by transforming the input vector $z \in \mathbb{R}^{n_{i-1}}$, through a function consisting of two main components:
an \emph{affine} transformation characterized by $\langle w_{ij} | z \rangle + b_{ij}$, where the vector $w_{ij} \in \mathbb{R}^{n_{i-1}}$ and the scalar $b_{ij} \in \mathbb{R}$ represent the weight and bias parameters, respectively;
and also, a nonlinear transformation by the activation function $\sigma (\cdot )$. 
Common choices for $\sigma(\cdot)$ include the rectified linear unit (ReLU), defined as $z \rightarrow \max(0, z)$, and the sigmoid function, given by $z \rightarrow \frac{1}{1 + e^{-z}}$.
Together, these elements enable the neuron to perform complex mappings from inputs to outputs, contributing to the neural network's ability to model nonlinear~relationships.

\vspace{\baselineskip}

\noindent \textbf{Recurrent neural networks:}
A probability distribution $p(x_1,\dots, x_n)$ can be expressed using its conditional distributions by 
\begin{align}
    p(x_1,\dots, x_n) = p(x_1)p(x_2|x_1)\cdots p(x_n|x_1,\dots,x_{n-1}).\label{eq:conditinalForm}
\end{align}
A recurrent neural network, shown in \fig{FNNRNN}, consists of $n$ recurrent cells.
The cell $i\in [n]$ takes as input a hidden variable $h_{i-1} \in \mathbb{R}^{n_{i-1}}$ and the binary value $x_{i-1} \in \{0,1\}$ (with $h_0$ and $x_0$ some fixed constant values).
The output of the cell is an updated hidden variable $h_{i} \in \mathbb{R}^{n_{i}}$ given by $h_{i} = f_h(x_{i-1}, h_{i-1})$ and a binary probability distribution given by $f_p(h_i) \in \mathbb{R}^2$.
Here, $f_h(\cdot)$ and $f_p(\cdot)$ are functions composing an affine and a non-linear activation maps similar to the neurons in the feedforward neural network.

The distributions generated by each recurrent cell is used to specify the conditional distribution $p(x_i|x_1,\dots, x_{i-1}) = f_p(h_i)$.
This along with the conditional form in \eqref{eq:conditinalForm} can be used to directly sample from the distribution $p(x_1,\dots, x_n)$, a feature known as the \emph{autoregressive} property.
This is achieved by first sampling $\bm{x_1}\sim p(x_1)$ using the output of the first recurrent cell in the network. 
Next, the sampled $\bm{x_1}$ is given as input to the second recurrent cell whose output value is used to sample $\bm{x_2} \sim p(x_2|x_1)$. 
Continuing this way, we end by sampling $\bm{x_n} \sim p(x_n|x_1,\dots, x_{n-1})$.
The conditional distributions in each step can be efficiently evaluated and a binary value is sampled consequently.
This renders the overall sampling $\bm{x_1},\dots, \bm{x_n} \sim p(x_1,\dots, x_n)$ efficient, contrasting with the exponential cost of sampling a general multivariate distribution.
Such networks define recurrent neural network wavefunctions by expressing the measurement distribution $p(x)$ of a quantum state $\sum_{x \in \{0,1\}^n} \sqrt{p(x)} e^{i g(x)}$ \cite{Hibat2020RNN, Hibat2023RnnTopology}. 
Extensions of this architecture for generating the phase components $g(x)$ have been also considered \cite{Hibat2020RNN}.

\vspace{\baselineskip}
\noindent \textbf{Restricted Boltzmann machines:}
In this case, the network features a visible and a single hidden layer of neurons. 
The visible layer has $n$ nodes taking values $x = (x_1,\dots,x_n)\in \{0,1\}^n$ that specify the computational basis as before. 
The $n_h$ neurons in the hidden layer are binary-valued and fully connected to those in the visible layer by complex weights $\{w_{ij}\}_{i\in [n], j\in [n_h]}$.
The visible and hidden neurons also admit complex weights $\{a_i\}_{i \in [n]}$ and $\{b_j\}_{j \in [n_h]}$. 
The expressivity of this architecture can be controlled by adjusting the number of hidden neurons $n_h$.
The output of the network with weights $w= \cup_{ij} \{w_{ij}, a_i, b_j\}$ is the un-normalized complex amplitudes $\psi_w(x)$ given by
\begin{align}
    \psi_w(x) &= \sum_{y \in \{0,1\}^n} e^{\sum_{i,j} w_{ij} x_i y_j +  \sum_i a_i x_i + \sum_j b_j y_j} \nonumber\\
    &= \prod_{j\in[n_h]}(1+e^{b_j+ \sum_i w_{ij}x_i})  \cdot  e^{\sum_i a_i x_i}\label{eq:RBMExpression}
\end{align}
We see from \eqref{eq:RBMExpression} that for a bounded number of hidden neurons $n_h = \poly(n)$, the output of the network can be efficiently evaluated.
This is in contrast to the deeper version of these networks, featuring multiple hidden layers, where computing their output may become computationally intractable. 
Despite their simple structures, restricted Boltzmann machines can efficiently express a variety of quantum states \cite{carleo2017solving, gao2017efficient, Gao2019RBM} and are conventionally used in the Monte Carlo simulations of ground state properties \cite{carleo2017solving, melko2019restricted}.

\subsection{Review of variational Monte Carlo}\label{sec:VMC}

Neural quantum states offer a computationally powerful model that can be used to find the ground state properties of quantum systems. 
The underlying idea is that the ground state of an interacting quantum system described by a Hamiltonian $H$ can be approximately found by solving the variational~problem 
$\min_w \ \bra{\psi_w} H \ket{\psi_w}$,
where the energy of a family of neural quantum states $\ket{\psi_w}$ is minimized over their set of weights~$w$.
It is not hard to see that an equivalent expression for this minimization is given by
\begin{align}
    \min_w \ \bra{\psi_w} H \ket{\psi_w} = \min_w \E_{\bm{x}\sim |\psi_w(x)|^2}\underbrace{\left(\sum_{x'}\bra{\bm{x}}H|x'\rangle\cdot \frac{\psi_w(x')}{\psi_w(\bm{x})}\right)}_{=: \mathcal{E}_{\mathrm{local}}(\bm{x})}.\label{eq:vmcOptimization1}
\end{align}
The resulting optimization algorithm is known as the variational Monte Carlo (VMC) and has been applied in a series of recent works to find precise estimates of the ground state properties. 
For the objective function in minimizaiton \eqref{eq:vmcOptimization1} to be computable in $\poly(n)$ time, one needs 
\begin{itemize}
    \item[(1)] the term  $\mathcal{E}_{\mathrm{local}}(\bm{x})$, known as the \emph{local energy}, to be bounded and can be efficiently evaluated, 
    \item[(2)] an algorithm that efficiently draws samples $\bm{x}$ from the measurement distribution~$|\psi_w(x)|^2$.
\end{itemize}
The first condition is satisfied when $H$ is a local (or more generally a sparse) Hamiltonian and the amplitudes $\psi_w(x)$ can be computed in $\poly(n)$ time up to a normalization factor, a property that is satisfied for neural quantum states.
Assuming the efficient sampling in condition (2) can be satisfied as well, the objective function is estimated by taking the empirical average of many samples $\bm{x_1},\dots,\bm{x_T}$ drawn independently from $|\psi_w(x)|^2$ using Markov chain sampling algorithms such as Metropolis–Hastings \cite{Carleo2019CNN, Bravyi2023rapidlymixingmarkov, Bravyi2022MCMC} or autoregressive methods such as those used in recurrent neural networks \cite{Hibat2020RNN}.
One can estimate $\bra{\psi}H\ket{\psi}$ up to an error $\epsilon$ with probability $\geq 1-\delta$ using a total of $T \leq  \mathcal{O}(\bra{\psi}H^2\ket{\psi}/\epsilon^2)$ samples $\bm{x} \sim |\braket{x}{\psi}|^2$.
This follows from standard concentration results and the fact that the variance of the real and imaginary parts of local energy can be bounded by $ \bra{\psi_w}H^2\ket{\psi_w} \leq \norm{H^2}$ (refer e.g. to \cite{huang2024certifying} for a detailed derivation).

This leads to a stochastic optimization problem that can be solved using methods such as stochastic reconfiguration (or natural gradient descent) \cite{SorellaStochReconfig1998, carleo2017solving}.
The performance of variational Monte Carlo in finding the ground state is contingent upon the optimization landscape \eqref{eq:vmcOptimization1} over the set of weights $w$. 
Crucially, for this approach to succeed, there must exist a set of weights $w^*$ such that the neural quantum state $\ket{\psi_{w^*}}$ approximately represents the true ground state. 
Addressing this issue of \emph{expressivity} is the primary objective of this work.

\section{Decay of CMI Implies Efficient Representations}\label{sec:CMIimpliesRep
}

\subsection{Shallow feedforward neural network (FNN) representation}\label{sec:CNNfromCMI}
In this section, we show that the magnitudes $|\psi(x)|$ of the ground state amplitudes admit an approximate representation using shallow feedforward neural nets.
We start with deriving a factorized form for the measurement distribution $p(x) = |\psi(x)|^2$ assuming it satisfies the conditional independence property.

\begin{theorem}\label{thm:factorizaiton}
Consider the setup of \thmref{CMI_line_2d} and suppose the measurement distribution $p(x_1,\dots,x_n)$ satisfies the approximate conditional independence property $$I(\mathsf{A}:\mathsf{C}|\mathsf{B})\leq n^{c_0} \cdot e^{-\dist(\mathsf{A},\mathsf{C})/\xi}$$
as in \defref{ApproximateConditionalIndependence} with some constant $c_0$ and the CMI length $\xi$.
Fix parameters $M=\mathcal{O}\left(\frac{n}{\ell_0^{D_{\Lambda}} \cdot \log^{D_{\Lambda}}(n)}\right)$ and $s=\mathcal{O}(\ell_0^{D_{\Lambda}}\cdot \log^{D_{\Lambda}}(n))$. Then the distribution $p(x_1,\dots,x_n)$ in its support can be approximately factored in its support as
\begin{align}
    \sum_{x\in\{0,1\}^n}\left\lvert p(x)-\frac{\prod_{k=1}^M p(x_{s_k})}{\prod_{k=1}^{M-1} p(x_{s'_k})}\right\rvert \leq n^{-(\ell_0/\xi-c_0-1)}.\label{eq:factorizationTV}
\end{align}
Here $p(x_{s_k})$ and $p(x_{s'_k})$ are marginal distributions over variables  $x_{s_k}$ and $x_{s'_k}$ supported on $s_k,s'_k\subseteq [n]$ such that $\max_{k \in[M]} |s_k| \leq s$ and $\max_{k\in[M-1]} |s'_k| \leq s$.
\end{theorem}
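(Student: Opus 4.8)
The plan is to approximate $p$ by a \emph{block Markov} (junction-tree) factorization supported on a tiling of the lattice, and to control the error using the exponential decay of CMI from \defref{ApproximateConditionalIndependence}. First I would tile $\Lambda$ into $M=\mathcal{O}(n/s)$ contiguous blocks, each of linear extent $\mathcal{O}(\ell_0\log n)$ and hence volume $s=\mathcal{O}(\ell_0^{D_{\Lambda}}\log^{D_{\Lambda}}(n))$, and fix a sweep ordering of the blocks (left-to-right in 1D, lexicographic in higher dimensions). Introducing the blocks one at a time, let $A_k$ denote the variables of the $k$-th block, $P_k$ the union of all previously introduced blocks, and $S_k\subseteq P_k$ the separator: the slab of already-introduced sites within distance $\mathcal{O}(\ell_0\log n)$ of $A_k$. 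The geometric property I need is that $S_k$ shields $A_k$ from the far past, i.e.\ $\dist(A_k,P_k\setminus S_k)\geq \ell_0\log n$, and that the tripartition $(A_k,S_k,P_k\setminus S_k)$, with all not-yet-introduced blocks traced out, is of the contiguous form permitted in \fig{CMI_line_2d}.

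Setting $s_k:=A_k\cup S_k$ and $s'_{k-1}:=S_k$, the target factorization is exactly the block Markov chain $\tilde p(x)=\prod_{k=1}^M p(x_{A_k}\mid x_{S_k})=\prod_{k=1}^M p(x_{s_k})/\prod_{k=1}^{M-1}p(x_{s'_k})$, since $p(x_{A_k}\mid x_{S_k})=p(x_{s_k})/p(x_{s'_{k-1}})$. The key computation is the exact chain-rule identity for relative entropy: comparing the true chain rule $p(x)=\prod_k p(x_{A_k}\mid x_{P_k})$ with $\tilde p$ and evaluating term by term gives
\begin{align}
    D_{\mathrm{KL}}(p\parallel\tilde p)=\sum_{k=1}^M\E_p\log\frac{p(x_{A_k}\mid x_{P_k})}{p(x_{A_k}\mid x_{S_k})}=\sum_{k=1}^M I(A_k:P_k\setminus S_k\mid S_k).\nonumber
\end{align}
Each such term is the CMI between the block $A_k$ and the far past $P_k\setminus S_k$ conditioned on the separating slab $S_k$, so the approximate conditional independence assumption bounds it by $n^{c_0}e^{-\dist(A_k,P_k\setminus S_k)/\xi}\leq n^{c_0-\ell_0/\xi}$.

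Summing over the $M\leq n$ blocks gives $D_{\mathrm{KL}}(p\parallel\tilde p)\leq n^{c_0+1-\ell_0/\xi}$, and Pinsker's inequality (as in the bound preceding \defref{ApproximateConditionalIndependence}) converts this into an $L_1$ bound of the same $n^{-\Omega(\ell_0/\xi)+\mathcal{O}(1)}$ form, reproducing $\sum_x|p(x)-\tilde p(x)|\leq n^{-(\ell_0/\xi-c_0-1)}$ up to constants absorbed into the slab width $\ell_0$. An equivalent route that avoids a single global Pinsker step is a hybrid/telescoping argument: define distributions $q^{(j)}$ interpolating between $\tilde p$ and $p$ by switching the $j$-th conditional from $p(\cdot\mid x_{S_j})$ to $p(\cdot\mid x_{P_j})$, bound $\|q^{(j)}-q^{(j-1)}\|_1\leq \E\|p_{A_j\mid P_j}-p_{A_j\mid S_j}\|_1\leq\sqrt{2\,I(A_j:P_j\setminus S_j\mid S_j)}$ by Pinsker per step and Jensen, and then apply the triangle inequality.

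The main obstacle is geometric rather than information-theoretic: in $D_{\Lambda}\geq 2$ dimensions one must choose the block ordering and the shape of the separators $S_k$ so that (i) each separator is a contiguous slab of width $\Theta(\ell_0\log n)$ that truly isolates $A_k$ from $P_k\setminus S_k$ at distance $\geq\ell_0\log n$, (ii) the induced tripartition matches the contiguous form for which \defref{ApproximateConditionalIndependence} is assumed, and (iii) the separator volume stays $\mathcal{O}(s)$, so that $\max_k|s_k|,\ \max_k|s'_k|\leq s$; this last requirement is what forces the $\log^{D_{\Lambda}}(n)$ scaling of the block size. In 1D this is immediate (the separator is the length-$\ell_0\log n$ segment immediately behind each block), but in higher dimensions maintaining all three conditions simultaneously under a consistent sweep is the delicate part, after which the error bookkeeping above is routine.
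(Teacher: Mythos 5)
Your proposal is correct and follows essentially the same route as the paper: tile the lattice into blocks of linear size $\Theta(\ell_0\log n)$, approximate $p$ by the block Markov factorization $\prod_k p(x_{A_k}\mid x_{S_k})=\prod_k p(x_{s_k})/\prod_k p(x_{s'_k})$ with contiguous separators, bound each step by $\sqrt{2\,I(A_k:P_k\setminus S_k\mid S_k)}$ via Pinsker, and telescope over the $M\leq n$ blocks (your ``hybrid'' variant is literally the paper's iterated $\ell_1$ peeling, and your exact identity $D_{\mathrm{KL}}(p\parallel\tilde p)=\sum_k I(A_k:P_k\setminus S_k\mid S_k)$ is just a cleaner packaging of the same bookkeeping). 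The geometric issue you flag in $D_\Lambda\geq 2$ is handled in the paper exactly as you describe, by taking $S_k=\partial A_k\cap P_k$ under a sweep ordering so that each tripartition is of the contiguous form in \fig{CMI_line_2d}.
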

\begin{proof}
   We start with an equivalent formulation of the conditional independence property. 
   For any three subsets $\mathsf{A}$,$\mathsf{B}$, and $\mathsf{C})$ such that $\dist(\mathsf{A},\mathsf{C})=2 \ell_0 \cdot \log(n)$ and $p(x_{\mathsf{B}}) > 0$ we have
         \begin{align}      \sum_{x_{\mathsf{B}}} p(x_{\mathsf{B}} )\sum_{x_{\mathsf{A}} , x_{\mathsf{C}}} \left|\frac{p(x_{\mathsf{A}} , x_{\mathsf{B}} , x_{\mathsf{C}} )}{p(x_{\mathsf{B}})} - \frac{p(x_{\mathsf{A}} , x_{\mathsf{B}})}{p(x_{\mathsf{B}} )}\cdot \frac{p(x_{\mathsf{B}} , x_{\mathsf{C}})}{p(x_{\mathsf{B}} )}\right| &=\E_{\mathsf{B}}\norm{p_{{\mathsf{A}} ,{\mathsf{C}} |{\mathsf{B}}}-p_{{\mathsf{A}} |{\mathsf{B}}}\otimes p_{{\mathsf{C}}|{\mathsf{B}}}}_1 \nonumber\\
     &\leq  \sqrt{2 \cdot I(\mathsf{A}:\mathsf{C}|\mathsf{B})}\nonumber\\
         &\leq \poly(|\mathsf{A}|, |\mathsf{C}|)\cdot n^{-\ell_0/\xi}\nonumber\\
         &\leq n^{-(\ell_0/\xi-c_0)},
              \label{eq:decompositionOfDistribution}
      \end{align}
      where the expectation is over $\bm{x}_B\sim p_{\mathsf{B}}(x_{\mathsf{B}})$ and $c_0$ is a suitable constant.
      Next, we are going to apply this bound repeatedly to factorize the measurement distribution $p(x)$.
    \paragraph{One-dimensional chain:} We first consider the case when the qubits are arranged on an open chain with $D_{\Lambda}=1$.
    Divide the chain to contiguous sets $\mathsf{A}_1,\mathsf{A}_2,\dots, \mathsf{A}_M$ where each $|\mathsf{A}_i|=2\ell_0 \cdot \log (n)$. 
    Replacing $\mathsf{A}= \mathsf{A}_1$, $\mathsf{B} = \mathsf{A}_2$ and $\mathsf{C} = \mathsf{A}_3,\dots,\mathsf{A}_M$ in bound \eqref{eq:decompositionOfDistribution} yields:
    \begin{align}
       \Norm{p_{\mathsf{A}_1,\mathsf{A}_2,\dots,\mathsf{A}_M}-p_{\mathsf{A}_1,\mathsf{A}_2}\cdot p^{-1}_{\mathsf{A}_2}\cdot p_{\mathsf{A}_2,\dots,\mathsf{A}_M}}_1\leq n^{-(\ell_0/\xi-c_0)}\label{eq:decomp1}
    \end{align}
    Similarly, we can apply \eqref{eq:decompositionOfDistribution} to the marginal distribution $p_{\mathsf{A}_2,\dots,\mathsf{A}_M}$.
    Continuing in this manner, in the $k$'th step, we factorize $p_{\mathsf{A}_k,\dots,\mathsf{A}_M}$ to
    \begin{align}
        \Norm{p_{\mathsf{A}_k,\dots,\mathsf{A}_M} - p_{\mathsf{A}_{k},\mathsf{A}_{k+1}}\cdot p^{-1}_{\mathsf{A}_{k+1}}\cdot p_{\mathsf{A}_{k+1},\dots,\mathsf{A}_{M}}}_1\leq n^{-(\ell_0/\xi-c_0)}\label{eq:decomp2}
    \end{align}
    Combining these two bounds \eqref{eq:decomp1} and \eqref{eq:decomp2}, we get
          
        \begin{align}
        \NORM{p_{\mathsf{A}_1,\dots,\mathsf{A}_M}-\prod_{k=1}^{M-2} \frac{p_{\mathsf{A}_k,\mathsf{A}_{k+1}}}{p_{\mathsf{A}_{k+1}}} \cdot p_{\mathsf{A}_M,\mathsf{A}_{M-1}}}_1\leq M\cdot n^{-\ell_0/\xi}\leq n^{-(\ell_0/\xi-c_0-1)}
        \end{align}

\paragraph{$\bm{D_{\Lambda}}$-dimensional lattice:} We now move the more general case of a $D_{\Lambda}$-dimensional lattice $\Lambda$, which for simplicity, we assume is the cubic lattice $\mathbb{Z}^{D_{\Lambda}}$. Suppose we divide the lattice into a collection of cubic subsets $\cup_{k=1}^M \mathsf{A}_k$, where each $\mathsf{A}_k$ has side-length $\ell_0 \cdot \log(n)$ and the total number of $\mathsf{A}_k$ is $M=\lceil \frac{n}{\ell_0^{D_{\Lambda}} \cdot \log^{D_{\Lambda}}(n)}\rceil$. For each $\mathsf{A}_k$, we define its neighborhood $\partial \mathsf{A}_k \subset \cup_{j=1}^M \mathsf{A}_j$ such that the qubits in $\partial A_k$ shield $\mathsf{A}_k$ from $\Lambda\setminus (\mathsf{A}_k \cup \partial \mathsf{A}_k)$.

We proceed in a similar way as in the one-dimensional case:
Define $\mathsf{C}_1:=\Lambda \setminus \{\mathsf{A}_1 \cup \partial \mathsf{A}_1\}$ and $\mathsf{B}_1:=\partial \mathsf{A}_1$. Consider the tripartition of $\Lambda$ into $\mathsf{A}_1$, $\mathsf{B}_1$, and $\mathsf{C}_1$.
See \fig{CMI_line_2d} for a demonstration of the $D_{\Lambda}=2$ case.
From bound \eqref{eq:decompositionOfDistribution} we~get:
\begin{align}
    \norm{p_{\mathsf{A}_1,\dots,\mathsf{A}_M}-p_{\mathsf{A}_1,\mathsf{B}_1}\cdot p^{-1}_{\mathsf{B}_1}\cdot p_{\mathsf{B}_1,\mathsf{C}_1} }_1\leq n^{-(\ell_0/\xi-c_0)}\label{eq:factorize1}
\end{align}
We then repeat this argument to factorize $p_{\mathsf{B}_1,\mathsf{C}_1}$.
In the $k$'th step of this process, we want to factorize $p_{\mathsf{B}_{k-1},\mathsf{C}_{k-1}}$. Consider the partitioning of set $\mathsf{B}_{k-1}\cup \mathsf{C}_{k-1}$ into $\mathsf{A}_k$, $\mathsf{B}_k=\partial \mathsf{A}_k \cap \mathsf{C}_{k-1}$, and $\mathsf{C}_k=\mathsf{C}_{k-1}\setminus \{\mathsf{A}_k \cup \mathsf{B}_k\}$. Applying bound \eqref{eq:decompositionOfDistribution} implies
\begin{align}
    \norm{p_{\mathsf{B}_{k-1},\mathsf{C}_{k-1}}-p_{\mathsf{A}_k,\mathsf{B}_k} \cdot p^{-1}_{\mathsf{B}_k}\cdot p_{\mathsf{B}_k,\mathsf{C}_k}}_1\leq n^{-(\ell_0/\xi-c_0)}\label{eq:factorize2}
\end{align}

Overall, we find
        \begin{align}
        \NORM{p_{\mathsf{A}_1,\dots,\mathsf{A}_M}-\prod_{k=1}^{M-2} \frac{p_{\mathsf{A}_k,\mathsf{B}_k}}{p_{\mathsf{B}_k}} \cdot p_{\mathsf{A}_M, \mathsf{A}_{M-1}}}_1&\leq M\cdot n^{-(\ell_0/\xi-c_0)}\leq n^{-(\ell_0/\xi-c_0-1)}.
        \end{align}

\end{proof}

\begin{remark}
    The proof of \thmref{factorizaiton} for the one-dimensional systems, only requires a weaker version of the approximate conditional independence \eqref{eq:ConditionalIndependence} where the sets $\mathsf{A}$, $\mathsf{B}$, and $\mathsf{C}$ are known to partition the whole chain, $\mathsf{A}\cup \mathsf{B} \cup \mathsf{C}=[n]$ (compared to applying \eqref{eq:ConditionalIndependence} to subsets $\mathsf{A}\cup \mathsf{B} \cup \mathsf{C}\subset[n]$).
    This is a consequence of the data processing inequality which implies that if the approximate conditional independence holds for a triparition  $\mathsf{A}$, $\mathsf{B}$, and $\mathsf{C}$ where $\mathsf{A} = \mathsf{A}_1 \cup \mathsf{A}_2$, then this property also holds for the subsets $\mathsf{A}_2$, $\mathsf{B}$, and $\mathsf{C}$. 
\end{remark}
We show next that \thmref{factorizaiton} can be applied to construct a shallow neural network representation for the magnitude $|\psi(x)|$ of the wavefunction amplitudes.

\begin{theorem}[Restatement of \thmref{CMI_line_2d}]\label{thm:FNNrepresentationRestatement}
    Consider a local Hamiltonian on $n$ qubits with the ground state $\ket{\psi}=\sum_{x \in \{0,1\}^n} \psi(x) \ket{x}$ defined on a $D_{\Lambda}$-dimensional lattice $\Lambda$ with open boundary conditions.
    Suppose the measurement distribution $p(x) = |\psi(x)|^2$ in this basis satisfies the approximate conditional independence property given in \defref{ApproximateConditionalIndependence} with respect to geometrically contiguous regions shown in \fig{CMI_line_2d}. Then, there exists a feedforward neural network with 
    \begin{align}
        \text{depth\ \ }  \mathcal{O}\left(\log\log(n)\right)  \text{ \ and width\ \ }  n^{\mathcal{O}(\log^{D_{\Lambda}-1}(n))}\label{eq:depthWidth}
    \end{align}
    that computes a function $q(x)$ such that $\sum_{x}|q(x)-p(x)|\leq 1/\poly(n)$.
    This can also be achieved using a recurrent neural network with $D_{\Lambda}$-dimensional geometry,  a memory size of $\mathcal{O}(\log^{D_{\Lambda}}(n))$, and  feedforward cells with width and depth given by \eqref{eq:depthWidth}.
\end{theorem}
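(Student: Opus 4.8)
The plan is to turn the approximate factorization of \thmref{factorizaiton} into an explicit feedforward network, and then read off the recurrent version from the chain structure of that factorization. Write $q(x) = \prod_{k=1}^{M} p(x_{s_k}) \big/ \prod_{k=1}^{M-1} p(x_{s'_k})$ for the factored distribution, so that \thmref{factorizaiton} already gives $\sum_x |q(x) - p(x)| \le n^{-(\ell_0/\xi - c_0 - 1)}$; choosing the constant $\ell_0$ to be a large enough multiple of the CMI length $\xi$ makes this error $\le 1/\poly(n)$. It therefore suffices to build a network that computes $q(x)$ up to a further $1/\poly(n)$ additive $\ell_1$ error, and every marginal appearing in $q$ is a function of at most $s = \mathcal{O}(\log^{D_\Lambda}(n))$ of the input bits.

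First I would construct, for a single marginal $p(x_{s_k})$, a ``lookup'' subnetwork computing $\log p(x_{s_k})$ exactly as a function of its $\le s$ relevant bits. Writing $\log p(y) = \sum_{a \in \{0,1\}^{s}} \log p(a)\cdot \mathds{1}[y = a]$ expresses it as an affine combination (the values $\log p(a)$ are hard-wired as weights) of the $2^{s}$ pattern indicators. Each indicator $\mathds{1}[y=a]$ is an AND of $s$ single-bit matches, and each single-bit match is an affine function of $y_i$; realizing the AND by a balanced binary tree of two-bit products, where a product of two $\{0,1\}$-values is the single ReLU $\mathrm{ReLU}(b_1+b_2-1)$, costs depth $\mathcal{O}(\log s)$ and width $2^{\mathcal{O}(s)}$. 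Summing the indicators is one further affine layer. Since $\log s = \mathcal{O}(\log\log n)$ and $2^{\mathcal{O}(s)} = n^{\mathcal{O}(\log^{D_\Lambda - 1} n)}$, this subnetwork already meets the stated depth and width.

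Next I would assemble the full network by placing one lookup subnetwork in parallel for each of the $\mathcal{O}(M)$ factors, so that $\log q(x) = \sum_k \log p(x_{s_k}) - \sum_k \log p(x_{s'_k})$ is computed by a single affine layer acting on all subnetwork outputs, and $q(x) = \exp(\log q(x))$ follows from one final approximate exponential activation. Crucially, this log-sum-exp step adds only constant depth and does not multiply the depth by $\log M$, which is what a naive pairwise-multiplication tree over the $M = \Omega(n/\polylog n)$ factors would cost. I expect this combination step to be the main obstacle: the product must be formed at depth $\mathcal{O}(\log\log n)$ rather than $\mathcal{O}(\log n)$, the denominators $p(x_{s'_k})$ and the argument of the final exponential must be controlled (on the support the marginals are positive and at most $1$, but $q(x)$ itself can be exponentially small, so $\log q$ ranges over an interval of width $\Theta(n)$), and the exponential has to be approximated to \emph{relative} error $1/\poly(n)$ on that interval. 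Relative error is exactly what is needed, since $\sum_x |q_{\mathrm{net}}(x) - q(x)| \le \delta \sum_x q(x) \approx \delta$ when the exponential is approximated with multiplicative accuracy $\delta$; this is achievable by a piecewise-linear approximation of $\exp$ with $\poly(n)$ pieces (depth $\mathcal{O}(1)$, width $\poly(n)$), together with a truncation that discards the negligible off-support mass.

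Finally, the recurrent version follows from the geometry of the factorization. In the one-dimensional case $q$ has the Markov form $\prod_k p_{\mathsf{A}_k,\mathsf{A}_{k+1}}/p_{\mathsf{A}_{k+1}}$, i.e.\ an autoregressive chain in which the conditional for block $\mathsf{A}_k$ depends only on the neighboring block; more generally the conditional for $\mathsf{A}_k$ depends only on the shielding neighborhood $\partial \mathsf{A}_k$ of size $\mathcal{O}(\log^{D_\Lambda} n)$. An RNN whose hidden state stores this boundary region (hence memory $\mathcal{O}(\log^{D_\Lambda} n)$) and whose cell is the feedforward lookup-plus-combine network above reproduces the same conditionals block by block, giving the stated recurrent representation with the same per-cell width and depth.
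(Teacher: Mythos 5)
Your proposal follows essentially the same route as the paper's proof: apply the factorization of \thmref{factorizaiton}, compute each $\log p(x_{s_k})$ in parallel via a lookup subnetwork over the $2^{\mathcal{O}(s)}$ pattern indicators built from ReLU-implemented products of Boolean bits (depth $\mathcal{O}(\log s)$, width $2^{\mathcal{O}(s)}$), combine everything with a single affine layer, and read off the recurrent version from the Markov/shielding structure of the blocks $\mathsf{A}_k$. The only differences are cosmetic: you use the gadget $\mathrm{ReLU}(b_1+b_2-1)$ in place of the paper's sawtooth-based Boolean product, and you are in fact more explicit than the paper about the final exponentiation step and its relative-error analysis.
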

We postpone the proof of the second part to \secref{RNNfromCMI} regarding the RNN representation to the next section and focus only on the feedforward representation in what follows.

\begin{proof}
We give an explicit construction of a feedforward neural network that computes the factorization in \thmref{factorizaiton}.
\begin{enumerate}
    \item Any function $q(x_1,\dots, x_m): \{0,1\}^m \mapsto \mathbb{R}$ can be expressed by $$q(x_1,\dots,x_m)=\sum_{(y_1,\dots,y_m)\in \{0,1\}^m} q_{(y_1,\dots,y_m)} \cdot \prod_{i=1}^m z_{i,y_i}(x_i),$$ where we define $z_{i,0}(x_i)=1- x_i$ and $z_{i,1}(x_i)=x_i$ for each $i\in[m]$. Since the variables $z_{i,y_i}\in\{0,1\}$, we can multiply them exactly via the ReLU function $\sigma(x) = \max(0, x)$. 
    To see this, consider the sawtooth function~\cite{telgarsky2015representation, yarotsky2017error}: $$g(x):=2\cdot \sigma(x)-4\cdot \sigma(x-\frac{1}{2})+2\cdot \sigma(x-1).$$  
    Define $h(x):=x-\frac{1}{4}\cdot g(x)$. This function satisfies $h(0)=0$, $h(1/2)=1/4$, and $h(1)=1$. The connection between $h(x)$ and $x \cdot y$ for two Boolean variables is due to $$x\cdot y=2\cdot h(\frac{x+y}{2})-\frac{x+y}{2}.$$
    Multiple repetition of this network allows us to compute $q(x_1,\dots,x_m)$ in depth $\mathcal{O}(\log(m))$ and width $\mathcal{O}(2^m)$ with the coefficients $q_{(y_1,\dots,y_m)}$ encoded in the weights of the network. 
    \item Next, we apply the above construction to evaluate the expression \eqref{eq:factorizationTV} in \thmref{factorizaiton}.
    That is, we can get
    \begin{align}
    \log \left(\frac{\prod_{k=1}^M p(x_{s_k})}{\prod_{k=1}^{M-1} p(x_{s'_k})}\right) = \sum_{k=1}^M \log(p(x_{s_k})) - \sum_{k=1}^{M-1}\log(p(x_{x'_k})).
    \end{align}
    by computing $\log(p(x_{\mathsf{s}_k}))$
        for $k \in[M]$ and $\log( p(x_{\mathsf{s}'_k}))$
        for $k \in [M-1]$ in parallel in depth $\mathcal{O}(\log(s))$ and width $\mathcal{O}(M\cdot 2^{s})$, where for a $D_{\Lambda}$-dimensional lattice $s = \mathcal{O}(\log^{D_{\Lambda}}{n})$ and $|\mathsf{s}_k|, |\mathsf{s}'_{k'}|\leq s$.
        We assume the weights of the neurons are are encoded in $\mathcal{O}(\log(n))$ bits. 
        Since $\log(p(x_{\mathsf{s}_k}))$ and $\log( p(x_{\mathsf{s}'_k}))$ are given by the weights of the neurons, we truncate the probability values that are  $< 1/(2^n \poly(n))$.
        Moreover, we can accommodate the cases when the value of a (possibly truncated) probability $p(x_{\mathsf{s}_k})$ is zero by including a constant positive offset in place of $\log(p(x_{\mathsf{s}_k}))$ and $\log( p(x_{\mathsf{s}'_k}))$.
        The presence of such an offset can be detected by a threshold gate, in which case the output probability computed by the network is considered~$0$. 
        \item The last layer of the network adds the above values and outputs a function that, due to the combined errors in step 2, is within $1/\poly(n)$ additive error of $\prod_{k=1}^M p(x_{s_k})/\prod_{k=1}^{M-1} p(x_{s'_k})$ in $\ell_1$ distance. 
         As a result of \thmref{factorizaiton}, we obtain a feedforward neural network with 
    \begin{align}
        \text{depth\ \ }  \mathcal{O}\left(\log\log(n)\right)  \text{ \ and width\ \ }  n^{\mathcal{O}(\log^{D_{\Lambda}-1}(n))}\label{eq:depthWidth3}
    \end{align}
    that computes a function $q(x_1,\dots, x_n)$ such that 
    $$\sum_{x_1,\dots, x_n}\left |p(x_1,\dots, x_n) -q(x_1,\dots, x_n)\right| \leq \frac{1}{\poly(n)}.$$
\end{enumerate}

\end{proof}

\subsection{Shallow recurrent neural network (RNN) representation}\label{sec:RNNfromCMI}
The conditional independence property in the state $\ket{\psi}$ also yields a recurrent neural network representations of $|\psi(x)|$ with $\mathcal{O}(\log^{D_{\Lambda}}(n))$ memory in a $D$-dimensional lattice.
As we discuss, the RNN architecture also incorporates a $D_{\Lambda}$-dimensional design \cite{Hibat2020RNN}.
To show this, analogous to the proof of \thmref{FNNrepresentationRestatement}, we start with a one-dimensional chain. 

\paragraph{One-dimensional chain:} As before, we divide the chain into contiguous sets $\mathsf{A}_1,\mathsf{A}_2,\dots, \mathsf{A}_M$ where each $|\mathsf{A}_j|=2\ell_0 \cdot \log (n)$. 
We now show that one can factorize the overall probability $p_{\mathsf{A}_1,\cdots,\mathsf{A}_M}$ into the product $\prod_{j=1}^M p_{\mathsf{A}_j|\mathsf{A}_{j-1}}$ with $\mathsf{A}_0 = \emptyset$.
This follows from
\begin{align}
    &\NORM{\prod_{j=1}^M p_{\mathsf{A}_j|\mathsf{A}_{< j}}-\prod_{j=1}^M p_{\mathsf{A}_j|\mathsf{A}_{j-1}}}_1 \nonumber\\
    &\leq \sum_{k=1}^{M-1}\NORM{\prod_{j=1}^k p_{\mathsf{A}_j|\mathsf{A}_{< j}}\cdot (p_{\mathsf{A}_{k+1}|\mathsf{A}_{\leq k}}-p_{\mathsf{A}_{k+1}\mathsf{A}_k})\cdot \prod_{j=k}^{M-2} p_{\mathsf{A}_{j+2}|\mathsf{A}_{j+1}}}_1\nonumber\\
    &\leq \sum_{k=1}^{M-1}\NORM{\prod_{j=1}^k p_{\mathsf{A}_j|\mathsf{A}_{< j}}\cdot (p_{\mathsf{A}_{k+1}|\mathsf{A}_{\leq k}}-p_{\mathsf{A}_{k+1}|\mathsf{A}_k})}_1\nonumber\\
    &=\sum_{k=1}^{M-1}\NORM{p_{\mathsf{A}_k}\cdot p_{\mathsf{A}_{k-1},\cdots,\mathsf{A}_1|\mathsf{A}_k}\cdot \left(\frac{p_{\mathsf{A}_{k+1},\mathsf{A}_{k-1},\dots, \mathsf{A}_1|\mathsf{A}_k}}{p_{\mathsf{A}_{k-1},\cdots, \mathsf{A}_1|\mathsf{A}_k}}-p_{\mathsf{A}_{k+1}|\mathsf{A}_k}\right)}_1\nonumber\\
    & = \sum_{k=1}^{M-1}\underbrace{\E_{\bm{\mathsf{A}_k}}\norm{p_{\mathsf{A}_{k+1},\mathsf{A}_{k-1},\dots, \mathsf{A}_1|\bm{\mathsf{A}_k}}-p_{\mathsf{A}_{k+1}|\bm{\mathsf{A}_k}}\cdot p_{\mathsf{A}_{k-1},\cdots, \mathsf{A}_1|\bm{\mathsf{A}_k}}}_1}_{\leq n^{-(\ell_0/\xi -c_0)}}\leq n^{-(\ell_0/\xi-c_0-1)}.\label{eq:factorizationRNN}
\end{align}

The last line is a consequence of the approximate conditional independence in \defref{ApproximateConditionalIndependence} after recognizing  $\mathsf{A}=\cup_{j=1}^{k-1}\mathsf{A}_j$, $\mathsf{B}=\mathsf{A}_k$, and $\mathsf{C}= \cup_{j=k+1}^{n}\mathsf{A}_j$ and choosing a suitable constant $c_0$ such that $\poly(\mathsf{A}, \mathsf{C}) \leq n^{c_0}$.

As shown in \fig{FNNRNN}, the recurrent neural network consists of a cascade of cells each computing the probability of a variable conditioned on the value of the variables in the previous cells.
Define $x^{(k)}_1,\cdots, x^{(k)}_{M'}$ to be the set of variables in each subset $\mathsf{A}_k$, where $M':= 2\ell_0 \log(n)$. 
We now focus on a subset of cells that computes the conditional probabilities $$p(x^{(k)}_1,\cdots, x^{(k)}_{M'}|x^{(k-1)}_1,\cdots, x^{(k-1)}_{M'}) = \prod_{i=1}^{M'} p(x^{(k)}_i|x^{(k)}_{< i}, x^{(k-1)}_{\leq M'})$$
for some~$k \in [M]$.
The cell that computes $p(x^{(k)}_i|x^{(k-1)}_{<i}, x^{(k-1)}_{\leq M'})$ 
takes as input $x^{(k)}_{i-1}$ and a hidden memory state $h^{(k)}_{i-1}$.
It then outputs an updated memory 
\begin{align}
    h^{(k)}_i = f_h(x^{(k)}_{i-1}, h^{(k)}_{i-1})\label{eq:hidden1}
\end{align}
and the conditional probability $$p(x^{(k)}_i|x^{(k)}_{< i}, x^{(k-1)}_{\leq M'}) = f_p(h^{(k)}_i).$$
The updated memory can be simply set $h^{(k)}_i = (x^{(k)}_{< i}, x^{(k-1)}_{\leq M'})$ which involves $\mathcal{O}(\log(n))$ many variables compared to $\Omega(n)$ variables needed for general $n$-variable distributions. 

The function $f_p(h^{(k)}_i)$ is constructed similar to the proof of \thmref{FNNrepresentationRestatement} which yields a shallow feedforward neural network of depth $\mathcal{O}(\log\log(n))$ and width $\poly(n)$ that computes $p(x^{(k)}_i|x^{(k)}_{< i}, x^{(k-1)}_{\leq M'})$ up to $\frac{1}{\poly(n)}$ multiplicative error. 
An argument similar to the one leading to \eqref{eq:factorizationRNN} shows the overall factorized distribution $ \prod_{k=1}^M\prod_{i=1}^{M'} p(x^{(k)}_i|x^{(k)}_{< i}, x^{(k-1)}_{\leq M'})$ can be computed with this recurrent architecture within $\frac{1}{\poly(n)}$ error in $\ell_1$ distance.
Combined with the error bound in \eqref{eq:factorizationRNN}, this yields a shallow recurrent neural network that computes and samples from $p(x_1,\dots,x_n)$ up to $\frac{1}{\poly(n)}$ error in $\ell_1$ distance. 

\paragraph{$\bm{D_{\Lambda}}$-dimensional lattice:}
The one-dimensional construction above can be generalized to the $D_{\Lambda}$-dimensional lattice $\Lambda = \mathbb{Z}^{D_{\Lambda}}$ by a proper choice of regions $\mathsf{A}_k$ to condition on in each RNN cell.
We first consider the $D_{\Lambda}=2$ lattice and partition it into regions $\cup_{k=1}^M \mathsf{A}_k$ with a `snake' ordering shown in \fig{RNNDecomposition} and a 2D RNN architecture such as the one considered in \cite{Hibat2020RNN}. 
The $D_{\Lambda}>2$ case follows from a similar argument. 
Analogous to the proof of \thmref{factorizaiton}, we consider the neighborhood $\partial \mathsf{A}_{k}$ of each subset $\mathsf{A}_k$.
In the proof of \eqref{eq:factorizationRNN} for the one-dimensional case, it suffices to factorize the distribution into a product of conditional distributions $p(\mathsf{A}_k|\mathsf{A}_{k-1})$ since $\mathsf{A}_{k-1}$ shields $\mathsf{A}_k$ from variables in $\mathsf{A}_{\leq k-2}$. 
In the two-dimensional lattice, the factorization is in terms of distributions $p(\mathsf{A}_k|\partial\mathsf{A}_k \cap \mathsf{A}_{<k})$ since we now need $\partial\mathsf{A}_k \cap \mathsf{A}_{<k}$ to shield $\mathsf{A}_k$ from the previous subsets and be able to to apply the conditional independence property in \defref{ApproximateConditionalIndependence}.
Since the number of regions in $\partial\mathsf{A}_k \cap \mathsf{A}_{<k}$ is a constant, the remainder of the proof is analogous to the one-dimensional case. 
We note that the ordering in which the distribution $p(x_1,\dots, x_n)$ is factorized is the `reverse' of the construction in \thmref{factorizaiton}.
The regions $\partial\mathsf{A}_k \cap \mathsf{A}_{<k}$ have a two-dimensional geometry. 
However, conventional 1D RNNs, as shown in \fig{FNNRNN}(b), map 2D regions to 1D configurations, often disrupting spatial locality and requiring $\poly(n)$-sized hidden memory states to capture the resulting non-local information. 
To better align the RNN architecture with 2D geometry, the authors in \cite{Hibat2020RNN} proposes a design where hidden memory states propagate vertically as well as horizontally (see their Fig. 5(b)). 
In other words, instead of following \eqref{eq:hidden1}, the updated memory state $h^{(k)}_{ij}$ in the RNN cell at row $i$ and column $j$ within the $k$'th region $\mathsf{A}_k$ now satisfies
\begin{align}
    h_{ij}^{(k)} = f_h(x_{i-1,j}^{(k)}, h_{i-1,j}^{(k)}, x_{i,j-1}^{(k)}, h_{i,j-1}^{(k)}).\nonumber
\end{align}
Extensions of this architecture to higher dimensions follow naturally. These generalized RNN wavefunctions allow the computation of conditional probabilities $p(\mathsf{A}_k|\partial\mathsf{A}_k \cap \mathsf{A}_{<k})$ using $\mathcal{O}(\log^{D_{\Lambda}}(n))$ dimensional hidden memory.

\begin{figure}
    \centering
    \includegraphics[width=0.20\textwidth]{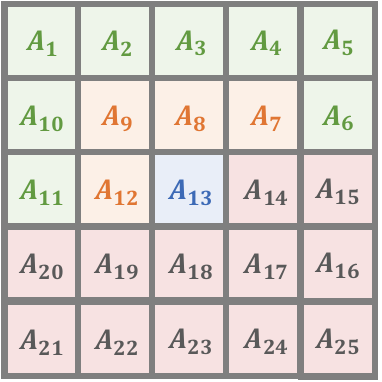}
    \caption{In the construction of the recurrent neural network, the lattice is divided into subsets $\{A_i\}_{i=1}^M$. 
    The chain rule is then applied to factorize the $p(\mathsf{A}_1,\dots, \mathsf{A}_M)$ into a product of conditional distributions $p(A_k|A_{<k} \cap \partial A_k)$ in the `snake' order shown.
    As an example, the case of $k=13$ with the  conditional probability $p(\mathcolor{blue}{\mathsf{A}_{13}}|\mathcolor{orange}{\mathsf{A}_{7}}, \mathcolor{orange}{\mathsf{A}_{8}}, \mathcolor{orange}{\mathsf{A}_{9}}, \mathcolor{orange}{\mathsf{A}_{12}})$ is highlighted.}
    \label{fig:RNNDecomposition}
\end{figure}

\section{From conditional independence to area law}\label{sec:areaLaw}

Here, we establish \thmref{signfreeArealaw}, which shows that approximate conditional independence in non-negative quantum states leads to an entanglement area law for such states.
\begin{theorem}[Approximate conditional independence implies area law]
    Consider an $n$-qudit sign-free state $\ket{\psi}\in \bbC^d\otimes \cdots \otimes \bbC^d$ on a lattice. Suppose the measurement distribution of this state satisfies the approximate conditional independence property \eqref{eq:ConditionalIndependence} for a given tripartition  $\mathsf{A} \cup \mathsf{B} \cup \mathsf{C} = [n]$ of the lattice such that $\mathsf{A} \cap \mathsf{C} = \emptyset$.
    The von Neumann entropy of the reduced state $\rho_{\mathsf{A}}$ in region $\mathsf{A}$ is upper bounded by $$S(\rho_{\mathsf{A}})\leq \mathcal{O}(|\mathsf{B}|),$$
    which up to $\log(|\mathsf{A}|)$ factors implies an area law $S(\rho_{\mathsf{A}})\leq \mathcal{O}(|\partial \mathsf{A}|\cdot \log(|\partial\mathsf{A}|)$.
\end{theorem}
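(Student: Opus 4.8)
The plan is to compare the true sign-free state $\ket{\psi}=\sum_x \sqrt{p(x)}\ket{x}$ against a surrogate built from the \emph{exactly} conditionally independent distribution $\tilde{p}(x)=p(x_\mathsf{A},x_\mathsf{B})\,p(x_\mathsf{B},x_\mathsf{C})/p(x_\mathsf{B})$ (restricted to the support where $p(x_\mathsf{B})>0$; one checks $\tilde p$ is normalized), namely $\ket{\tilde\psi}=\sum_x \sqrt{\tilde p(x)}\ket{x}$. First I would settle the exact case. Because $\ket{\tilde\psi}$ is sign-free and $\tilde p$ factorizes, for each fixed $x_\mathsf{B}$ the amplitude $\sqrt{\tilde p(x)}=\sqrt{p(x_\mathsf{A},x_\mathsf{B})}\,\sqrt{p(x_\mathsf{B},x_\mathsf{C})}/\sqrt{p(x_\mathsf{B})}$ splits into a product of an $\mathsf{A}$-part and a $\mathsf{C}$-part. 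Grouping the sum over $x_\mathsf{B}$ then writes $\ket{\tilde\psi}=\sum_{x_\mathsf{B}}\ket{\phi_{x_\mathsf{B}}}_\mathsf{A}\otimes\ket{\Xi_{x_\mathsf{B}}}_{\mathsf{B}\mathsf{C}}$, where the $\mathsf{B}\mathsf{C}$-parts are mutually orthogonal because they carry the distinguishing register $\ket{x_\mathsf{B}}$. Hence the Schmidt rank across the cut $\mathsf{A}\,|\,\mathsf{B}\mathsf{C}$ is at most $d^{|\mathsf{B}|}$, giving $S(\tilde\rho_\mathsf{A})\leq |\mathsf{B}|\log d=\mathcal{O}(|\mathsf{B}|)$.

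Next I would control the approximation error. Exactly as in the factorization bound of \thmref{factorizaiton}, one has $\|p-\tilde p\|_1=\E_\mathsf{B}\norm{p_{\mathsf{A},\mathsf{C}|\mathsf{B}}-p_{\mathsf{A}|\mathsf{B}}\otimes p_{\mathsf{C}|\mathsf{B}}}_1\leq \sqrt{2\,I(\mathsf{A}:\mathsf{C}|\mathsf{B})}$ by Pinsker's inequality. The sign-free assumption is the crucial bridge here: it makes the overlap $\braket{\psi}{\tilde\psi}=\sum_x\sqrt{p(x)\tilde p(x)}$ equal to the classical Bhattacharyya coefficient, so the standard Hellinger--total-variation inequalities convert $\ell_1$-closeness of $p$ and $\tilde p$ into a fidelity bound $F\geq 1-\sqrt{I/2}$. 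Using purity of both states, $\tfrac12\norm{\,\ket{\psi}\!\bra{\psi}-\ket{\tilde\psi}\!\bra{\tilde\psi}}_1=\sqrt{1-F^2}\leq (2I)^{1/4}$, and by monotonicity of the trace norm under partial trace, $\tfrac12\norm{\rho_\mathsf{A}-\tilde\rho_\mathsf{A}}_1\leq (2I)^{1/4}=:T$. The Fannes--Audenaert continuity bound then yields $|S(\rho_\mathsf{A})-S(\tilde\rho_\mathsf{A})|\leq T\,|\mathsf{A}|\log d+\mathcal{O}(1)$.

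Combining the two steps gives $S(\rho_\mathsf{A})\leq |\mathsf{B}|\log d+(2I)^{1/4}|\mathsf{A}|\log d+\mathcal{O}(1)$, and here lies the main obstacle: the Fannes correction scales with the \emph{full} local dimension $|\mathsf{A}|\log d$, so a merely small CMI does not suffice — I must force the factor $(2I)^{1/4}|\mathsf{A}|$ to be bounded. Substituting the decay $I\leq \poly(|\mathsf{A}|,|\mathsf{C}|)\,e^{-\dist(\mathsf{A},\mathsf{C})/\xi}$ shows this demands choosing the separation $\dist(\mathsf{A},\mathsf{C})=\Theta(\xi\log|\mathsf{A}|)$, i.e.\ a region $\mathsf{B}$ of logarithmic width, after which the correction collapses to $\mathcal{O}(1)$ and $S(\rho_\mathsf{A})=\mathcal{O}(|\mathsf{B}|)$. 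For the area-law corollary I would then take $\mathsf{B}$ to be exactly such a shell of thickness $\Theta(\log|\mathsf{A}|)$ enclosing $\mathsf{A}$, so that $|\mathsf{B}|=\mathcal{O}(|\partial\mathsf{A}|\cdot\log|\mathsf{A}|)$, and conclude $S(\rho_\mathsf{A})=\mathcal{O}(|\partial\mathsf{A}|\log|\partial\mathsf{A}|)$ using $\log|\mathsf{A}|=\Theta(\log|\partial\mathsf{A}|)$ for geometric regions on the lattice.
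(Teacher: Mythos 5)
Your proposal is correct and follows essentially the same route as the paper's proof in \appref{areaLaw}: construct the surrogate state with the exactly factorized distribution, lower-bound the overlap via the Bhattacharyya coefficient plus Pinsker's inequality (exploiting sign-freeness to identify the quantum overlap with the classical fidelity), transfer to the reduced states and apply Fannes, and tune $\dist(\mathsf{A},\mathsf{C})=\Theta(\xi\log|\mathsf{A}|)$ so the $|\mathsf{A}|\log d$ Fannes correction collapses to $\mathcal{O}(1)$. The only (cosmetic) difference is that you bound $S(\tilde\rho_{\mathsf{A}})\leq |\mathsf{B}|\log d$ by a Schmidt-rank count across the $\mathsf{A}\,|\,\mathsf{B}\mathsf{C}$ cut, whereas the paper writes $\sigma_{\mathsf{A}}$ as a mixture of $d^{|\mathsf{B}|}$ pure states and uses $S(\E_{\mathsf{B}}\,\ketbra{\alpha^{x_\mathsf{B}}}{\alpha^{x_\mathsf{B}}})\leq H(\mathsf{B})$ --- both yield the same bound.
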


\begin{proof}
   Consider the following state $\ket{\phi}$ defined by
\begin{align}
    \ket{\phi}:=\sum_{x_{\mathsf{A}}, x_{\mathsf{B}}, x_{\mathsf{C}}}\sqrt{p(x_{\mathsf{B}}) \cdot p(x_{\mathsf{A}}|x_{\mathsf{B}})\cdot p(x_{\mathsf{C}}|x_{\mathsf{B}})}\cdot \ket{x_{\mathsf{A}},x_{\mathsf{B}},x_{\mathsf{C}}},
\end{align}
where the sum is over $x_{\mathsf{A}} \in [d]^{|\mathsf{A}|}$, $x_{\mathsf{B}} \in[d]^{|\mathsf{B}|}$, and $x_{\mathsf{C}} \in [d]^{|\mathsf{C}|}$. If the ground state $\ket{\psi}$ satisfies the conditional independence \emph{exactly}, then it would be equal to the state $\ket{\phi}$. The proof of area law includes two steps:
\begin{enumerate}
    \item First, we show that the entropies of the reduced states $$\rho_{\mathsf{A}}=\Tr_{\mathsf{BC}}(\ketbra{\psi}{\psi})\quad \text{and}\quad \sigma_{\mathsf{A}}=\Tr_{\mathsf{BC}}(\ketbra{\phi}{\phi})$$ are close to each other.
    \item We then prove that the entropy of the reduced state $\sigma_{\mathsf{A}}$ is bounded by $S(\sigma_{\mathsf{A}})\leq \mathcal{O}(|\mathsf{B}|)$.
\end{enumerate}
By combining these two steps, we arrive at the area law for the entanglement entropy $S(\rho_{\mathsf{A}})$.

\paragraph{Step 1:} We begin by bounding the  overlap $\braket{\phi}{\psi}$ 

\begin{align}
    \braket{\phi}{\psi}&= \E_{\bm{x}_{\mathsf{B}}}\left[\sum_{x_{\mathsf{A}},x_\mathsf{C}}  \sqrt{p(x_{\mathsf{A}},x_{\mathsf{C}}|\bm{x}_{\mathsf{B}})}\cdot \sqrt{p(x_A|\bm{x}_B)\cdot p(x_C|\bm{x}_B)}\right]\nonumber\\
    &=\E_{\bm{x}_{\mathsf{B}}} F^{1/2}\left(p_{{\mathsf{A}},{\mathsf{C}}|\bm{x}_{\mathsf{B}}},p_{{\mathsf{A}}|\bm{x}_\mathsf{B}}\otimes p_{\mathsf{C}|\bm{x}_\mathsf{B}}\right) && \text{(by definition of fidelity between distributions)}\nonumber \\
    &\geq 1-\frac{1}{2}\cdot \E_{\bm{x}_{\mathsf{B}}} \norm{p_{{\mathsf{A}},{\mathsf{C}}|\bm{x}_{\mathsf{B}}}-p_{{\mathsf{A}}|\bm{x}_{\mathsf{B}}}\otimes p_{{\mathsf{C}}|\bm{x}_{\mathsf{B}}}}_1 && \text{(using $F^{1/2}(p,q)\geq 1-1/2\cdot\norm{p-q}_1$)}\nonumber\\
    &\geq 1-\E_{\bm{x}_{\mathsf{B}}} \sqrt{\frac{1}{2} D_{\mathrm{KL}}(p_{{\mathsf{A}}, {\mathsf{C}}|\bm{x}_{\mathsf{B}}}, p_{{\mathsf{A}}|\bm{x}_{\mathsf{B}}}\otimes p_{{\mathsf{C}}|\bm{x}_{\mathsf{B}}})} &&\text{(Pinsker's inequality)} \nonumber\\
    &\geq 1-\sqrt{ \frac{1}{2}\E_{\bm{x}_B} D_{\mathrm{KL}}(p_{{\mathsf{A}},{\mathsf{C}}|\bm{x}_{\mathsf{B}}},p_{X_{\mathsf{A}}|\bm{x}_{\mathsf{B}}}\otimes p_{{\mathsf{C}}|\bm{x}_{\mathsf{B}}})} &&\text{(Jensen's inequality)}\nonumber\\
    &=1-\sqrt{\frac{1}{2} I(\mathsf{A},\mathsf{C}|\mathsf{B})} &&\text{(definition of CMI)} \nonumber\\
    &\geq 1-e^{-\Omega(\dist(\mathsf{A},\mathsf{C}))}.
\end{align}
We can now turn this lower bound on the overlap $\norm{\rho_{\mathsf{A}}-\sigma_{\mathsf{A}}}_1$ to an upper bound on the $\ell_1$-distance between the reduced states $\rho_{\mathsf{A}}=\Tr_{{\mathsf{BC}}}(\ketbra{\psi}{\psi})$ and $\sigma_{\mathsf{A}}=\Tr_{{\mathsf{BC}}}(\ketbra{\phi}{\phi})$:
\begin{equation}
    \norm{\rho_{\mathsf{A}}-\sigma_{\mathsf{A}}}_1 \leq \norm{\ketbra{\psi}{\psi}-\ketbra{\phi}{\phi}}_1 \leq 2\cdot \sqrt{1-|\braket{\phi}{\psi}|^2}\leq e^{-\Omega(\dist({\mathsf{A}},{\mathsf{C}}))}.\label{eq:DistanceOfRhoSigma}
\end{equation}
It follows from the Fannes' inequality that when $\norm{\rho_{\mathsf{A}}-\sigma_{\mathsf{A}}}_1\leq 1/e$, the difference between the entropies $|S(\rho_{\mathsf{A}})-S(\sigma_{\mathsf{A}})|$ is bounded by
\begin{equation}
    |S(\rho_{\mathsf{A}})-S(\sigma_{\mathsf{A}})|\leq \log(d^{|{\mathsf{A}}|}) \cdot \norm{\rho_{\mathsf{A}}-\sigma_{\mathsf{A}}}_1-\frac{1}{\ln(2)} \norm{\rho_{\mathsf{A}}-\sigma_{\mathsf{A}}}_1 \cdot \log\norm{\rho_{\mathsf{A}}-\sigma_{\mathsf{A}}}_1.\label{eq:fannes}
\end{equation}
Let $\dist({\mathsf{A}},{\mathsf{C}})=\mathcal{O}\left(\log(|{\mathsf{A}}|\cdot\log d)\right)$, then by combining the two bounds \eqref{eq:DistanceOfRhoSigma} and \eqref{eq:fannes}, we get
\begin{equation}
    |S(\rho_{\mathsf{A}})-S(\sigma_{\mathsf{A}})| \leq \frac{1}{\poly(|{\mathsf{A}}|)}\label{eq:boundDifferenceOfEntropies}.
\end{equation}

\paragraph{Step 2:} Next, we upper bound the entropy $S(\sigma_{\mathsf{A}})$ which entails finding the following expression for the reduced state $\sigma_{\mathsf{A}}$:
\begin{align}
    \sigma_{\mathsf{A}}&=\E_{\bm{x}_{\mathsf{B}}} \ketbra{\alpha_{\mathsf{A}}^{\bm{x}_{\mathsf{B}}}}{\alpha_{\mathsf{A}}^{\bm{x}_{\mathsf{B}}}},
\end{align}
in terms of the pure states 
\begin{align}
\ket{\alpha_{\mathsf{A}}^{\bm{x}_{\mathsf{B}}}}:=\sum_{x_{\mathsf{A}}}\sqrt{p(x_{\mathsf{A}}|\bm{x}_{\mathsf{B}})}\cdot \ket{x_{\mathsf{A}}}.
\end{align}
This expression for $\sigma_{\mathsf{A}}$ can be derived by a direct calculation:
\begin{align}
    \sigma_{\mathsf{A}}&=\sum_{x_{\mathsf{A}},y_{\mathsf{A}},x_{\mathsf{C}}} \E_{\bm{x}_{\mathsf{B}}}\left[\sqrt{p(x_{\mathsf{A}}|\bm{x}_{\mathsf{B}})\cdot p(x_{\mathsf{C}}|\bm{x}_{\mathsf{B}})}\cdot \sqrt{p(y_{\mathsf{A}}|\bm{x}_{\mathsf{B}})\cdot p(x_{\mathsf{C}}|\bm{x}_{\mathsf{B}})}\right]\cdot \ketbra{x_{\mathsf{A}}}{y_{\mathsf{A}}}\nonumber\\
    &=\sum_{x_{\mathsf{A}},y_{\mathsf{A}}} \E_{\bm{x}_{\mathsf{B}}}\left[\sqrt{p(x_{\mathsf{A}}|\bm{x}_{\mathsf{B}})}\cdot \sqrt{p(y_{\mathsf{A}}|\bm{x}_{\mathsf{B}})}\cdot \sum_{x_{\mathsf{C}}} p(x_{\mathsf{C}}|\bm{x}_{\mathsf{B}})\right]\cdot \ketbra{x_{\mathsf{A}}}{y_{\mathsf{A}}}\nonumber\\
    &=\E_{\bm{x}_{\mathsf{B}}} \ketbra{\alpha_{\mathsf{A}}^{\bm{x}_{\mathsf{B}}}}{\alpha_{\mathsf{A}}^{\bm{x}_{\mathsf{B}}}},
\end{align}
The von Neumann entropy of this mixture $\sigma_{\mathsf{A}}$ can be bounded by 
\begin{align}
    S(\sigma_{\mathsf{A}})&=S\left(\E_{\bm{x}_{\mathsf{B}}} \ketbra{\alpha_{\mathsf{A}}^{\bm{x}_{\mathsf{B}}}}{\alpha_{\mathsf{A}}^{\bm{x}_{\mathsf{B}}}}\right)\nonumber\\
    &\leq H(\mathsf{B})+\E_{\bm{x}_{\mathsf{B}}} S(\ketbra{\alpha_{\mathsf{A}}^{\bm{x}_{\mathsf{B}}}}{\alpha_{\mathsf{A}}^{\bm{x}_{\mathsf{B}}}}) \nonumber\\
    &\leq |{\mathsf{B}}| \cdot \log d.
\end{align}
This bound on $S(\sigma_{\mathsf{A}})$ along with the choice of $\dist({\mathsf{A}},{\mathsf{C}})=\mathcal{O}\left(\log(|{\mathsf{A}}|\cdot \log d)\right)$ that leads to the bound in  \eqref{eq:boundDifferenceOfEntropies} imply 
\begin{align}
    S(\rho_{\mathsf{A}})&\leq S(\sigma_{\mathsf{A}})+\frac{1}{\poly(|{\mathsf{A}}|)} \nonumber\\
    &\leq \mathcal{O}\left(|\partial {\mathsf{A}}|\cdot \log d \cdot \dist({\mathsf{A}},{\mathsf{C}})\right)\nonumber\\
    &\leq \mathcal{O}\left(|\partial {\mathsf{A}}| \cdot \log d \cdot \log \left(|{\mathsf{A}}| \cdot \log d\right)\right).
\end{align}
\end{proof}

\section{Conditional independence in one-dimensional systems}\label{sec:CMIin1D}

Here, we establish several results concerning the decay of conditional mutual information for one-dimensional gapped Hamiltonians.
We begin by formally defining entanglement swapping in \secref{swappingprelim}. 
Next, we prove the optimality of the Bell basis for entanglement swapping with qubits, as presented in \secref{OptimalEntanglementSwapping}, and extend this result to the rotated cluster state in \secref{CMIChainQubits}. 
Finally, we analyze shallow random quantum circuits in 1D, demonstrating that they exhibit short-range conditional correlations in \secref{CMIShallowCkt}.

\subsection{Preliminaries on measures of entanglement}

\begin{definition}[Concurrence of two-qubit states]\label{def:Concurrence}
Given a pure two-qubit state $\ket{\phi}_{\mathsf{AB}} = a \ket{0 0} + b \ket{0 1} + c \ket{1 0} + d \ket{1 1}$, we define its concurrence $C(\phi_{\mathsf{AB}})$ by 
\begin{align}
    C(\phi_{\mathsf{AB}}) = 2\cdot |a d - b c|.
\end{align}
We can equivalently express the concurrence in terms of the reduced state $\rho_{\mathsf{A}}$ on subsystem $\mathsf{A}$ as 
\begin{align}
    C(\phi_{\mathsf{AB}}) = \sqrt{2\cdot \left(\Tr(\rho_{\mathsf{A}})^2 - \Tr(\rho_{\mathsf{A}}^2)\right)}.\label{eq:concurrenceExpression}
\end{align}
More generally, the concurrence of a two-qubit mixed state $\rho$ is defined by
    \begin{align}
        C(\rho_{\mathsf{AB}}) = \inf \sum_{i}p_i \cdot C\left((\phi_i)_{\mathsf{AB}}\right)\label{eq:concurrenceMixed}
    \end{align}
    where the infimum is over all mixtures of pures states $\{p_i, (\phi_i)_{\mathsf{AB}}\}$ such that $\rho_{\mathsf{AB}} = \sum_{i} p_i \cdot (\phi_i)_{\mathsf{AB}}$.
\end{definition}
\begin{fact}[Closed form formula for concurrence, cf. \cite{Woo98Formation}]\label{fact:concurrenceFormula}
    Given a two-qubit state $\rho$, define 
    \begin{align}
        \widetilde{\rho} = (Y \otimes Y) \cdot \rho^* \cdot (Y \otimes Y)
    \end{align}
    where $\rho^*$ is the complex conjugate of $\rho$ in the computational basis. 
    Let $\alpha_1 \geq \alpha_2 \geq \alpha_3 \geq\alpha_4$ be the eigenvalues of the Hermitian operator $\sqrt{\sqrt{\rho} \cdot \widetilde{\rho}\cdot \sqrt{\rho}}$.
    The concurrence of $\rho$ satisfies 
    \begin{align}
        C(\rho_{\mathsf{AB}}) = \max\{0, \alpha_1 - \alpha_2 - \alpha_3 - \alpha_4\}.
    \end{align}
\end{fact}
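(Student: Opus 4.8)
The plan is to reconstruct Wootters' argument \cite{Woo98Formation} starting from the convex-roof definition in \defref{Concurrence}, reducing the minimization over pure-state ensembles to a spectral problem for the spin-flipped operator $\widetilde{\rho}$. The first step is the pure-state case. Writing the spin-flipped vector $\ket{\widetilde{\phi}} := (Y\otimes Y)\ket{\phi^*}$, a direct computation for $\ket{\phi} = a\ket{00}+b\ket{01}+c\ket{10}+d\ket{11}$ gives $|\braket{\phi}{\widetilde{\phi}}| = 2|ad-bc| = C(\phi_{\mathsf{AB}})$, in agreement with \defref{Concurrence}. Consequently, for any subnormalized ensemble $\{\ket{\psi_i}\}$ with $\rho_{\mathsf{AB}} = \sum_i \ketbra{\psi_i}{\psi_i}$, the weighted average concurrence collapses to $\sum_i |\braket{\psi_i}{\widetilde{\psi_i}}|$, and $C(\rho_{\mathsf{AB}})$ is the minimum of this quantity over all such ensembles.

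Next I would parametrize the ensembles. By the Schr\"odinger--Hughston--Jozsa--Wootters mixture theorem, every decomposition arises as $\ket{\psi_i} = \sum_j U_{ij}\ket{w_j}$, where $\{\ket{w_j}\}_{j=1}^4$ is the fixed eigen-ensemble of $\rho_{\mathsf{AB}}$ (eigenvectors scaled by the square roots of the eigenvalues) and $U$ has orthonormal columns. Introducing the complex \emph{symmetric} matrix $\tau_{jk} := \braket{w_j}{\widetilde{w_k}}$, the spin-flip overlaps become $\braket{\psi_i}{\widetilde{\psi_i}} = (U\tau U^{T})_{ii}$ (up to an irrelevant overall complex conjugation of $U$, which preserves the isometry property), so the objective to minimize is $\sum_i |(U\tau U^{T})_{ii}|$. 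The symmetry of $\tau$ is essential: the Autonne--Takagi factorization yields a unitary $V$ with $\tau = V\,\mathrm{diag}(\lambda_1,\dots,\lambda_4)\,V^{T}$ and $\lambda_1\geq\cdots\geq\lambda_4\geq 0$. Absorbing $V$ into $W := UV$ (still an isometry, so $\sum_i|W_{ik}|^2 = 1$ for each $k$) converts the objective into $\sum_i \bigl|\sum_k W_{ik}^2\,\lambda_k\bigr|$.

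It then remains to solve the scalar optimization $\min_W \sum_i \bigl|\sum_k W_{ik}^2\lambda_k\bigr|$, which I expect to be the main obstacle, as it is the technical core of the argument. The lower bound follows from the reverse triangle inequality: for each row, $\bigl|\sum_k W_{ik}^2\lambda_k\bigr| \geq \lambda_1|W_{i1}|^2 - \sum_{k\geq 2}\lambda_k|W_{ik}|^2$, and summing over $i$ together with the column-normalization $\sum_i|W_{ik}|^2 = 1$ gives $\sum_i|\cdots| \geq \lambda_1-\lambda_2-\lambda_3-\lambda_4$; since the left side is also $\geq 0$, it is bounded below by $\max\{0,\lambda_1-\lambda_2-\lambda_3-\lambda_4\}$. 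Matching achievability is the delicate part: when $\lambda_1 \geq \lambda_2+\lambda_3+\lambda_4$ one builds a decomposition whose row-sums $\sum_k W_{ik}^2\lambda_k$ are all real with a common sign so that the triangle inequality is saturated, while when $\lambda_1 < \lambda_2+\lambda_3+\lambda_4$ one distributes relative phases (roots of unity) among the $W_{ik}$ so that every row-sum vanishes, yielding average concurrence zero.

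Finally I would identify $\lambda_i$ with $\alpha_i$. The Takagi values $\lambda_i$ are the nonnegative square roots of the eigenvalues of $\tau\tau^{\dagger}$, and a short computation in the eigenbasis of $\rho_{\mathsf{AB}}$ shows that $\tau\tau^{\dagger}$ is similar to $\rho\widetilde{\rho}$ and hence shares its spectrum; thus the $\lambda_i^2$ are the eigenvalues of $\rho\widetilde{\rho}$. These coincide with the eigenvalues of $\sqrt{\rho}\,\widetilde{\rho}\,\sqrt{\rho}$, a positive semidefinite operator similar to $\rho\widetilde{\rho}$, whose nonnegative square roots are precisely the $\alpha_i$ of the statement. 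Therefore $\lambda_i = \alpha_i$ and $C(\rho_{\mathsf{AB}}) = \max\{0,\alpha_1-\alpha_2-\alpha_3-\alpha_4\}$, completing the proof.
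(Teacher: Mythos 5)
Your proposal is correct and follows exactly the argument of Wootters that the paper invokes by citation for this Fact (the paper itself gives no proof, stating it as an imported result): pure-state spin-flip overlap, HJW parametrization of ensembles, Takagi factorization of the symmetric matrix $\tau$, the reverse-triangle lower bound with column normalization, and the identification of the Takagi values with the $\alpha_i$ via the similarity of $\tau\tau^{\dagger}$, $\rho\widetilde{\rho}$, and $\sqrt{\rho}\,\widetilde{\rho}\,\sqrt{\rho}$. The only step left schematic is the achievability construction (equalizing real same-sign diagonal entries by a real orthogonal mixing when $\lambda_1\geq\lambda_2+\lambda_3+\lambda_4$, and cancelling them with phases otherwise), but your sketch matches Wootters' construction and is the standard way to close the argument.
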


\begin{fact}[Entanglement entropy in terms of concurrence, cf. \cite{Wootters97PairQubits}]\label{fact:EntropyConcurrence}
    Given a two-qubit state $\ket{\phi}_{\mathsf{AB}}$, let $\phi_{\mathsf{A}}$ be the reduced state on subsystem $\mathsf{A}$.
    The entanglement entropy $S(\phi_{\mathsf{A}})$ admits the following expression in terms of concurrence $C(\phi_{\mathsf{AB}})$:
    \begin{align}
        S(\phi_{\mathsf{A}}) = h\left( \frac{1 + \sqrt{1- C(\phi_{\mathsf{AB}})^2}}{2}\right).
    \end{align}
    Here $h(x)$ is the binary Shannon entropy defined by
    \begin{align}
        h(x) = - x \log(x) - (1-x)\log(1-x).\label{eq:binaryEntropy}
    \end{align} 
\end{fact}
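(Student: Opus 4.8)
The plan is to exploit the simple spectral structure of a single qubit. Since $\ket{\phi}_{\mathsf{AB}}$ is a \emph{pure} two-qubit state, its reduced density matrix $\phi_{\mathsf{A}}$ is a single-qubit state, whose spectrum consists of exactly two eigenvalues, $\lambda$ and $1-\lambda$. Consequently the entanglement entropy is simply the binary entropy of one eigenvalue, $S(\phi_{\mathsf{A}}) = h(\lambda)$, with $h$ as in \eqref{eq:binaryEntropy}. The entire task therefore reduces to expressing the single free parameter $\lambda$ in terms of the concurrence $C := C(\phi_{\mathsf{AB}})$.

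First I would invoke the purity form of the concurrence recorded in \eqref{eq:concurrenceExpression}, which gives $C^2 = 2\left(\Tr(\phi_{\mathsf{A}})^2 - \Tr(\phi_{\mathsf{A}}^2)\right)$. Using $\Tr(\phi_{\mathsf{A}}) = 1$ this becomes $C^2 = 2\left(1 - \Tr(\phi_{\mathsf{A}}^2)\right)$. Substituting the two-eigenvalue form $\Tr(\phi_{\mathsf{A}}^2) = \lambda^2 + (1-\lambda)^2 = 1 - 2\lambda(1-\lambda)$ then yields the clean identity $C^2 = 4\lambda(1-\lambda)$.

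Finally I would solve this relation for $\lambda$. Rearranging gives the quadratic $4\lambda^2 - 4\lambda + C^2 = 0$, whose roots are $\lambda = \tfrac{1}{2}\left(1 \pm \sqrt{1 - C^2}\right)$; these are precisely the two eigenvalues of $\phi_{\mathsf{A}}$. Because the binary entropy satisfies $h(\lambda) = h(1-\lambda)$, both roots produce the same value, and I would conclude $S(\phi_{\mathsf{A}}) = h\left(\tfrac{1}{2}\left(1 + \sqrt{1 - C^2}\right)\right)$, as claimed. There is no substantive obstacle here; the only points deserving care are the algebraic identity $\Tr(\phi_{\mathsf{A}}^2) = 1 - 2\lambda(1-\lambda)$ and the observation that $0 \le C \le 1$ guarantees $\sqrt{1 - C^2}$ is real and that both roots lie in $[0,1]$, so $\lambda$ is a legitimate probability. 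If one prefers to begin from the determinant form $C = 2|ad - bc|$ of \defref{Concurrence} rather than \eqref{eq:concurrenceExpression}, an equally short preliminary step is to verify that $2|ad-bc|$ equals $\sqrt{2\left(1 - \Tr(\phi_{\mathsf{A}}^2)\right)}$ by a direct computation on the amplitudes $a,b,c,d$.
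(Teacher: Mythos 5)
Your proposal is correct. The paper states this result as a Fact with a citation to Wootters and provides no proof of its own, so there is no in-paper argument to compare against; your derivation is the standard elementary one and checks out. The key identities all verify: for a pure two-qubit state the reduced state $\phi_{\mathsf{A}}$ has eigenvalues $\lambda$ and $1-\lambda$, so $S(\phi_{\mathsf{A}})=h(\lambda)$; the purity form \eqref{eq:concurrenceExpression} gives $C^2 = 2\bigl(1-\Tr(\phi_{\mathsf{A}}^2)\bigr) = 4\lambda(1-\lambda)$ (equivalently $C = 2\sqrt{\det\phi_{\mathsf{A}}} = 2|ad-bc|$, reconciling the two forms in \defref{Concurrence}); and solving the quadratic together with $h(\lambda)=h(1-\lambda)$ yields the claimed formula. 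Your cautionary remarks — that $0\le C\le 1$ makes $\sqrt{1-C^2}$ real and both roots legitimate probabilities — are the right ones, and nothing further is needed.
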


\begin{definition}[Entanglement of formation]\label{def:EntanglementFormation}
    The Entanglement of formation for a state $\rho_{\mathsf{AB}}$ is defined~by
    \begin{align}
        E_F(\rho_{\mathsf{AB}}) = \inf \sum_i p_i \cdot S\left((\phi_i)_{\mathsf{A}}\right).
    \end{align}
    The infimum in this expression is over all mixtures of pure states $\{p_i , (\phi_i)_{\mathsf{AB}}\}$ such that $\rho_{\mathsf{AB}} = \sum_i p_i \cdot (\phi_i)_{\mathsf{AB}}$.
\end{definition}
\begin{fact}[Entanglement of formation in terms of concurrence, cf. \cite{Woo98Formation}]\label{fact:EntanglementFormation}
    The entanglement of formation of a two-qubit state $\rho_{\mathsf{AB}}$ satisfies
    \begin{align}
        E_{F}(\rho_{\mathsf{AB}}) = h\left( \frac{1 + \sqrt{1- C(\rho_{\mathsf{AB}})^2}}{2}\right).
    \end{align}
    where $h(\cdot)$ is the binary Shannon entropy in \eqref{eq:binaryEntropy} and the concurrence $C(\rho_{\mathsf{AB}})$ is defined by the expression~\eqref{eq:concurrenceMixed} in \defref{Concurrence}.
\end{fact}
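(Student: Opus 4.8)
The plan is to reduce the whole statement to the single-variable function $\mathcal{E}(C) := h\!\big((1+\sqrt{1-C^2})/2\big)$ and then combine a convexity lower bound with a Wootters-style optimal-decomposition upper bound. By Fact~\ref{fact:EntropyConcurrence}, every pure two-qubit state $\ket{\phi}$ satisfies $S(\phi_{\mathsf{A}}) = \mathcal{E}(C(\phi_{\mathsf{AB}}))$, so that the definition of entanglement of formation becomes the infimum of $\sum_i p_i\, \mathcal{E}(C(\phi_i))$ and the definition~\eqref{eq:concurrenceMixed} of mixed-state concurrence becomes the infimum of $\sum_i p_i\, C(\phi_i)$, both over all decompositions $\rho_{\mathsf{AB}} = \sum_i p_i (\phi_i)_{\mathsf{AB}}$. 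First I would record two elementary facts about $\mathcal{E}$ on $[0,1]$: it is monotonically increasing and convex. Both follow from a direct computation of $\mathcal{E}'$ and $\mathcal{E}''$, and this is the only calculus input the argument needs.

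For the lower bound, fix any decomposition $\{p_i, \phi_i\}$. Convexity of $\mathcal{E}$ and Jensen's inequality give $\sum_i p_i\, \mathcal{E}(C(\phi_i)) \ge \mathcal{E}\!\big(\sum_i p_i\, C(\phi_i)\big)$, and since $\sum_i p_i\, C(\phi_i) \ge C(\rho_{\mathsf{AB}})$ by~\eqref{eq:concurrenceMixed}, monotonicity of $\mathcal{E}$ yields $\sum_i p_i\, S((\phi_i)_{\mathsf{A}}) \ge \mathcal{E}(C(\rho_{\mathsf{AB}}))$. Taking the infimum over decompositions gives $E_F(\rho_{\mathsf{AB}}) \ge \mathcal{E}(C(\rho_{\mathsf{AB}}))$.

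The matching upper bound is the substantive part and is where I expect the main difficulty. It suffices to exhibit a single decomposition in which every pure state has the \emph{same} concurrence $C(\phi_i) = C(\rho_{\mathsf{AB}})$ — an \emph{isoconcurrence} decomposition — since then $\sum_i p_i\, \mathcal{E}(C(\phi_i)) = \mathcal{E}(C(\rho_{\mathsf{AB}}))$ bounds $E_F$ from above. Following Wootters, I would build this in two stages. Starting from the spectral decomposition of $\rho_{\mathsf{AB}}$, pass to subnormalized eigenvectors and form the complex-symmetric matrix $\tau_{ij} = \langle w_i | \tilde w_j\rangle$ built from the spin-flip $\ket{\tilde\phi} = (Y\otimes Y)\ket{\phi^*}$; its Takagi (Autonne) factorization produces a ``magic'' decomposition whose associated singular values are precisely the eigenvalues $\alpha_1 \ge \cdots \ge \alpha_4$ of Fact~\ref{fact:concurrenceFormula}, so that the minimal average concurrence over all decompositions equals $\max\{0, \alpha_1 - \alpha_2 - \alpha_3 - \alpha_4\} = C(\rho_{\mathsf{AB}})$.

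Then, exploiting the unitary freedom in pure-state decompositions — every decomposition of $\rho_{\mathsf{AB}}$ is obtained from the magic one by a right-unitary recombination of the subnormalized states — I would apply real rotations that redistribute concurrence among the members while leaving both $\rho_{\mathsf{AB}}$ and the average concurrence invariant, driving all individual concurrences to the common value $C(\rho_{\mathsf{AB}})$. The delicate point, and the crux of the whole argument, is verifying that such an equalizing rotation always exists when the average concurrence is nonnegative; this rests on a continuity / intermediate-value argument applied to the \emph{preconcurrences} $\langle \psi_i | \tilde\psi_j\rangle$ of the rotated ensemble, with the boundary case $C(\rho_{\mathsf{AB}}) = 0$ handled separately. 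Combining the two bounds gives the claimed identity.
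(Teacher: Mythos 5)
The paper does not prove this statement: it is imported verbatim as a ``Fact'' with a citation to Wootters' 1998 paper, so there is no in-paper proof to compare against. Your sketch correctly reconstructs Wootters' original argument: the lower bound $E_F(\rho_{\mathsf{AB}}) \ge h\bigl((1+\sqrt{1-C(\rho_{\mathsf{AB}})^2})/2\bigr)$ via monotonicity and convexity of $\mathcal{E}(C)$ together with Jensen's inequality, and the upper bound via an isoconcurrence decomposition obtained from the Takagi factorization of the matrix $\tau_{ij}=\langle w_i|\tilde w_j\rangle$ followed by real-orthogonal recombinations that equalize the preconcurrences while preserving their sum. Two remarks. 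First, you correctly identify that the entire difficulty sits in the equalization step (showing the diagonal of $O\tau O^{T}$ can be made constant in phase and modulus via successive $2\times 2$ rotations and an intermediate-value argument); you flag this but do not carry it out, which is the one genuine gap if this were to stand as a self-contained proof rather than a reconstruction of a cited result. Second, for the upper bound you need the common concurrence of the equalized decomposition to coincide with the infimum defining $C(\rho_{\mathsf{AB}})$ in \eqref{eq:concurrenceMixed}; this is exactly the content of \factref{concurrenceFormula} (itself also cited to Wootters rather than proved), and you should state explicitly that you are invoking it there, since otherwise you have only produced a decomposition with common concurrence $\max\{0,\alpha_1-\alpha_2-\alpha_3-\alpha_4\}$ without knowing that this value is the infimum. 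With those two points made explicit, the outline is sound and matches the standard proof in the literature.
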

\subsection{Entanglement swapping preliminaries}\label{sec:swappingprelim}

\begin{definition}[Bell basis]\label{def:bell-basis}
Let $d \geq 2$ be an integer,
and write $\omega = e^{2 \pi i / d}$.
The \emph{generalized Bell basis} is the set which contains a vector $\Bell{a}{b}$ in $(\mathbb{C}^d)^{\ot 2}$ for each $a, b \in \{0, \ldots, d-1\}$ defined as
\begin{equation*}
\Bell{a}{b} = \frac{1}{\sqrt{d}} \cdot \sum_{x = 0}^{d-1} \omega^{a \cdot x} \ket{x, x+b},
\end{equation*}
where addition is modulo~$d$.
\end{definition}

By Definition~\ref{def:bell-basis},
the~$d$ Schmidt coefficients of $\Bell{a}{b}$ are all~$1/\sqrt{d}$.
Hence, all of the Bell basis states are maximally entangled.
A particular example of such states is the EPR state in $\mathbb{C}^d$ given by
\begin{equation*}
\epr{d} := \Bell{0}{0} = \frac{1}{\sqrt{d}} \cdot \sum_{x=0}^{d-1} \ket{x,x}.
\end{equation*}
Before proceeding to introducing the entanglement swapping setup, we state and prove the following proposition which gives a decomposition of the four-party state $\ket{\psi} = \ket{a}_{\mathsf{AB}}\otimes \ket{b}_{\mathsf{CD}}$  in terms of the Bell basis.
\begin{proposition}[General state in Bell basis]\label{prop:general-state-in-bell-basis}
Consider the following two-qubit states $\ket{a}_{\mathsf{AB}}$ and $\ket{b}_{\mathsf{CD}}$:
\begin{equation}\label{eq:general-psi}
\ket{a}_{\mathsf{AB}} \otimes \ket{b}_{\mathsf{CD}}
= \Big(\sum_{i=0}^{d-1} a_i \cdot \ket{i, i}_{\mathsf{AB}}\Big)
	\ot \Big(\sum_{j=0}^{d-1} b_j \cdot \ket{j, j}_{\mathsf{CD}}\Big).
\end{equation}
Then, we have
\begin{equation*}
\ket{a}_{\mathsf{AB}}\otimes \ket{b}_{\mathsf{CD}}
= \frac{1}{\sqrt{d}} \cdot \sum_{i, j=0}^{d-1} \Bell{i}{j}_{\mathsf{BC}}
	\ot \Big(\sum_{k=0}^{d-1} \omega^{-i \cdot k}\cdot a_k\cdot  b_{k + j} \cdot \ket{k, k+j} \Big)_{\mathsf{AD}}.
\end{equation*}
In particular, it holds that
\begin{equation}
\epr{d}_{\mathsf{A}\mathsf{B}}
\otimes \epr{d}_{\mathsf{C}\mathsf{D}}
= \frac{1}{d}\cdot \sum_{a,b=0}^{d-1} \Bell{a}{b}_{\mathsf{A}\mathsf{D}}
	\otimes \Bell{-a}{b}_{\mathsf{B}\mathsf{C}}.\label{eq:EPRinBellBasis}
\end{equation}
\end{proposition}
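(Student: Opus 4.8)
The plan is to expand the product state in the computational basis and then resolve the $\mathsf{BC}$ register into the Bell basis, with everything else left untouched. First I would read the amplitudes off \eqref{eq:general-psi} to write
\[
\ket{a}_{\mathsf{AB}}\otimes \ket{b}_{\mathsf{CD}}
= \sum_{k,l=0}^{d-1} a_k\, b_l \,\ket{k}_{\mathsf{A}}\ket{k}_{\mathsf{B}}\ket{l}_{\mathsf{C}}\ket{l}_{\mathsf{D}}.
\]
The single nontrivial ingredient is the inversion of \defref{bell-basis}: since the generalized Bell basis is an orthonormal basis of $(\mathbb{C}^d)^{\otimes 2}$, every computational basis vector on $\mathsf{BC}$ admits the expansion
\[
\ket{x}_{\mathsf{B}}\ket{y}_{\mathsf{C}}
= \frac{1}{\sqrt{d}}\sum_{i=0}^{d-1}\omega^{-i x}\,\Bell{i}{y-x}_{\mathsf{BC}},
\]
with the second label taken modulo $d$.

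I would verify this inverse by computing a single inner product against $\Bell{i}{j}$, namely $\braket{\mathsf{Bell}_{i,j}}{x,y} = \tfrac{1}{\sqrt{d}}\,\omega^{-i x}\,[\,j = y-x\,]$, which is an immediate discrete-Fourier calculation from \defref{bell-basis}, and then summing over $(i,j)$. Substituting this identity into the double sum above, with $x=k$ and $y=l$, produces a factor $\Bell{i}{l-k}_{\mathsf{BC}}$ together with a phase $\omega^{-ik}$. Reindexing by $j := l-k$ (so that $l = k+j$ mod $d$) converts the sum over $(k,l,i)$ into one over $(k,i,j)$, and collecting the surviving $\mathsf{A}$ and $\mathsf{D}$ kets yields
\[
\ket{a}_{\mathsf{AB}}\otimes \ket{b}_{\mathsf{CD}}
= \frac{1}{\sqrt{d}}\sum_{i,j=0}^{d-1}\Bell{i}{j}_{\mathsf{BC}}
\otimes \Big(\sum_{k=0}^{d-1}\omega^{-i k}\,a_k\, b_{k+j}\,\ket{k,k+j}\Big)_{\mathsf{AD}},
\]
which is the claimed general formula.

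For the EPR specialization I would set $a_k = b_l = 1/\sqrt{d}$, so that $a_k\,b_{k+j} = 1/d$ and the parenthesized $\mathsf{AD}$ factor collapses to $\tfrac{1}{d}\sum_k \omega^{-ik}\ket{k,k+j} = \tfrac{1}{\sqrt{d}}\,\Bell{-i}{j}_{\mathsf{AD}}$, again directly from \defref{bell-basis}. Plugging this back and then reindexing $a=-i$, $b=j$ reproduces \eqref{eq:EPRinBellBasis}. There is no genuine obstacle here; the whole argument is a change of basis on two of the four registers. The only point demanding care is the modular index bookkeeping — correctly tracking the $y-x$ label shift in the inverse expansion, handling the substitution $l=k+j$ mod $d$, and recognizing the residual $\mathsf{AD}$ vector as the Bell state $\Bell{-i}{j}$ (with the reflected first label) rather than $\Bell{i}{j}$.
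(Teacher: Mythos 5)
Your proposal is correct and amounts to essentially the same computation as the paper's proof: both rest on the Bell-basis/computational-basis change of basis on the $\mathsf{BC}$ register and the discrete-Fourier orthogonality $\tfrac{1}{d}\sum_i \omega^{i(\ell-k)} = [\ell = k]$, with correct modular bookkeeping throughout. The only difference is direction — you derive the expansion by inserting the inverse Bell resolution $\ket{x,y} = \tfrac{1}{\sqrt{d}}\sum_i \omega^{-ix}\Bell{i}{y-x}$ into the left-hand side, whereas the paper expands the claimed right-hand side and collapses it back to the product state — and your treatment of the EPR specialization (identifying the residual $\mathsf{AD}$ vector as $\Bell{-i}{j}$) is likewise accurate.
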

\begin{proof}
We begin by expanding the right-hand side.
\begin{align*}
&\frac{1}{\sqrt{d}} \cdot \sum_{i, j} \Bell{i}{j}_{\mathsf{BC}}
	\ot \Big(\sum_{k} \omega^{-i \cdot k}\cdot a_k\cdot  b_{k + j} \cdot \ket{k, k+j} \Big)_{\mathsf{AD}}\\
={}&\frac{1}{\sqrt{d}} \cdot \sum_{i, j} \Big(\frac{1}{\sqrt{d}} \cdot \sum_{\ell=0}^{d-1} \omega^{i \cdot \ell} \ket{\ell, \ell+j}\Big)
	\ot \Big(\sum_{k} \omega^{-i \cdot k}\cdot a_k\cdot  b_{k + j} \cdot \ket{k, k+j} \Big)\\
={}&\frac{1}{d} \cdot \sum_{k, \ell} \sum_{i, j}  \omega^{i \cdot (\ell - k)} \cdot a_k \cdot b_{k+j}\cdot
\ket{\ell, \ell+j}\otimes \ket{k, k+j} \\
={}& \sum_{k} \sum_{j}   a_k \cdot b_{k+j}\cdot  \ket{k, k+j}\otimes \ket{k, k+j} \\
={}& \sum_{k} \sum_{\ell}   a_k \cdot b_{\ell}\cdot  \ket{k, \ell}_{\mathsf{AC}}\otimes \ket{k, \ell}_{\mathsf{BD}} \\
={}& \ket{a}_{\mathsf{AB}}\otimes \ket{b}_{\mathsf{CD}}.
\end{align*}
This completes the proof.
\end{proof}
To understand equation \eqref{eq:EPRinBellBasis} intuitively, suppose we are given $\epr{d}_{\mathsf{A}\mathsf{B}}
\otimes \epr{d}_{\mathsf{C}\mathsf{D}}$
and we measure subsystems $\mathsf{B}$ and~$\mathsf{C}$
in the Bell basis.
By \propref{general-state-in-bell-basis},
if we receive outcome $\Bell{a}{b}$,
then the state in subsystems~$\mathsf{A}$ and~$\mathsf{D}$
collapses to $\Bell{-a}{b}$.
Thus, we have exchanged a pair of maximally entangled states between subsystems~$\mathsf{A}$ and~$\mathsf{B}$
and between~$\mathsf{C}$ and~$\mathsf{D}$
with a pair between~$\mathsf{A}$ and~$\mathsf{D}$
and~$\mathsf{B}$ and~$\mathsf{C}$.
This is the protocol known as \emph{entanglement swapping}.

In this work, we consider two means of generalizing the entanglement swapping protocol.
First, we allow the starting state to be any bipartite state of the form $\ket{a}_{\mathsf{AB}} \otimes \ket{b}_{\mathsf{CD}}$,
where $\ket{a}_{\mathsf{AB}}, \ket{b}_{\mathsf{CD}} \in (\mathbb{C}^{d})^2$,
rather than just a tensor product of two EPR states.
Second, we allow the measurement of the~$\mathsf{B}$ and~$\mathsf{C}$ subsystems
to be any basis measurement $\ket{u_1}, \ldots, \ket{u_{d^2}}$ in $(\mathbb{C}^d)^{\otimes 2}$,
rather than just the Bell basis measurement.
More precisely, we consider the following protocol:

\begin{definition}[Entanglement swapping]\label{def:entanglementSwapping}
A general entanglement swapping protocol refers to the following procedure
\begin{enumerate}
    \item \emph{Input:}  $\ket{\psi} = \ket{a}_{\mathsf{AB}} \otimes \ket{b}_{\mathsf{CD}}$ as in equation~\eqref{eq:general-psi}, and the unitary $U = \sum_{i\in[d^2]} \ketbra{u_i}{i}$ which specifies an orthonormal basis of $(\mathbb{C}^d)^{\otimes 2}$.
    \item \emph{Procedure}: measure subsystems $\mathsf{BC}$ of~$\ket{\psi}$ in the $U$ basis.
    Let $p_i$ be the probability of observing outcome~$i$, and~$\rho_i = (\rho_i)_{\mathsf{A}}$ be the reduced density matrix on subsystems~$\mathsf{A}$ given this outcome.
    \item \emph{Output}: the pair $(p, \rho)$, where $p = (p_1, \ldots, p_{d^2})$ and $\rho = (\rho_1, \ldots, \rho_{d^2})$.
    We write $\boldsymbol{\rho}$ for the random variable which is equal to~$\rho_i$ with probability~$p_i$.
\end{enumerate}
\end{definition}

\subsection{Reduction from shallow circuits to entanglement swapping}\label{sec:ReductionShallow}

In this section, we focus on ground states that can be prepared with one-dimensional shallow quantum circuits. 
We will see that for such states, the study of the measurement-induced correlations and conditional independence reduces to a question about the performance of successive applications of the entanglement swapping protocol defined in \defref{entanglementSwapping}.

We primarily consider the brickwork architecture of 1D quantum circuits, although our results apply more broadly to other geometrically-local circuits. 
In the brickwork architecture, as shown in \fig{lightconesShallow}, the quantum circuit is constructed using alternating layers of two-quit gates. 
The odd-numbered layers consist of two-qudit gates acting on pairs of qudits $(2i, 2i+1)$, while the even-numbered layers include gates acting on pairs of qudits $(2i-1, 2i)$, 
where $i$ ranges from $1$~to~$\lfloor \frac{n-1}{2} \rfloor$.

Given a circuit of depth $D$, we decompose the qudits and the gates as follows:
Assume the number of qudits $n = 2n'(D-1)$ for some integer $n'\geq 1$.
Divide the set of qudits into contiguous subsets of size $D-1$.
We denote the odd subsets by $\mathsf{L}_i$ and the even subsets by $\mathsf{R}_i$ for $i\in \{1,\dots,n'\}$.
The gates are partitioned into backward lightcones $V_1\otimes \cdots \otimes V_{n'}$ and forward lightcones $U_1 \otimes \cdots \otimes U_{n'-1}$ with \emph{triangular} shapes as shown in \fig{lightconesShallow}.
The two-qudit gates in $V_i$ act on the qudits in subsystems $\mathsf{L}_i$ and $\mathsf{R}_i$,  while the gates in $U_i$ act on the qudits in subsystems $\mathsf{R}_i$ and $\mathsf{L}_{i+1}$. 
Finally, we define the entangled state $\ket{a}_{\mathsf{L}_i\mathsf{R}_i} = V_i \cdot \ket{0}_{\mathsf{L}_i\mathsf{}_i}$ for any pair of subsets $(\mathsf{L}_i, \mathsf{R}_i)$.

Using this decomposition, we can now easily interpret the distribution obtained by measuring the output state of this one-dimensional shallow circuit using the entanglement swapping setup of \defref{entanglementSwapping}.
Initially multiple pairs of entangled states $\otimes_{i=1}^{n'} \ket{a}_{\mathsf{L}_i,\mathsf{R}_i}$ are prepared. 
Subsequently, each pair of registers $(\mathsf{R}_i,\mathsf{L}_{i+1})$ is measured in the orthogonal basis $\{\ket{u^i_j}_{\mathsf{R}_i\mathsf{L_{i+1}}}\}$ specified by the unitary $U_i = \sum_{j=1}^{d^{2(D-1)}} \ketbra{u^i_j}{j}$, starting from $i = 1$ and proceeding to the end of the circuit at~$i = n'-1$. 
Hence, to understand the nature of correlations that exist in the measurement distribution of a one-dimensional shallow circuit, we can study the states produced as the result of the corresponding entanglement swapping setup.

This reduction applies more generally to any state $\ket{\psi}$ with the \emph{zero-correlation length} property.
That is, if we can partition the qudits into contiguous groups $\mathsf{T}_1,\dots,\mathsf{T}_{n'}$  of size $\Theta(D)$, for some constant $D$, such that 
\begin{align}
    \rho_{\mathsf{T}_i, \mathsf{T}_j} = \rho_{\mathsf{T}_i}\otimes \rho_{\mathsf{T}_j} \text{ for any } |i-j|>1.\nonumber
\end{align}
We will later consider an example of entanglement swapping in such states with a zero correlation length in the context of one-dimensional sign-free states \cite{hastings2016signfree}.
Using Uhlmann's theorem, we see that there is an isometry $U_i$ acting on each group of qudits $\mathsf{T}_i$ that maps the Hilbert space into a tensor product of the form $\mathsf{R}_i \otimes \mathsf{L}_{i+1}$ such that the resulting state is not entangled across subsystems $\mathsf{R}_i$ and $\mathsf{L}_{i+1}$.
By repeatedly applying such isometries $U_i$ across the one-dimensional chain, we obtain a state of the form $\otimes_i \ket{a}_{\mathsf{L}_i\mathsf{R}_i}$ similar to the decomposition we obtained earlier for the shallow quantum circuit.
Depending on the Schmidt rank of the reduced state $\rho_{\mathsf{T}_1,\dots, \mathsf{T}_{i}}$, the resulting tensor product space $\mathsf{R}_i \otimes \mathsf{L}_{i+1}$ may have a smaller dimension than the original space and in general lacks the geometrically-local structure we observe in the reduction starting from shallow circuits.

Having obtained this reduction to the entanglement swapping protocol, we proceed with our objective of analyzing the correlations that arise from measuring interacting quantum systems on a one-dimensional chain.  
We consider contiguous subsets $(\mathsf{A},\mathsf{B}, \mathsf{C})$ where $\mathsf{A}$ and $\mathsf{C}$ correspond respectively to $\mathsf{L}_i$ and $\mathsf{A}_j$ for some $1\leq i<j\leq n'$, and $\mathsf{B}$ corresponds to the qudits between $\mathsf{L}_i$ and~$\mathsf{R}_j$.

Given the zero-correlation length property, when the remaining qubits in $[n]\setminus \mathsf{A}\cup \mathsf{B} \cup \mathsf{C}$ are measured in the computational basis, only the state of subsystems $(\mathsf{L}_i, \mathsf{R}_i)$ and $(\mathsf{L}_j, \mathsf{R}_j)$ changes (from $\ket{a}_{\mathsf{L}_i\mathsf{R}_i}$ and $\ket{a}_{\mathsf{L}_j\mathsf{R}_j}$), leaving the state of the remaining qudits in subset $\mathsf{B}\setminus \mathsf{R}_i \cup \mathsf{L}_j$ undisturbed. 
Therefore, in future sections, we often analyze the correlations between $\mathsf{L}_1$ and $\mathsf{R}_{n'}$ at the two ends of the chain, with the understanding that a similar analysis applies to subsystems $\mathsf{L}_i$ and $\mathsf{R}_j$ for $1<i<j<n'$ when the states $\ket{a}_{\mathsf{L}_i\mathsf{R_i}}$ and $\ket{a}_{\mathsf{L}_j\mathsf{R}_j}$ are possibly changed.

We study the correlations induced by entanglement swapping and in particular the notion of conditional independence in \defref{ApproximateConditionalIndependence} using the \emph{classical} conditional mutual information $I(\mathsf{A}:\mathsf{C}|\mathsf{B})$ in the measurement distribution. 
We find it more convenient to analyze an upper bound on $I(\mathsf{A}:\mathsf{C}|\mathsf{B})$ given in terms of the average von-Neumann entropy of the subsystem $\mathsf{A}$.
It follows \cite{meurice2024experimental} from Holevo's theorem that
\begin{align}
    I(\mathsf{A}:\mathsf{C}|\mathsf{B}) &:= \E_{\bm{x}_{\mathsf{B}}}D_{\text{KL}}(p_{{\mathsf{A}},{\mathsf{C}}|\bm{x}_{\mathsf{B}}} \parallel p_{{\mathsf{A}}|\bm{x}_{\mathsf{B}}}\otimes p_{{\mathsf{B}}|\bm{x}_{\mathsf{B}}})\nonumber\\
    & \leq \E_{\bm{x}_{\mathsf{B}}} S(\rho_{\mathsf{A}|{\bm{x}_{\mathsf{B}}}}) \label{eq:CMIfromES}
\end{align}
Therefore, to prove the conditional independence property \eqref{eq:ConditionalIndependence}, it suffices to find an upper bound on the average entropy of the post-measurement state  $\E_{\bm{x}_{\mathsf{B}}} S(\rho_{\mathsf{A}|{\bm{x}_{\mathsf{B}}}})$.

\subsection{Optimal entanglement swapping for qubits}\label{sec:OptimalEntanglementSwapping}

We start with when entanglement swapping measurement is performed on qubits.
Our main result is a tight upper bound on the expected entropy $\E S(\bm{\rho}_{\mathsf{A}})$ of the state $\bm{\rho}_{\mathsf{A}}$ generated in this~process.

Prior to stating this result, we gather some definitions and notations. 
We let $\mathcal{H}_{\mathsf{ABCD}}$ denote a Hilbert space
containing four subsystems $\mathsf{A}$, $\mathsf{B}$, $\mathsf{C}$, and $\mathsf{D}$
of dimension $d = 2$ each.
Let $\ket{u_1}, \ket{u_2}, \ket{u_3}, \ket{u_4}$ be an orthonormal basis of $\mathcal{H}_{\mathsf{BC}}$,
where
\begin{equation*}
\ket{u_i}_{\mathsf{BC}} = u_{00}^i \ket{00} + u_{01}^i \ket{01} + u_{10}^i \ket{10} + u_{11}^i \ket{11}.
\end{equation*}
The (unnormalized) post-measurement state
after receiving outcome~$i$ is given by
\begin{align}
\ket{\psi_i}_{\mathsf{AD}}
&= \bra{u_i}_{\mathsf{BC}}
			\cdot \ket{a}_{\mathsf{AB}} \otimes \ket{b}_{\mathsf{CD}}\nonumber\\
&= \Bigg(\sum_{x \in \{0, 1\}^2} (u^i_{x})^\dagger \bra{x}_{\mathsf{BC}}\Bigg)
	\cdot \Bigg(\sum_{y \in \{0, 1\}^2} a_{y_1} b_{y_2} \ket{y_1 y_1}_{\mathsf{AB}} \ket{y_2 y_2}_{\mathsf{CD}}\Bigg)
= \sum_{x \in \{0, 1\}^2} (u_{x}^i)^\dagger a_{x_1} b_{x_2}\cdot \ket{x}. \label{eq:post-measurement-unnormalized}
\end{align}
The probability of receiving outcome~$i$ is
\begin{equation}
p_i = \braket{\psi_i}{\psi_i} = \sum_{x \in \{0, 1\}^2} |u_x^i|^2 \cdot |a_{x_1}|^2 |b_{x_2}|^2.\label{eq:MeasurementProbability}
\end{equation}
Hence, the normalized post-measurement state on subsystem~$\mathsf{A}$ is
\begin{equation}
\rho^i_{\mathsf{A}} = \frac{1}{p_i} \cdot (\ket{\psi_i}\bra{\psi_i})_{\mathsf{A}}.\label{eq:postMeasurementState_2}
\end{equation}
The main result of this section is stated as follows. 
\begin{theorem}[Upper bound on binary entanglement swapping]\label{thm:BinaryEntanglementSwapping}
Following an entanglement swapping measurement in an arbitrary basis $U$ on the state $\ket{\psi}$ defined by
\begin{equation}
\ket{a}_{\mathsf{AB}} \otimes \ket{b}_{\mathsf{CD}}
= \Big(a_0 \ket{00} + a_1 \ket{11}\Big)
	\ot \Big(b_0 \ket{00} + b_1 \ket{11}\Big)\label{eq:initialStateQubits},
\end{equation}
the average entropy $S(\bm{\rho}_{\mathsf{A}})$ of the output state is bounded by 
\begin{align}
    \E S(\brho_{\mathsf{A}}) \leq  p_0 \cdot S(\rho_0) + p_1 \cdot S(\rho_1) \label{eq:EntropyBoundSwapping}
\end{align}
where $p_0 = |a_0|^2 |b_0|^2 + |a_1|^2 |b_1|^2$, $p_1 =|a_0|^2 |b_1|^2 + |a_1|^2 |b_0|^2$, and we have
\begin{align*}
\rho^0_{\mathsf{A}} = \frac{1}{p_0} \cdot
 \begin{pmatrix}
|a_0|^2 |b_0|^2 & 0\\
0 & |a_1|^2 |b_1|^2
\end{pmatrix},
\quad \rho^1_{\mathsf{A}} = \frac{1}{p_1} \cdot 
\begin{pmatrix}
|a_0|^2 |b_1|^2 & 0\\
0 & |a_1|^2 |b_0|^2
\end{pmatrix}.
\end{align*}
Moreover, the equality is achieved when the measurement is performed in the Bell basis as in \defref{bell-basis}.
\end{theorem}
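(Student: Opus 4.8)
The plan is to turn the entropy bound into a question about the entanglement of the post-measurement states and then optimize over the measurement basis. First I would feed the explicit (unnormalized) post-measurement state $\ket{\psi_i}_{\mathsf{AD}}=\sum_{x\in\{0,1\}^2}(u^i_x)^\dagger a_{x_1}b_{x_2}\ket{x}$ from Eq.~\eqref{eq:post-measurement-unnormalized} into the concurrence machinery. Since $\mathsf{A}$ and $\mathsf{D}$ are single qubits, $\rho^i_{\mathsf A}$ is a qubit state whose entropy is fixed by the concurrence of $\ket{\psi_i}$ via \factref{EntropyConcurrence}. A direct evaluation of the concurrence from \defref{Concurrence} gives $p_i\,C(\psi_i)=2\,|a_0a_1b_0b_1|\cdot|\det U_i|$, where $U_i$ is the $2\times2$ matrix of coefficients $u^i_x$, so that $S(\rho^i_{\mathsf A})=f(C_i)$ with $C_i=2|a_0a_1b_0b_1|\,|\det U_i|/p_i$ and $f(C):=h\big(\tfrac{1+\sqrt{1-C^2}}{2}\big)$. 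I would then record two elementary facts about $f$ on $[0,1]$: it is increasing, and it is convex. Finally I would check that the claimed right-hand side is exactly the Bell-basis value: there each $|\det U_i|=1/2$ is maximal, the four outcomes collapse onto the diagonal states $\rho_0,\rho_1$ with total probabilities $p_0,p_1$ (using \propref{general-state-in-bell-basis}), and $\sum_i p_i f(C_i)=p_0 S(\rho_0)+p_1 S(\rho_1)$.

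The objective is therefore $\sum_i p_i f(C_i)$, and the theorem asserts it is maximized by the Bell basis. I would reframe this as an optimization over \emph{all} pure-state decompositions of the fixed reduced state $\rho_{\mathsf{AD}}=\Tr_{\mathsf{BC}}\ketbra{\psi}{\psi}$: by the Hughston--Jozsa--Wootters theorem, ranging over orthonormal basis measurements of the purifying system $\mathsf{BC}$ realizes precisely these decompositions, so we are computing the entanglement of assistance of $\rho_{\mathsf{AD}}$. Crucially, $\rho_{\mathsf{AD}}$ is \emph{diagonal} in the computational basis with entries $|a_{y_1}b_{y_2}|^2$, and a short calculation of $\widetilde{\rho}_{\mathsf{AD}}=(Y\otimes Y)\rho_{\mathsf{AD}}^*(Y\otimes Y)$ shows $\sqrt{\sqrt{\rho}\,\widetilde\rho\,\sqrt\rho}=|a_0a_1b_0b_1|\cdot\iden$. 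Thus the state is ``concurrence-isotropic,'' which is what forces the clean closed form of the optimum.

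The main simplifying structure I would exploit is the even/odd parity splitting $\{\ket{00},\ket{11}\}\oplus\{\ket{01},\ket{10}\}$, which is preserved by $\rho_{\mathsf{AD}}$ and by $Y\otimes Y$, and inside which the Bell states lie. For a measurement that respects this splitting (block-diagonal $U$) the objective factorizes into two independent two-outcome problems, and the isotropy above makes the ``entangling numerator'' equal to $|a_0a_1b_0b_1|$ in both blocks. Each block is then a one-parameter optimization over a rotation angle $\theta$; writing out $p_i(\theta)$ and $C_i(\theta)$ and using unimodality (or a derivative test at $\theta=\pi/4$) shows the maximum is the symmetric ``Bell'' rotation, which sends both outcomes of a block to the same diagonal state and contributes exactly $p_0 S(\rho_0)$ from the even block and $p_1 S(\rho_1)$ from the odd block.

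The hard part is to rule out parity-\emph{mixing} bases, i.e.\ to show the global maximum over all of $U(4)$ is attained within the block-diagonal family. One route is a symmetrization argument built on the $\mathbb{Z}_2$ parity symmetry generated by $Z_{\mathsf B}\otimes Z_{\mathsf C}$, which commutes with $\rho_{\mathsf{BC}}$ and preserves the Bell basis; the aim would be to replace any basis by a parity-respecting one without lowering $\E S(\brho_{\mathsf A})$. A more computational alternative is to treat $\sum_i p_i f(C_i)$ directly as a function of the unitary $V$ relating the eigenbasis of $\rho_{\mathsf{AD}}$ to the measurement basis, observe that the determinants $\det U_i$ are (up to the fixed factor $|a_0a_1b_0b_1|$) the diagonal entries of the complex-symmetric unitary $V(Y\otimes Y)V^{T}$, and verify via a Lagrange/KKT analysis that the Bell choice is the unique maximizing stationary point. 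Turning either outline into a complete argument---in particular controlling how the convexity of $f$ interacts with cross-block coherences, which is exactly what could in principle make a mixing basis competitive---is the principal obstacle.
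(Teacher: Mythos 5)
There is a genuine gap, and you name it yourself: your argument reduces the theorem to showing that the Bell basis globally maximizes $\sum_i p_i f(C_i)$ over all of $U(4)$, and you never close that optimization. The difficulty is real, not cosmetic. Because $f(C)=h\bigl(\tfrac{1+\sqrt{1-C^2}}{2}\bigr)$ is \emph{convex} in $C$, knowing that the Bell basis maximizes the average concurrence $\sum_i p_i C_i$ (the concurrence of assistance, which your isotropy computation $\sqrt{\sqrt{\rho}\,\widetilde\rho\,\sqrt{\rho}}=|a_0a_1b_0b_1|\cdot\iden$ does pin down) only gives $\sum_i p_i f(C_i)\geq f(\sum_i p_iC_i)$ by Jensen --- the wrong direction. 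A basis that concentrates concurrence unevenly across outcomes could in principle beat the Bell basis for a convex $f$, so neither the block-diagonal reduction nor the ``concurrence-isotropic'' observation rules out parity-mixing measurements, and the proposed symmetrization/KKT routes are left as sketches. As written, the proposal establishes the equality case (Bell basis achieves $p_0S(\rho_0)+p_1S(\rho_1)$, consistent with \propref{general-state-in-bell-basis}) and the correct formula $p_iC(\psi_i)=2|a_0a_1b_0b_1|\,|\det U_i|$ (matching \lemref{qudit_expression} specialized to qubits), but not the inequality for arbitrary bases.

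The paper avoids the global optimization entirely by proving a \emph{per-outcome} bound, \lemref{BoundonEntropy}: for each measurement result $i$, writing $\ket{\psi_i}_{\mathsf{AD}}=a\ket{00}+b\ket{11}+c\ket{01}+d\ket{10}$, one has $S(\rho^i_{\mathsf{A}})\leq H(\bX\mid \mathrm{odd}(\bX),\bI=i)$. The trick is that dephasing $\ket{\psi_i}$ in the parity splitting $\{\ket{00},\ket{11}\}\oplus\{\ket{01},\ket{10}\}$ produces a mixed state $\tau_{\mathsf{AD}}$ with the \emph{same} Wootters concurrence $2|ab-cd|$ (\factref{concurrenceFormula}), so $S(\rho^i_{\mathsf{A}})=E_F(\tau_{\mathsf{AD}})$ by \factref{EntropyConcurrence} and \factref{EntanglementFormation}; since $E_F$ is an infimum over decompositions, it is at most the average entropy of the explicit even/odd decomposition, which is exactly $H(\bX\mid\mathrm{odd}(\bX),\bI=i)$. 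Summing over $i$ and using that conditioning reduces entropy yields $\E S(\brho_{\mathsf{A}})\leq H(\bX)-H(\mathrm{odd}(\bX))=p_0S(\rho_0)+p_1S(\rho_1)$ by a direct computation. If you want to salvage your route, this parity-dephasing/entanglement-of-formation step is precisely the ingredient that handles the cross-block coherences you flag as the obstacle; I would recommend adopting it rather than attempting the Lagrange analysis on $U(4)$.
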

To give a detailed proof of  \thmref{BinaryEntanglementSwapping}, we consider the
experiment of measuring $\ket{a} \otimes \ket{b}$
in the basis $\ket{x}_{\mathsf{AD}} \otimes \ket{u_i}_{\mathsf{BC}}$ over $x \in \{0, 1\}^2$ and $i \in \{1, 2, 3, 4\}$.
If we do so, then by Equation~\eqref{eq:post-measurement-unnormalized}
we receive outcome $(x, i)$ with probability $|u^i_{x}|^2 \cdot |a_{x_1}|^2 |b_{x_2}|^2$.
This motivates studying the following random variable.

\begin{definition}[Measurement outcome random variable]
We define $(\bX, \bI)$ to be the random variable which is equal to $(x, i)$ with probability $|u^i_{x}|^2 \cdot |a_{x_1}|^2 |b_{x_2}|^2$.
\end{definition}

One property we will need of this random variable is whether the string~$\bX$ is even or odd.
This is formalized in the following definition.

\begin{definition}
We write $\mathrm{odd}(\bX)$ for the $\{0, 1\}$-valued random variable defined as
\begin{equation*}
\mathrm{odd}(\bX) := \bX_1 + \bX_2 \pmod{2}.
\end{equation*}
Equivalently, $\mathrm{odd}(\bX) = 1$ when $\bX_1 = \bX_2$
and~$0$ otherwise.
\end{definition}

Our key technical lemma relates the von Neumann entropy of the state~$\rho_i$
with the Shannon entropy of the random variable $(\bX, \bI)$.
It is stated as follows.

\begin{lemma}\label{lem:BoundonEntropy}
For each $i \in \{1, 2, 3, 4\}$,
\begin{equation}
S(\rho^i_{\mathsf{A}}) \leq H(\bX \mid \mathrm{odd}(\bX), \bI=i).\label{eq:BoundonEntropy}
\end{equation}
\end{lemma}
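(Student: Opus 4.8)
The plan is to fix the measurement outcome $i$ and work with the normalized post-measurement pure state $\ket{\psi}:=\ket{\psi_i}/\sqrt{p_i}=\sum_{x\in\{0,1\}^2}c_x\ket{x}_{\mathsf{AD}}$, where by \eqref{eq:post-measurement-unnormalized} the amplitudes are $c_x=(u^i_x)^\dagger a_{x_1}b_{x_2}/\sqrt{p_i}$. The first thing to note is that $|c_x|^2=|u^i_x|^2|a_{x_1}|^2|b_{x_2}|^2/p_i=\Pr[\bX=x\mid\bI=i]$, so the conditional outcome distribution is encoded directly in the squared amplitudes. Since $\rho^i_{\mathsf{A}}=\Tr_{\mathsf{D}}\ketbra{\psi}{\psi}$ is the reduced state of a two-qubit pure state on $\mathsf{A}\mathsf{D}$, \factref{EntropyConcurrence} gives $S(\rho^i_{\mathsf{A}})=\varphi(C(\psi))$, where I abbreviate $\varphi(c):=h\!\left(\frac{1+\sqrt{1-c^2}}{2}\right)$ and, by \defref{Concurrence}, $C(\psi)=2\,|c_{00}c_{11}-c_{01}c_{10}|$.

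Next I would split $\ket{\psi}$ according to the two level sets of $\mathrm{odd}(\bX)$: the matched part $\ket{\psi_E}=c_{00}\ket{00}+c_{11}\ket{11}$ and the mismatched part $\ket{\psi_O}=c_{01}\ket{01}+c_{10}\ket{10}$, with weights $\pi_E=|c_{00}|^2+|c_{11}|^2$ and $\pi_O=|c_{01}|^2+|c_{10}|^2$ equal to the probabilities of the two level sets conditioned on $\bI=i$. Each normalized part is already in Schmidt form across the $\mathsf{A}\mathsf{D}$ cut, so its $\mathsf{A}$-marginal is diagonal with entropy $h(|c_{00}|^2/\pi_E)$ and $h(|c_{01}|^2/\pi_O)$ respectively. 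Averaging over the value of $\mathrm{odd}(\bX)$ gives $\pi_E\,h(|c_{00}|^2/\pi_E)+\pi_O\,h(|c_{01}|^2/\pi_O)=H(\bX\mid\mathrm{odd}(\bX),\bI=i)$, the right-hand side of \eqref{eq:BoundonEntropy}. Writing these two terms through \factref{EntropyConcurrence} as $\varphi(C_E)$ and $\varphi(C_O)$ with $C_E=2|c_{00}c_{11}|/\pi_E$ and $C_O=2|c_{01}c_{10}|/\pi_O$, the lemma reduces to the single scalar inequality $\varphi(C(\psi))\le\pi_E\,\varphi(C_E)+\pi_O\,\varphi(C_O)$.

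To close this, I would invoke two standard properties of $\varphi$ (cf.\ \cite{Woo98Formation}): it is monotonically increasing and convex on $[0,1]$. The structural input is a triangle inequality for the bilinear concurrence form: because the term $c_{00}c_{11}$ lives entirely in the matched sector while $c_{01}c_{10}$ lives in the mismatched sector, $C(\psi)=2|c_{00}c_{11}-c_{01}c_{10}|\le 2|c_{00}c_{11}|+2|c_{01}c_{10}|=\pi_E C_E+\pi_O C_O$. Monotonicity then yields $\varphi(C(\psi))\le\varphi(\pi_E C_E+\pi_O C_O)$, and convexity together with $\pi_E+\pi_O=1$ (Jensen) yields $\varphi(\pi_E C_E+\pi_O C_O)\le\pi_E\varphi(C_E)+\pi_O\varphi(C_O)$, which is exactly \eqref{eq:BoundonEntropy}. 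As a sanity check, a Bell-basis measurement kills one parity sector, both inequalities saturate, and one recovers the equality asserted in \thmref{BinaryEntanglementSwapping}.

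The most delicate point is the reduction to this scalar inequality. One has to observe that parity-projecting $\ket{\psi}$ produces states that are automatically in Schmidt form, so that the conditional $\mathsf{A}$-entropies are literally binary entropies of the conditional outcome distribution and assemble into $H(\bX\mid\mathrm{odd}(\bX),\bI=i)$; and one must use that the relevant monotone $\varphi$ is convex rather than concave, since it is convexity that converts the concurrence triangle inequality into the entropy bound through Jensen. Verifying (or citing) monotonicity and convexity of $\varphi$ on $[0,1]$ is the only analytic ingredient beyond bookkeeping.
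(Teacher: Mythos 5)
Your proof is correct, and it departs from the paper's argument at the decisive step. Both arguments share the same scaffolding: pass to the normalized post-measurement state on $\mathsf{AD}$, observe that $H(\bX\mid\mathrm{odd}(\bX),\bI=i)$ is exactly the average entanglement entropy of the two parity-projected components (which are already in Schmidt form), and express everything through the pure-state relation $S=\varphi(C)$ of \factref{EntropyConcurrence}. Where the paper then forms the mixed state $\tau_{\mathsf{AD}}$, bounds $E_F(\tau_{\mathsf{AD}})$ by this particular decomposition, and invokes the closed-form mixed-state machinery (\factref{concurrenceFormula}, \factref{EntanglementFormation}) to identify $S(\rho^i_{\mathsf{A}})$ with $E_F(\tau_{\mathsf{AD}})$, you instead use the triangle inequality $|c_{00}c_{11}-c_{01}c_{10}|\le|c_{00}c_{11}|+|c_{01}c_{10}|$ together with monotonicity and convexity of $\varphi$ on $[0,1]$ (both standard properties recorded in \cite{Woo98Formation}). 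Your route buys two things. First, it is more elementary: it needs no mixed-state entanglement theory, only the pure-state entropy--concurrence relation plus two scalar properties of $\varphi$. Second, it is more robust: the paper's chain requires $S(\rho^i_{\mathsf{A}})\le E_F(\tau_{\mathsf{AD}})$ and obtains it by asserting (in its notation) $C(\tau_{\mathsf{AD}})=2|ab-cd|$, whereas the closed-form concurrence of that X-shaped mixture actually evaluates to $2\,\bigl|\,|ab|-|cd|\,\bigr|\le 2|ab-cd|$, which pushes that comparison in the unhelpful direction; your direct use of $|ab-cd|\le|ab|+|cd|$ lands the bound without ever passing through $E_F(\tau_{\mathsf{AD}})$. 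Two small points to tighten: state or cite the convexity of $\varphi$ explicitly, since Jensen in the correct direction is what the whole argument hinges on, and note the degenerate cases $\pi_E=0$ or $\pi_O=0$, where the corresponding term simply drops out and the inequality reduces to monotonicity alone.
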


Prior to proving \lemref{BoundonEntropy},
we show how it can be used to complete the proof of \thmref{BinaryEntanglementSwapping}.
\begin{proof}[Proof of \thmref{BinaryEntanglementSwapping}]

We first specialize to the case that the entanglement swapping  measurement is performed in the Bell basis. 
\propref{general-state-in-bell-basis} for qubits $d = 2$, implies that
\begin{equation*}
\ket{a}_{\mathsf{AB}}\otimes \ket{b}_{\mathsf{CD}}
= \frac{1}{\sqrt{2}} \cdot \sum_{i, j \in \{0,1\}} \Bell{i}{j}_{\mathsf{BC}}
	\ot \Big(a_0\cdot  b_{j} \cdot \ket{0, j}
			+ (-1)^{i}\cdot a_1\cdot  b_{1 + j} \cdot \ket{1, 1+j} \Big)_{\mathsf{AD}},
\end{equation*}
and that $\boldsymbol{\rho}$ is distributed as follows:
\begin{align}
\rho^0_{\mathsf{A}} = \frac{1}{p_0} \cdot
 \begin{pmatrix}
|a_0|^2 |b_0|^2 & 0\\
0 & |a_1|^2 |b_1|^2
\end{pmatrix},
& \qquad \text{with probability $p_0 =|a_0|^2 |b_0|^2 + |a_1|^2 |b_1|^2$},\nonumber\\
\rho^1_{\mathsf{A}} = \frac{1}{p_1} \cdot
\begin{pmatrix}
|a_0|^2 |b_1|^2 & 0\\
0 & |a_1|^2 |b_0|^2
\end{pmatrix},
& \qquad \text{with probability $p_1 = |a_0|^2 |b_1|^2 + |a_1|^2 |b_0|^2$}.\label{eq:mixtureForBoundEntropy}
\end{align}
Hence, when measured in the Bell basis, we have $\bm{E}(\bm{\rho}_{\mathsf{A}}) = p_0 \cdot S(\rho_0) + p_1 \cdot S(\rho_1)$.
We next show that in an arbitrary basis, it holds that $\bm{E}(\bm{\rho}_{\mathsf{A}}) \leq p_0 \cdot S(\rho^0_{\mathsf{A}}) + p_1 \cdot S(\rho^1_{\mathsf{A}})$.
This follows from
\begin{align*}
\sum_{i=1}^4 p_i S(\rho^i_{\mathsf{A}})
&=\sum_{i=1}^4 \Pr[\bI = i] \cdot S(\rho^i_{\mathsf{A}}) \tag{by definition of~$\bI$}\\
&\leq \sum_{i=1}^4 \Pr[\bI = i] \cdot H(\bX \mid \mathrm{odd}(\bX), \bI=i)
		\tag{by \lemref{BoundonEntropy}}\\
&=H(\bX \mid \mathrm{odd}(\bX), \bI)\\
&\leq H(\bX \mid \mathrm{odd}(\bX)) \tag{because conditioning reduces entropy}\\
&= H(\bX) - H(\mathrm{odd}(\bX)).
\end{align*}
But
\begin{equation*}
H(\bX) = H(|a_0|^2, |a_1|^2) + H(|b_0|^2, |b_1|^2)
\end{equation*}
and
\begin{equation*}
H(\mathrm{odd}(\bX)) = H(|a_0|^2 |b_0|^2 + |a_1|^2 |b_1|^2, |a_0|^2 |b_1|^2 + |a_1|^2 |b_0|^2).
\end{equation*}
Hence, a direct calculation shows that $$H(\bX) - H(\mathrm{odd}(\bX)) = p_0\cdot S(\rho^0_{\mathsf{A}}) + p_1\cdot S(\rho^1_{\mathsf{A}}),$$
where probabilities $p_0$ and $p_1$, and the states $\rho_0$ and $\rho_1$ are given by \eqref{eq:mixtureForBoundEntropy}.
This leads to the claimed bound in \eqref{eq:EntropyBoundSwapping} and completes the proof of \thmref{BinaryEntanglementSwapping}.
\end{proof}
Finally, we state the proof of \lemref{BoundonEntropy}.
\begin{proof}[Proof of \lemref{BoundonEntropy}]
Let $\ket{\phi}_{\mathsf{AD}}=a \ket{00}+b \ket{11}+c \ket{01}+d \ket{10}$ be a normalized bipartite state.
This state may correspond to any of the post-measurement states $\ket{\psi_i}_{\mathsf{AD}}$ for $i \in \{1,2,3,4\}$.
The reduced state on subsystem $\mathsf{A}$ is given by
\begin{align}
    \rho=\begin{pmatrix}
    |a|^2+|c|^2 & a d^*+b^* c\\
    a^* d+b c^* & |b|^2+|d|^2
    \end{pmatrix}.\nonumber
\end{align}
As before, we consider the
experiment of measuring $\ket{\phi}_{\mathsf{AD}}$
in the basis $\{\ket{x}_{\mathsf{AD}}: x \in \{0, 1\}^2\}$ with the random variable $\bX = (\bX_1,\bX_2)$ denoting the measurement outcome. 
The following projective measurements correspond to obtaining $\mathrm{odd}(\bX) = 0$ and $\mathrm{odd}(\bX) = 1$:
\begin{align}
    &\mathrm{odd}(\bX) = 0, \quad \Pi_0=\ketbra{00}{00}+\ketbra{11}{11},\quad\nonumber\\
    &\mathrm{odd}(\bX) = 1, \quad \Pi_1=\ketbra{01}{01}+\ketbra{10}{10}.\nonumber
\end{align}
For each of these two measurements, the reduced post-measurement state and its measurement probability is
\begin{align}
    &\mathrm{odd}(\bX) = 0:\quad  \sigma_0=\begin{pmatrix}
\frac{|a|^2}{|a|^2+|b|^2}  & 0\\
0 & \frac{|b|^2}{|a|^2+|b|^2}
\end{pmatrix},\quad \textit{with probability}\quad q_0 = |a|^2 + |b|^2,\nonumber\\
 &\mathrm{odd}(\bX) = 1:\quad \sigma_1=\begin{pmatrix}
\frac{|c|^2}{|c|^2+|d|^2}  & 0\\
0 & \frac{|d|^2}{|c|^2+|d|^2}
\end{pmatrix},\quad \textit{with probability.}\quad q_1=|c|^2+|d|^2.\label{eq:statesforOddx}
\end{align}
Our goal is to prove the bound \eqref{eq:BoundonEntropy} which in this setup is equivalent to proving
\begin{align}
S(\rho)\leq q_0 \cdot S(\sigma_0) + q_1 \cdot S(\sigma_1).\label{eq:equivalentEntropybound}
\end{align} 
We prove this by interpreting the entropy term $S(\rho)$ as the entanglement of formation of a two-qubit mixed state $\tau_{\mathsf{AD}}$ and drawing on \defref{EntanglementFormation}.
Consider the following pair of pure states $$\ket{\phi_0}_{\mathsf{AD}} = \frac{1}{\sqrt{|a|^2+|b|^2}}\left(a\ket{00}+ b\ket{11}\right), \quad \ket{\phi_1}_{\mathsf{AD}} = \frac{1}{\sqrt{|c|^2 + |d|^2}}\left(c\ket{01} + d\ket{10}\right),$$
and their mixture $\tau_{\mathsf{AD}}$ given by 
\begin{align}
    \tau_{\mathsf{AD}} = p_0\cdot \ketbra{\phi_0}{\phi_0}_{\mathsf{AD}} + p_1\cdot \ketbra{\phi_1}{\phi_1}_{\mathsf{AD}}
    = \begin{pmatrix}
        |a|^2 & 0 & 0 & a b^* \\
        0 & |c|^2 & c d^* & 0 \\
        0 & c^*d & |d|^2 & 0 \\
        a^* b & 0 & 0 & |b|^2
    \end{pmatrix}, 
\end{align}
where as before $q_0 = |a|^2 + |b|^2$ and $q_1 = |c|^2 + |d|^2$.
The states $\sigma_0$ and $\sigma_1$ in \eqref{eq:statesforOddx} correspond to the reduced state of $\ket{\phi_1}_{\mathsf{AD}}$ and $\ket{\phi_2}_{\mathsf{AD}}$ on subsystem $\mathsf{A}$.
Hence, following \defref{EntanglementFormation}, the entanglement of formation $E_F(\tau_{\mathsf{AB}})$ satisfies
\begin{align}
    E_F(\tau_{\mathsf{AB}})\leq q_0\cdot  S(\sigma_0) + q_1 \cdot S(\sigma_1).\label{eq:EfBound}
\end{align}
Using \factref{EntanglementFormation}, the entanglement of formation $E_F(\tau_{\mathsf{AB}})$ can be explicitly calculated as
\begin{align}
    E_F(\tau_{\mathsf{AB}}) = h\left( \frac{1 + \sqrt{1- C(\tau_{\mathsf{AD}})^2}}{2}\right),
\end{align}
where the concurrence $C(\tau_{\mathsf{AD}})$ is given by $C(\tau_{\mathsf{AD}}) = 2 \cdot |a b - c d|$ as per \factref{concurrenceFormula}.
We now claim that the entropy $S(\rho)$ in \eqref{eq:equivalentEntropybound} is equal to $E_F(\tau_{\mathsf{AB}})$. 
This follows from the fact that
\begin{align}
        S(\rho) = h\left( \frac{1 + \sqrt{1- C(\phi_{\mathsf{AD}})^2}}{2}\right)
\end{align}
according to \factref{EntropyConcurrence}, and that $C(\phi_{\mathsf{AD}}) = 2 \cdot |a b -c d|$ using \factref{concurrenceFormula}.
Going back to \eqref{eq:EfBound}, we see that $S(\rho) \leq q_0 \cdot S(\sigma_0) + q_1 \cdot S(\sigma_1)$, which completes the proof. 
\end{proof}

Next, we give an alternative looser upper bound on the average entropy $\E S(\bm{\rho}_{\mathsf{A}})$ again by using the notion of concurrence.
The usefulness of this bound will become clear in the next section where we consider a contiguous series of entanglement swapping measurements performed on a chain of two-qubit entangled pairs.
A similar analysis for a pair of qubits has been considered before in \cite{Popp2005Localizable, Verstraete2004Localizable} in the context of localizable entanglement. 
 
We begin with a known general upper bound on the entropy $S(\rho)$ in terms of concurrence $C(\rho)$.

\begin{theorem}[Entropy bound in terms of concurrence, cf. \cite{Uhlmann2009entropy}]\label{thm:concurrence}
Let $\rho$ be a rank $d$ state. The following upper bound on the von Neumann entropy of $\rho$ holds:
\begin{align}
    S(\rho)\leq c_d \cdot C(\rho).
\end{align}
Here $c_d=\log(d) \cdot \sqrt{\frac{d}{2(d-1)}}$ and the concurrence is given by $C(\rho) = \sqrt{2\cdot (\Tr(\rho)^2-\Tr(\rho^2))}$.
In particular, in the case of a qubit $d=2$, we can bound the entropy $S(\rho)$ of a qubit by its concurrence $C(\rho)$ via
\begin{align}
    S(\rho) \leq C(\rho) \quad \text{for qubits}.\label{eq:EntropyBoundConcurrence}
\end{align}
\end{theorem}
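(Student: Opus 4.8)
The plan is to reduce the statement to a functional inequality on the probability simplex. Since the von Neumann entropy $S(\rho) = -\Tr(\rho \log \rho)$ and the concurrence $C(\rho) = \sqrt{2(\Tr(\rho)^2 - \Tr(\rho^2))} = \sqrt{2(1 - \lVert\boldsymbol{\lambda}\rVert_2^2)}$ both depend only on the spectrum $\boldsymbol{\lambda} = (\lambda_1,\dots,\lambda_d)$ of $\rho$ (using $\Tr\rho = 1$), the claim is equivalent to
\begin{align}
    H(\boldsymbol{\lambda}) \leq \log(d)\sqrt{\tfrac{d}{d-1}}\cdot \sqrt{1 - \lVert\boldsymbol{\lambda}\rVert_2^2}\label{eq:functionalClaim}
\end{align}
for every $\boldsymbol{\lambda}$ in the simplex $\Delta_{d-1}$, where $H$ is the Shannon entropy. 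A quick sanity check fixes the constant: at the uniform vector $\boldsymbol{\lambda} = (1/d,\dots,1/d)$ one has $H = \log d$ and $1 - \lVert\boldsymbol{\lambda}\rVert_2^2 = (d-1)/d$, so both sides equal $\log d$. Thus \eqref{eq:functionalClaim} is saturated exactly at the maximally mixed state, which is what singles out the coefficient $c_d = \log(d)\sqrt{d/(2(d-1))}$.

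To prove \eqref{eq:functionalClaim} for general $d$, I would fix the purity $\lVert\boldsymbol{\lambda}\rVert_2^2 = \gamma$, so that the right-hand side is constant, and maximize the left-hand side $H$ over this level set. Introducing Lagrange multipliers for the constraints $\sum_i \lambda_i = 1$ and $\sum_i \lambda_i^2 = \gamma$, the stationarity condition reads $\log \lambda_i + \nu\, \lambda_i = \text{const}$; since $p \mapsto \log p + \nu p$ is either monotone or unimodal on $(0,1]$, it attains each value at most twice, forcing the maximizer to take at most two distinct eigenvalues. This reduces the problem to distributions with $k$ entries equal to $a$ and $d-k$ entries equal to $b$ (with $ka + (d-k)b = 1$), for which \eqref{eq:functionalClaim} becomes a one-parameter inequality to be verified for each $k$, with the uniform point as the equality case. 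Carrying out this casework uniformly in $k$ is the main obstacle, and it is precisely the content of the cited result \cite{Uhlmann2009entropy}.

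For the qubit case $d = 2$ actually used in this paper, the argument is elementary and self-contained. Writing $\boldsymbol{\lambda} = (\lambda, 1-\lambda)$ and $c_2 = \log 2$, the claim \eqref{eq:EntropyBoundConcurrence} becomes $h(\lambda) \leq 2\sqrt{\lambda(1-\lambda)}$, where $h$ is the binary entropy and logarithms are base $2$. I would analyze $f(\lambda) := 2\sqrt{\lambda(1-\lambda)} - h(\lambda)$ on $[0,1]$. It is symmetric about $\lambda = 1/2$ and vanishes at $\lambda \in \{0, 1/2, 1\}$; near $\lambda = 0$ (and $\lambda = 1$) the term $2\sqrt{\lambda(1-\lambda)}$ dominates $h(\lambda) \sim \lambda\log_2(1/\lambda)$, so $f > 0$ there, while $f'(1/2) = 0$ and $f''(1/2) = \tfrac{4}{\ln 2} - 4 > 0$ make $\lambda = 1/2$ a strict local minimum of height zero. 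It then remains to check that $f$ has no further zero in $(0,1)$, equivalently that $f'(\lambda) = \frac{1-2\lambda}{\sqrt{\lambda(1-\lambda)}} - \log_2\frac{1-\lambda}{\lambda}$ changes sign exactly once on $(0,1/2)$; this single monotonicity check is the only delicate point, and it yields $f \geq 0$ throughout, with equality iff $\lambda = 1/2$, i.e. the maximally mixed qubit. This establishes \eqref{eq:EntropyBoundConcurrence} and hence the theorem in the regime needed for the entanglement-swapping analysis.
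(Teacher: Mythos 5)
The paper does not actually prove this statement: it is imported verbatim from the literature (the ``cf.~\cite{Uhlmann2009entropy}'' signals that Uhlmann's result is being used as a black box), so there is no in-paper proof to match yours against. Your reduction is the right one --- both $S$ and $C$ are spectral functions, so the claim is the simplex inequality $H(\boldsymbol{\lambda}) \leq \log(d)\sqrt{d/(d-1)}\,\sqrt{1-\lVert\boldsymbol{\lambda}\rVert_2^2}$ --- and your sanity check correctly identifies the maximally mixed state as the unique equality case, which pins down $c_d$. The Lagrange-multiplier reduction to two-valued spectra is standard and sound (one should also note that a boundary maximizer has rank $<d$ and is handled by the monotonicity of $c_d$ in $d$, but that is minor). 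Two steps remain genuinely unfinished, and you are candid about both: (i) for general $d$, the one-parameter casework over $k$ is deferred entirely to the citation, so your argument does not replace Uhlmann's result, it only reduces to it; (ii) for $d=2$, the claim that $f'(\lambda) = (1-2\lambda)/\sqrt{\lambda(1-\lambda)} - \log_2\tfrac{1-\lambda}{\lambda}$ changes sign exactly once on $(0,1/2)$ is asserted, not verified --- your boundary analysis only gives at least one crossing. Since the qubit inequality $h(\lambda)\leq 2\sqrt{\lambda(1-\lambda)}$ is the only instance the paper actually uses (in \propref{qubit_bound} and the cluster-state bound), closing point (ii) --- e.g.\ by showing the difference of the two sides of $f'=0$ is itself monotone after a substitution such as $\lambda = \sin^2(\phi/2)$, or by simply invoking the known sharp bound $h(p)\le 2\sqrt{p(1-p)}$ --- would make your argument a fully self-contained replacement for the citation in the regime the paper needs. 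As written, the structure is correct but the proof is a sketch at those two points rather than a complete derivation.
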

By applying the bound \eqref{eq:EntropyBoundConcurrence}, we prove the following result.
\begin{proposition}[Alternative upper bound on average entropy ]\label{prop:qubit_bound}
Following an entanglement swapping measurement on qubit states $\ket{a}_{\mathsf{AB}} \otimes \ket{b}_{\mathsf{CD}}
= \Big(a_0 \ket{00} + a_1 \ket{11}\Big)
	\ot \Big(b_0 \ket{00} + b_1 \ket{11}\Big)$ as in \defref{entanglementSwapping}, 
the average entropy $\E S(\bm{\rho}_\mathsf{A})$ and concurrence $\E C(\bm{\rho}_\mathsf{A})$ of the state $\rho_{\mathsf{A}}$ satisfy
\begin{align}
    \E S(\bm{\rho}_\mathsf{A})\leq \E C(\bm{\rho}_\mathsf{A})\leq  4\cdot|a_0 a_1|\cdot |b_0 b_1|.\label{eq:f7}
\end{align}
\end{proposition}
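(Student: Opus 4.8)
The plan is to prove the two inequalities separately, with the first being essentially immediate and the second carrying all the content. For the left inequality $\E S(\bm{\rho}_\mathsf{A}) \le \E C(\bm{\rho}_\mathsf{A})$, I would apply the single-qubit entropy bound \eqref{eq:EntropyBoundConcurrence} of \thmref{concurrence} to each post-measurement reduced state $\rho_\mathsf{A}^i$ and then average: since $S(\rho_\mathsf{A}^i) \le C(\rho_\mathsf{A}^i)$ holds outcome by outcome, weighting by the probabilities $p_i$ gives $\E S(\bm{\rho}_\mathsf{A}) = \sum_i p_i\, S(\rho_\mathsf{A}^i) \le \sum_i p_i\, C(\rho_\mathsf{A}^i) = \E C(\bm{\rho}_\mathsf{A})$, using the definition of the random variable $\bm{\rho}_\mathsf{A}$ from \defref{entanglementSwapping}.

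The substantive step is to evaluate $p_i\, C(\rho_\mathsf{A}^i)$ in closed form. I would start from the unnormalized post-measurement state \eqref{eq:post-measurement-unnormalized}, whose four $\mathsf{AD}$-amplitudes are $(u^i_{00})^\dagger a_0 b_0$, $(u^i_{01})^\dagger a_0 b_1$, $(u^i_{10})^\dagger a_1 b_0$, and $(u^i_{11})^\dagger a_1 b_1$, and apply the pure-state concurrence formula $C = 2|\alpha\delta - \beta\gamma|$ from \defref{Concurrence}; recall that for a bipartite pure state the reduced-state concurrence \eqref{eq:concurrenceExpression} coincides with this two-qubit concurrence, so $C(\bm{\rho}_\mathsf{A})$ in the statement is exactly the concurrence of $\ket{\psi_i}_{\mathsf{AD}}$. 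Forming the cross products $(u^i_{00})^\dagger(u^i_{11})^\dagger$ and $(u^i_{01})^\dagger(u^i_{10})^\dagger$, the amplitude weights factor out of the absolute value, giving for the normalized state $C(\rho^i_\mathsf{A}) = 2|a_0 a_1|\,|b_0 b_1|\,|\Delta_i|/p_i$, where $\Delta_i := u^i_{00} u^i_{11} - u^i_{01} u^i_{10}$ and $p_i$ is the normalization \eqref{eq:MeasurementProbability}. The key feature is that $p_i$ cancels upon multiplication, so $\E C(\bm{\rho}_\mathsf{A}) = \sum_{i=1}^4 p_i\, C(\rho^i_\mathsf{A}) = 2|a_0 a_1|\,|b_0 b_1| \sum_{i=1}^4 |\Delta_i|$.

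It then remains to show $\sum_{i=1}^4 |\Delta_i| \le 2$, which is the crux. The observation that unlocks it is that $2|\Delta_i|$ is precisely the concurrence $C(\ket{u_i})$ of the $i$-th measurement basis vector regarded as a two-qubit state. Since the concurrence of any normalized two-qubit pure state is at most $1$ — immediate from AM--GM, as $2|\alpha\delta - \beta\gamma| \le |\alpha|^2 + |\delta|^2 + |\beta|^2 + |\gamma|^2 = 1$ — each $|\Delta_i| \le 1/2$, and summing the four terms yields $\sum_i |\Delta_i| \le 2$. Substituting gives $\E C(\bm{\rho}_\mathsf{A}) \le 4|a_0 a_1|\,|b_0 b_1|$, which completes the proof.

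I expect the only real obstacle to be the bookkeeping in the factorization step: correctly tracking the conjugates $(u^i_x)^\dagger$ and verifying that the weight $|a_0 a_1|\,|b_0 b_1|$ pulls out cleanly and basis-independently, after which the bound is immediate and requires only that each $\ket{u_i}$ be normalized, not that $\{\ket{u_i}\}$ be orthonormal. I would remark in closing that equality forces every $|\Delta_i| = 1/2$, i.e. each $\ket{u_i}$ maximally entangled, consistent with the Bell-basis optimality established in \thmref{BinaryEntanglementSwapping}.
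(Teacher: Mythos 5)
Your proposal is correct and follows essentially the same route as the paper: the pointwise bound $S(\rho^i_{\mathsf{A}})\leq C(\rho^i_{\mathsf{A}})$ from \thmref{concurrence}, the factorization $p_i\,C(\rho^i_{\mathsf{A}}) = 2|a_0a_1||b_0b_1|\,|u^i_{00}u^i_{11}-u^i_{01}u^i_{10}|$ (which the paper obtains via the unnormalized-state computation in \lemref{qudit_expression}), and the AM--GM bound $2|u^i_{00}u^i_{11}-u^i_{01}u^i_{10}|\leq \sum_x |u^i_x|^2 = 1$ summed over the four outcomes. Your added observations---that this last step only uses normalization of each $\ket{u_i}$ rather than orthonormality, and that equality forces each basis vector to be maximally entangled---are correct and consistent with the Bell-basis tightness noted in the paper.
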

The proof of this proposition entails  finding the concurrence of the post-measurement states which we detail first. 
Using our notation in equations  \eqref{eq:post-measurement-unnormalized}-\eqref{eq:postMeasurementState_2}, the un-normalized state on subsystems $\mathsf{AD}$ after measuring subsystems $\mathsf{BC}$ is given by 
\begin{align}
\ket{\psi_i}_{\mathsf{AD}} = 
 \sum_{x \in \{0, 1\}^2} (u_{x}^i)^\dagger a_{x_1} b_{x_2}\cdot \ket{x}, 
\end{align}
and the corresponding reduced state on subsystem $\mathsf{A}$ is denoted by $\tilde{\rho}^i_{\mathsf{A}} =  p_i \cdot \rho^i_{\mathsf{A}}$.
Given this and the definition of concurrence in equation \eqref{eq:concurrenceExpression}, we see that $C(\tilde{\rho}^i_{\mathsf{A}}) = p_i  \cdot C(\rho_i)$, and therefore, the average concurrence satisfies $\sum_i p_i \cdot C(\rho^i_{\mathsf{A}}) = \sum_i C(\tilde{\rho}^i_{\mathsf{A}})$.

This expression for this concurrence in the general case of qudits is stated in the next lemma.
\begin{lemma}[Post-measurement concurrence]\label{lem:qudit_expression}
The concurrence of the post-measurement state $C(\tilde{\rho}^i_{\mathsf{A}})$ is given by
\begin{align}
    C(\tilde{\rho}^i_{\mathsf{A}})= \left(\sum_{x_1<x'_1,x_2<x'_2} |u^i_{x_1 x_2} u^i_{x'_1 x'_2}-u^i_{x_1 x'_2} u^i_{x'_1 x_2}|^2 |a_{x_1}|^2 |a_{x'_1}|^2 |b_{x_2}|^2 |b_{x'_2}|^2\right)^{1/2},\label{eq:f2}
\end{align}
\end{lemma}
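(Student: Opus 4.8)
The lemma asks to compute the concurrence of the unnormalized post-measurement state $\tilde{\rho}^i_{\mathsf{A}}$ on subsystem $\mathsf{A}$, which arises from measuring $\mathsf{BC}$ in outcome $i$ given the product state $\ket{a}_{\mathsf{AB}} \otimes \ket{b}_{\mathsf{CD}}$ where $\ket{a} = \sum_{x_1} a_{x_1}\ket{x_1 x_1}_{\mathsf{AB}}$ and $\ket{b} = \sum_{x_2} b_{x_2}\ket{x_2 x_2}_{\mathsf{CD}}$.

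**The key formula.** The post-measurement state is $\ket{\psi_i}_{\mathsf{AD}} = \sum_{x \in \{0,1\}^2} (u_x^i)^\dagger a_{x_1} b_{x_2} \ket{x}$ (from eq. post-measurement-unnormalized), i.e. the amplitude on $\ket{x_1 x_2}_{\mathsf{AD}}$ is $\beta_{x_1 x_2} := (u^i_{x_1 x_2})^\dagger a_{x_1} b_{x_2}$. For a bipartite pure state with amplitude matrix $(\beta_{x_1 x_2})$, the concurrence of the reduced state is $2|\det \beta| = 2|\beta_{00}\beta_{11} - \beta_{01}\beta_{10}|$ (the defining formula in Definition~\ref{def:Concurrence}). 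For qudits, $C(\phi_{\mathsf{AD}})^2 = 2(\Tr(\rho_{\mathsf{A}})^2 - \Tr(\rho_{\mathsf{A}}^2))$ generalizes to a sum of squared $2\times 2$ minors of the amplitude matrix over pairs $x_1<x_1'$, $x_2<x_2'$.

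**The plan.** First I would note that $C(\tilde{\rho}^i_{\mathsf{A}}) = p_i \cdot C(\rho^i_{\mathsf{A}})$ since concurrence is homogeneous of degree one in the (unnormalized) reduced state, as already established in the paragraph preceding the lemma. Then I would directly apply the expression \eqref{eq:concurrenceExpression}, $C^2 = 2(\Tr(\tilde{\rho})^2 - \Tr(\tilde{\rho}^2))$, using that for a pure state $\ket{\psi_i}$ with amplitude matrix $\beta$ one has $\tilde{\rho}_{\mathsf{A}} = \beta\beta^\dagger$, so that $\Tr(\tilde{\rho}_{\mathsf{A}})^2 - \Tr(\tilde{\rho}_{\mathsf{A}}^2) = \sum_{x_1<x_1'} \sum_{x_2<x_2'} |\det \beta_{\{x_1,x_1'\},\{x_2,x_2'\}}|^2$, the standard Cauchy–Binet / Gram-determinant identity expressing the quantity in terms of $2\times 2$ minors of $\beta$. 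Substituting $\beta_{x_1 x_2} = (u^i_{x_1 x_2})^\dagger a_{x_1} b_{x_2}$, each minor becomes
\begin{align}
\beta_{x_1 x_2}\beta_{x_1' x_2'} - \beta_{x_1 x_2'}\beta_{x_1' x_2} = \bigl(u^i_{x_1 x_2}u^i_{x_1' x_2'} - u^i_{x_1 x_2'}u^i_{x_1' x_2}\bigr)^{\!\dagger} a_{x_1} a_{x_1'} b_{x_2} b_{x_2'},\nonumber
\end{align}
where the $a,b$ factors pull out cleanly because the row index of $\beta$ couples only to $a$ and the column index only to $b$. Taking squared modulus gives exactly the summand in \eqref{eq:f2}, and the factor of $2$ in the concurrence formula cancels against restricting the double sum to ordered pairs $x_1<x_1'$, $x_2<x_2'$ (each unordered pair is counted once). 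Taking the square root yields the claimed identity.

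**Main obstacle.** The only real subtlety is verifying the Gram-determinant identity $\Tr(\beta\beta^\dagger)^2 - \Tr((\beta\beta^\dagger)^2) = 2\sum_{\text{pairs}} |\text{minor}|^2$, where the sum is over ordered pairs of row and column indices, and making sure the combinatorial factor of $2$ and the restriction to $x_1<x_1'$, $x_2<x_2'$ match. This is a routine but easy-to-mis-state linear-algebra fact; I would confirm it by writing $\Tr(\rho)^2 - \Tr(\rho^2) = \sum_{j<k}(\lambda_j\lambda_k + \lambda_k\lambda_j)$ in terms of Schmidt eigenvalues and cross-checking against the explicit $2\times 2$ qubit case in Definition~\ref{def:Concurrence}, where a single minor reproduces $C = 2|a_{00}a_{11} - a_{01}a_{10}|$. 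Everything else is bookkeeping that separates into the row ($a$) and column ($b$) contributions.
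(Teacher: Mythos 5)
Your proposal follows essentially the same route as the paper: the paper likewise treats $\tilde{\rho}^i_{\mathsf{A}}$ as the reduced state of the pure unnormalized state $|\psi_i\rangle_{\mathsf{AD}}$ with amplitude matrix $\beta_{x_1x_2}=(u^i_{x_1x_2})^\dagger a_{x_1}b_{x_2}$, computes $\Tr[\tilde{\rho}^i_{\mathsf{A}}]$ and $\Tr[(\tilde{\rho}^i_{\mathsf{A}})^2]$ directly, and substitutes into $C^2=2\left(\Tr(\cdot)^2-\Tr(\cdot^2)\right)$ to land on a sum of squared $2\times 2$ minors; your Cauchy--Binet phrasing is just a tidier packaging of that index computation, and the separation into row ($a$) and column ($b$) factors is exactly what the paper exploits.

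The one substantive caution concerns precisely the constant you flagged as the main obstacle. The Gram identity carries a factor of $2$: $\Tr(\beta\beta^\dagger)^2-\Tr\big((\beta\beta^\dagger)^2\big)=2\sum_{x_1<x_1',\,x_2<x_2'}\big|\det\beta_{\{x_1,x_1'\},\{x_2,x_2'\}}\big|^2$. Hence the method yields $C(\tilde{\rho}^i_{\mathsf{A}})^2=4\sum(\cdots)$, i.e.\ $C=2\left(\sum\cdots\right)^{1/2}$; the cancellation you describe against the ordered-pair restriction does not occur. This agrees with your own sanity check (for $d=2$ there is a single minor and $C=2\,|\beta_{00}\beta_{11}-\beta_{01}\beta_{10}|$), and, notably, with the paper itself: its proof of the lemma terminates at $C(\tilde{\rho}^i_{\mathsf{A}})^2=4\sum(\cdots)$, and the factor-of-$2$ version is the one propagated into \eqref{eq:abBoundConcurrence}, so the displayed formula \eqref{eq:f2} appears to have dropped a factor of $2$ relative to the paper's own derivation. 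Carrying out your cross-check carefully would have surfaced this, so the plan is sound; just be sure the final constant matches the qubit case rather than the displayed statement.
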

\begin{proof}
A direct calculation shows that 
\begin{align}
    \Tr[\tilde{\rho}^i_{\mathsf{A}}]=\sum_{x_1,x'_1\in\{0,1\}} |u^i_{x_1x'_1}|^2 |a_{x_1}|^2 |b_{x'_1}|^2\nonumber
\end{align}
and 
\begin{align}
    \Tr[(\tilde{\rho}^i_{\mathsf{A}})^2]=\sum_{x,x'\in\{0,1\}^2} u^i_{x_1x_2}u^{i*}_{x'_1 x_2} u^i_{x'_1 x'_2} u^{i*}_{x_1 x'_2} |a_{x_1}|^2 |b_{x_2}|^2 |a_{x'_1}|^2 |b_{x'_2}|^2.\nonumber
\end{align}
Hence, 
\begin{align}
    C(\tilde{\rho}^i_{\mathsf{A}})^2&=2 \cdot \sum_{x_1,x_2,x'_1,x'_2}\left(|u^i_{x_1x_2}|^2 |u^i_{x'_1,x'_2}|^2 -u^i_{x_1,x_2}u^{i*}_{x'_1x_2}u^i_{x'_1x'_2}u^{i*}_{x_1x'_2}\right)\cdot  |a_{x_1}|^2 |a_{x'_1}|^2 |b_{x_2}|^2 |b_{x'_2}|^2\nonumber\\
    &= 4\cdot \sum_{x_1<x'_1,x_2<x'_2} |u^i_{x_1 x_2} u^i_{x'_1 x'_2}-u^i_{x_1 x'_2} u^i_{x'_1 x_2}|^2 |a_{x_1}|^2 |a_{x'_1}|^2 |b_{x_2}|^2 |b_{x'_2}|^2.
\end{align}
\end{proof}
Next we show how in the case of qubits ($d=2$), the bound \eqref{eq:f7} on the average entropy can be deduced from \eqref{eq:f2}:

\begin{proof}[Proof of \propref{qubit_bound}]
In this case, the bounds \eqref{eq:EntropyBoundConcurrence} and $\eqref{eq:f2}$ imply 
\begin{align}
    \sum_{i=1}^4 p_i \cdot S(\rho^i_{\mathsf{A}})\leq  \sum_{i=1}^4 p_i \cdot C(\rho^i_{\mathsf{A}}) &\leq 2 \cdot |a_0a_1|\cdot |b_0b_1|\cdot \sum_{i=1}^4 |u^i_{00} u^i_{11}-u^i_{10} u^i_{01}|\label{eq:abBoundConcurrence}\\
    &\leq |a_0a_1|\cdot |b_0||b_1| \cdot \sum_{i=1}^4 \left((u_{00}^i)^2+(u_{11}^i)^2+(u_{01}^i)^2+(u_{10}^i)^2\right)\nonumber\\
    &= 4 \cdot |a_0a_1|\cdot |b_0b_1|.\nonumber
\end{align}
We note that the second inequality is tight when subsystems $\mathsf{CD}$ are measured in the Bell basis.
\end{proof}
\subsection{Cluster state}\label{sec:CMIChainQubits}
We next apply the previous lemma along a one-dimensional chain to show that the conditional correlations between the two end points of the chain decay exponentially with the length of the~chain.
\begin{theorem}[Decay of CMI for a chain of qubits]\label{thm:ChainQubitsCMI}
Consider the state $\ot_{k=1}^{n}\ket{a}_{\mathsf{L}_k\mathsf{R}_k}$ where $\ket{a}_{\mathsf{L}_k\mathsf{R}_k}=a_{00}\ket{00}_{\mathsf{L}_k\mathsf{R}_k}+a_{11}\ket{11}_{\mathsf{L}_k\mathsf{R}_k}$. 
Suppose we measure each pair of registers $(\mathsf{R}_j, \mathsf{L}_{j+1})$ for $j\in [n-1]$ in an arbitrary four dimensional orthonormal basis.
Let $\bm{\rho}_{1}$ be the post-measurement state on qubit $\mathsf{L}_1$.
It holds~that
\begin{align}
\E S(\bm{\rho}_{1})  \leq |2 a_0 a_1|^n. \label{eq:f6}
\end{align}
Since $a_0a_1\leq \frac{1}{2}$, this implies an exponentially decaying bound $\E S(\bm{\rho}_{1}) \leq e^{-\Omega(n)}$ in terms of $n$ unless $a_0=a_1=1/\sqrt{2}$, i.e. the state $\ket{a}$ is a maximally entangled state.
\end{theorem}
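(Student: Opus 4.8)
The plan is to prove \eqref{eq:f6} by induction on the number of entanglement-swapping measurements, tracking the \emph{expected concurrence} of the effective two-qubit state shared between the left endpoint $\mathsf{L}_1$ and the current ``frontier'' register $\mathsf{R}_{j+1}$ after the first $j$ measurements. Writing $\Gamma_j := \E\, C(\bm{\rho}^{(j)})$ for this quantity, I would establish the one-step recursion $\Gamma_{j+1} \leq 2|a_0 a_1|\cdot \Gamma_j$, starting from $\Gamma_0 = C(\ket{a}) = 2|a_0 a_1|$, so that after the $n-1$ measurements $\Gamma_{n-1}\leq (2|a_0 a_1|)^{n}$. The theorem then follows because $\bm{\rho}_1$ is the reduction of the final two-qubit state to $\mathsf{L}_1$, and the qubit bound $S(\rho)\leq C(\rho)$ from \thmref{concurrence} gives $\E\, S(\bm{\rho}_1)\leq \E\, C(\bm{\rho}_1) = \Gamma_{n-1}$.

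The first key observation is that, conditioned on any sequence of measurement outcomes, the state on $(\mathsf{L}_1,\mathsf{R}_{j+1})$ is \emph{pure}: each measurement on $(\mathsf{R}_j,\mathsf{L}_{j+1})$ acts only within pairs $j$ and $j+1$, so it collapses the already-swapped pure state on $(\mathsf{L}_1,\mathsf{R}_j)$ together with the fresh pair $\ket{a}_{\mathsf{L}_{j+1}\mathsf{R}_{j+1}}$ into a pure state on $(\mathsf{L}_1,\mathsf{R}_{j+1})$, leaving the untouched pairs in tensor product. Purity matters because, for a pure two-qubit state, the concurrence of its single-qubit reduction equals the concurrence of the full state; hence the quantity bounded in \propref{qubit_bound}, namely $\E\, C(\bm{\rho}_{\mathsf{A}})$ for the single-qubit reduction, is exactly the expected concurrence of the new effective pair that I want to feed into the next step.

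For the inductive step I would apply \propref{qubit_bound} to the swapping that measures $(\mathsf{R}_{j+1},\mathsf{L}_{j+2})$, with the ``left'' pair being the conditional pure state $\ket{\alpha}$ on $(\mathsf{L}_1,\mathsf{R}_{j+1})$ and the ``right'' pair the fresh $\ket{a}$. Since \propref{qubit_bound} is stated for inputs of the restricted Schmidt form $a_0\ket{00}+a_1\ket{11}$, I first bring $\ket{\alpha}$ to this form by local unitaries $U_{\mathsf{L}_1}\otimes V_{\mathsf{R}_{j+1}}$ with Schmidt weights satisfying $2|\lambda_0\lambda_1| = C(\ket{\alpha})$: the unitary on $\mathsf{L}_1$ commutes with all later measurements and leaves $C(\bm{\rho}_1)$ invariant, while the unitary on $\mathsf{R}_{j+1}$ can be absorbed into the (arbitrary) orthonormal basis used to measure $(\mathsf{R}_{j+1},\mathsf{L}_{j+2})$. \propref{qubit_bound} then yields $\sum_{i} p_i\, C(\rho^i) \leq 4|\lambda_0\lambda_1|\,|a_0 a_1| = C(\ket{\alpha})\cdot 2|a_0 a_1|$, and averaging over the earlier outcomes via the tower property gives $\Gamma_{j+1}\leq 2|a_0 a_1|\cdot \Gamma_j$, as claimed. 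Equivalently, one can skip the reduction and verify directly that the determinant structure of the concurrence makes it multiplicative, generalizing \lemref{qudit_expression} to an arbitrary pure input and producing a per-step basis factor controlled by $\sum_i |u^i_{00}u^i_{11}-u^i_{01}u^i_{10}| \leq 2$.

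I expect the main obstacle to be the careful conditional bookkeeping in the inductive step: the input to each swapping is not a single state but a \emph{random} conditional pure state depending on all previous outcomes, so the recursion must be obtained by first bounding the inner expectation over the fresh outcome $i_{j+1}$, uniformly in the prior history (using that the per-step factor $2|a_0a_1|$ and the basis bound $\sum_i|u^i_{00}u^i_{11}-u^i_{01}u^i_{10}|\leq 2$ hold for \emph{every} conditional state), and then taking the outer expectation. The local-unitary reduction must likewise be applied outcome-by-outcome, since the Schmidt basis of the conditional state varies with the history; checking that absorbing $V_{\mathsf{R}_{j+1}}$ into the next measurement preserves the hypotheses of \propref{qubit_bound} is the one place demanding genuine care, after which the bound $\E\, S(\bm{\rho}_1)\leq (2|a_0a_1|)^n = |2a_0a_1|^n$ is immediate.
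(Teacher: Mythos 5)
Your proposal is correct and follows essentially the same route as the paper: iterate the per-step concurrence bound of \propref{qubit_bound} over the $n-1$ measurements to get the telescoping factor $(2|a_0a_1|)^{n}$, then convert expected concurrence to expected entropy via the qubit bound $S(\rho)\leq C(\rho)$ of \thmref{concurrence}. The only difference is that you make explicit the outcome-by-outcome reduction of the conditional pure state to Schmidt form (absorbing the local unitary on $\mathsf{R}_{j+1}$ into the next measurement basis), a step the paper's proof leaves implicit when it applies \propref{qubit_bound} to the conditional state.
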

\begin{proof}
We assume that the joint measurements on registers $(\mathsf{R}_j,\mathsf{L}_{j+1})$ are performed sequentially starting from $(\mathsf{R}_1, \mathsf{L}_{2})$.
The measurement outcomes in each of these measurements are denoted by random variables $\bm{i}_1,\dots, \bm{i}_{n-1} \in \{0,1\}^2$.
Given these outcomes, we let $\bm{\rho}^{(\bm{i}_1\,\dots, \bm{i}_{k})}_1$ be the post-measurement state on subsystem $\mathsf{A}$ upon the first $k$ measurements.
Our goal is to upper bound $\E_{\bm{i}_1,\dots, \bm{i}_{n-1}} S(\bm{\rho}_1^{(\bm{i}_1\,\dots, \bm{i}_{n-1})})$.
This can be achieved by finding an upper bound on the expected concurrence $\E_{\bm{i}_1,\dots, \bm{i}_{n-1}} C(\bm{\rho}^{(\bm{i}_1\,\dots, \bm{i}_{n-1})}_1)$, and applying the inequality \eqref{eq:EntropyBoundConcurrence} in \thmref{concurrence}.

Consider the last measurement performed on subsystems $\mathsf{R_{n-1} L_n}$, and suppose the previous measurements have returned the values $i_1,\dots, i_{n-2}$ with a corresponding post-measurement state $\rho^{(i_1,\dots, i_{n-2})}$ on subsystems $\mathsf{L_1}$. 
Using \eqref{eq:f7}, we have that 
\begin{align}
    \E_{\bm{i}_{n-1}} C(\bm{\rho}_1^{(i_1,\dots,i_{n-2},\bm{i}_{n-1})})\leq 2\cdot |a_0 a_1| \cdot C(\rho_1^{(i_1,\dots,i_{n-2})}).\label{eq:boundExpectedConcurrence}
\end{align}
This implies the following bound on the expected concurrence over the measurement outcomes on subsystems $\mathsf{R}_{n-2}\mathsf{L}_{n-1}$
\begin{align}
    \E_{\bm{i}_{n-2}, \bm{i}_{n-1}} C(\bm{\rho}_1^{(i_1,\dots,i_{n-2},\bm{i}_{n-1})}) &\leq |2 a_0 a_1| \cdot  \E_{\bm{i}_{n-2}} C(\bm{\rho}_1^{(i_1,\dots,\bm{i}_{n-2})})\nonumber\\
    &\leq |2 a_0 a_1|^2 \cdot C(\rho_1^{(i_1,\dots, i_{n-3})}), 
\end{align}
where the last inequality via another application of \eqref{eq:f7}.
Iterating in this manner over all the $n-1$ measurements yields the claimed  bound \eqref{eq:f6}.
\end{proof}

We now apply this bound to the rotated 1D cluster state which is the ground state of the Hamiltonian in \eqref{eq:rotated_clusterH}. 
In the standard basis, the rotated cluster state is the ground state of $H = - \sum _{k=2}^{n-1} X_{k-1}Z_k X_{k+1}- Z_1 X_2 - X_{n-1} X_n -  X_{n-2} Z_{n-1} Z_n.$
This state can be prepared by first producing a series of EPR pairs on qubits $i, i+1$ for $i\in [n-1]$. 
Then a Bell-basis unitary is applied on qubits $2i, 2i+1$ for $i=n/2-1$. 

 Hence, due to the measurement-induced entanglement, the measurement distribution in the $Z$-basis exhibits long-range conditional correlations. In contrast, measuring in the $X$-basis results in a uniform distribution that lacks any correlations. More generally, measuring in a rotated basis induces an exponential decay of correlations, as formally stated in the following corollary:

\begin{corollary}[Decay of CMI in cluster state, restatement of \thmref{decayCMIClusterState}]\label{cor:decayCMIClusterState}
Consider the rotated $n$-qubit cluster state given by the circuit in \fig{ClusterState} for an even $n$.
This circuit prepares $n/2$ EPR pairs followed by a Bell basis rotation. We then apply a $y$-rotation \footnote{This is the same as changing $X$ and $Z$ Pauli operators in the cluster state Hamiltonian to $X \rightarrow \cos(2\theta) \cdot X  - \sin(2\theta)\cdot Z,\quad
Z \rightarrow \cos(2\theta) \cdot Z + \sin(2\theta)\cdot X$.} given by
\begin{align}
    R_y(\theta) = \begin{pmatrix}
    \cos(\theta/2) && \sin(\theta/2)\\
    -\sin(\theta/2) &&\cos(\theta/2)
\end{pmatrix} = \cos(\theta/2)\cdot \iden+ i \sin(\theta/2) \cdot Y, \quad \quad \theta \in [0, \pi/2]
\end{align}
to each qubit and proceed to measure all but the first and the last qubits in the standard $Z$ basis. 
The post-measurement state $\bm{\rho}_{1}$ on the first qubit satisfies 
\begin{align}
\E S(\bm{\rho}_{1})  \leq \cos(\theta)^{n-2}. \label{eq:CMIDecayCluster}
\end{align}
\end{corollary}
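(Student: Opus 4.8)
The plan is to realize the rotated cluster state as one instance of the entanglement-swapping chain of \thmref{ChainQubitsCMI}, and then to bound $\E S(\bm\rho_1)$ by iterating a per-swap contraction of the \emph{concurrence}, exactly as in that theorem — the only new ingredient being that I evaluate the measurement-dependent contraction factor precisely rather than using its worst-case value. As described before \corref{decayCMIClusterState}, the state is prepared from $n/2$ pairs $\ket{\mathrm{EPR}}_{\mathsf{L}_k\mathsf{R}_k}$ followed by a Bell rotation $W$ on each interior pair $(\mathsf{R}_k,\mathsf{L}_{k+1})$. Applying $R_y(\theta)$ to every qubit and then measuring the interior qubits in the $Z$ basis is, after moving the single-qubit rotations onto the measurement projectors, the same as measuring each interior pair of the \emph{bare} EPR chain in the fixed four-outcome basis $\ket{u_m}=W\,(R_y(\theta)^\dagger\otimes R_y(\theta)^\dagger)\ket{m}$ with $m\in\{0,1\}^2$. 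The two rotations on the boundary qubits $\mathsf{L}_1$ and $\mathsf{R}_{n/2}$ are local unitaries and hence leave $S(\bm\rho_1)$ unchanged, so I discard them; this is precisely the setting of \thmref{ChainQubitsCMI} with $\ket{a}=\ket{\mathrm{EPR}}$ and the above measurement.

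First I would reduce the goal: by \thmref{concurrence}, the inequality \eqref{eq:EntropyBoundConcurrence} gives $\E S(\bm\rho_1)\le \E\, C(\bm\rho_1)$, so it suffices to bound the expected concurrence of the left boundary qubit after all interior swaps. I then run the induction in the proof of \thmref{ChainQubitsCMI}, but replace the worst-case step \eqref{eq:f7}--\eqref{eq:boundExpectedConcurrence} by the exact per-swap identity furnished by \lemref{qudit_expression}: for a single swap with accumulated pair $\ket a$ and incoming pair $\ket b$, summing over outcomes gives $\E\, C(\bm\rho_{\mathsf A}) = 2\,|a_0a_1|\,|b_0b_1|\,\Sigma$ with $\Sigma:=\sum_i\lvert u^i_{00}u^i_{11}-u^i_{01}u^i_{10}\rvert$. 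Here the incoming pair is always maximally entangled ($2|b_0b_1|=1$) and, after restoring Schmidt form, the accumulated pair satisfies $2|a_0a_1|=C(\rho^{\mathrm{old}}_1)$, so the step becomes $\E\, C(\bm\rho^{\mathrm{new}}_1)\le \tfrac12\,C(\rho^{\mathrm{old}}_1)\,\Sigma$.

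The heart of the argument is the evaluation $\Sigma = 2\cos^2\theta$ for the rotated Bell basis, together with the fact that this value is stable along the chain. I would compute $\Sigma$ directly from \defref{bell-basis}; the endpoints are a useful check, since $\theta=0$ returns the Bell basis with $\Sigma=2$ (perfect swapping, no decay) while $\theta=\pi/2$ gives a product measurement with $\Sigma=0$ (immediate decorrelation). This yields the clean contraction $\E\, C(\bm\rho^{\mathrm{new}}_1)\le \cos^2\theta\cdot C(\rho^{\mathrm{old}}_1)$. Iterating over the $n/2-1$ interior swaps, starting from $C=1$ for the first EPR pair, gives $\E\, C(\bm\rho_1)\le(\cos^2\theta)^{n/2-1}=\cos(\theta)^{n-2}$, hence $\E S(\bm\rho_1)\le\cos(\theta)^{n-2}$; the conditional mutual information bound of \thmref{decayCMIClusterState} then follows from the Holevo bound \eqref{eq:CMIfromES}.

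The step I expect to be the main obstacle is the \emph{uniformity} of the per-swap factor. After each swap the accumulated pair is no longer diagonal in the computational basis, so to reapply \lemref{qudit_expression} I must restore Schmidt form by local unitaries — one on $\mathsf{L}_1$ (harmless for the entropy) and one on the newly exposed right qubit, which gets absorbed into the next measurement as a one-sided local unitary on the $\ket{u_m}$. What rescues the clean exponent is that $\Sigma=\sum_i\lvert\det U^i\rvert$, where $U^i$ is the $2\times2$ matricization of $\ket{u_i}$, is invariant under local unitaries on either tensor factor (each merely multiplies $\det U^i$ by a unit-modulus phase). Consequently every swap contributes the \emph{same} factor $\cos^2\theta$ irrespective of these basis corrections, which is exactly what makes the telescoping in the previous paragraph legitimate.
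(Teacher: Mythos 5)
Your proposal is correct and follows essentially the same route as the paper's proof: reduce to the entanglement-swapping chain of \thmref{ChainQubitsCMI}, bound entropy by concurrence via \eqref{eq:EntropyBoundConcurrence}, evaluate $\sum_i|u^i_{00}u^i_{11}-u^i_{10}u^i_{01}| = 2\cos^2\theta$ for the rotated measurement basis as in \eqref{eq:abBoundConcurrence}, and iterate the per-swap contraction over the $n/2-1$ interior measurements to obtain $\cos(\theta)^{n-2}$. Your additional observation that the contraction factor is uniform along the chain because $\sum_i|\det U^i|$ is invariant under the local unitaries needed to restore Schmidt form is a careful justification of a step the paper leaves implicit, but it does not change the argument.
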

\begin{proof}
We first bound the expected entropy after one application of the entanglement swapping measurement as in \eqref{eq:abBoundConcurrence}.
One can directly calculate that in the $y$-rotated basis, we have
\begin{align}
    |u^i_{00} u^i_{11}-u^i_{10} u^i_{01}| = \frac{1}{2}\cdot \cos(\theta)^2.\nonumber
\end{align}
Applying this to  \eqref{eq:abBoundConcurrence} yields
\begin{align}
    \sum_{i=1}^4 p_i S(\rho^i_{1})\leq  \sum_{i=1}^4 p_i C(\rho^i_{1}) &\leq 2 \cdot |a_0a_1|\cdot |b_0b_1|\cdot \sum_{i=1}^4 |u^i_{00} u^i_{11}-u^i_{10} u^i_{01}|\nonumber\\
    &= \cos(\theta)^2\cdot 2|a_0 a_1| \cdot 2|b_0 b_1|\nonumber
\end{align}
The remainder of the proof follows similar to that of \thmref{ChainQubitsCMI} (note the different number of qubits used in \thmref{ChainQubitsCMI}).
In particular, the cluster state satisfies $a_0 = a_1 = \frac{1}{\sqrt{2}}$, and hence, the bound in equation \eqref{eq:boundExpectedConcurrence} changes to
\begin{align}
    \E_{\bm{i}_{n-1}} C(\bm{\rho}_1^{(i_1,\dots,i_{n-2},\bm{i}_{n-1})})\leq \cos(\theta)^2\cdot C(\rho_1^{(i_1,\dots,i_{n-2})}).\nonumber
\end{align}
A successive application of this bound yields the claimed bound in \eqref{eq:CMIDecayCluster}.
\end{proof}

\subsection{Conditional independence  
for random shallow 1D circuits}\label{sec:CMIShallowCkt}
Our study of conditional independence in \secref{CMIChainQubits} and \secref{OptimalEntanglementSwapping} has focused on states that can be prepared by depth-$2$ quantum circuits. 
The first layer of gates in the circuit prepares entangled pairs of qubits.
The second layer, followed by measurements in the computational basis, forms the two-qubit measurements that cause the entanglement swapping. 

We presented two mechanisms for the emergence of conditionally-independent correlations after measurements.
The first, as stated in \thmref{ChainQubitsCMI}, occurs when partially entangled states are prepared in the first layer.
The second, as stated in \corref{decayCMIClusterState}, occurs when the two-qubit measurements are not maximally entangling. 

In this section, we extend this analysis to the case of depth-$D$ one-dimensional quantum circuits, for any constant~$D$.
We show that random shallow quantum circuits exhibit conditional independence due to both of the mechanisms stated above.

\subsubsection{Setup}
We slightly adjust the notation previously established in \defref{entanglementSwapping} for clarity.
Initially, we consider two pairs of entangled qudits (dimension $d_0$) in the state $\ket{\psi} = \ket{a}_{\mathsf{AB}} \otimes \ket{b}_{\mathsf{CD}}$. 
We then apply a rotation to this state, resulting in $U_{\mathsf{BC}} \otimes \iden_{\mathsf{AD}} \cdot \ket{a}_{\mathsf{AB}} \ket{b}_{\mathsf{CD}}$. 
Subsequently, we measure the $\mathsf{BC}$ subsystem in the computational basis. 
The probability of obtaining outcome $i \in [d_0]$ after measuring subsystem $\mathsf{B}$ but before measuring subsystem $\mathsf{C}$ is represented by $p_i$.
Conditioned on this outcome, the post-measurement state on $\mathsf{ABCD}$ is $\ket{\psi_i}$ with $\rho^i_{\mathsf{A}}$ denoting the reduced state on subsystem $\mathsf{A}$.

The Schmidt decomposition of the state $\ket{a}$ is given by $\ket{a}_{\mathsf{AB}} = \sum_{k=1}^{d_0} \sqrt{\lambda_k} \ket{k}_{\mathsf{A}} \ket{k}_{\mathsf{B}}$.
Now consider an experiment where the state $\ket{k}_{\mathsf{B}}\ket{b}_{\mathsf{CD}}$ is first rotated by $U_{\mathsf{BC}} \otimes \iden_{\mathsf{D}}$ and then its subsystem $\mathsf{B}$ is measured in the computational basis and the outcome $i \in [d_0]$ is observed.
The state $\ket{\phi_{i,k}}_{\mathsf{CD}}$ which describes the resulting post-measurement state can be expressed as
\begin{align}
    \ket{\phi_{i,k}}_{\mathsf{CD}} = \frac{1}{q_{i,k}^{1/2}} (\bra{i}_{\mathsf{B}}\otimes \iden_{\mathsf{CD}})\cdot (U_{\mathsf{BC}}\otimes \iden_{\mathsf{D}})\cdot \ket{k}_{\mathsf{B}} \ket{b}_{\mathsf{CD}}, \label{eq:postMeasurementState}
\end{align}
where the normalization factor $q_{i,k}$ is defined by 
\begin{align}
    q_{i,k} = \Tr[(\ketbra{i}{i}_{\mathsf{B}}\otimes \iden_{\mathsf{CD}})\cdot (U_{\mathsf{BC}}\otimes \iden_{\mathsf{D}})\cdot (\ketbra{k}{k}_{\mathsf{B}}\otimes \ketbra{b}{b}_{\mathsf{CD}})\cdot(U_{\mathsf{BC}}^\dagger\otimes \iden_{\mathsf{AD}})].\nonumber
\end{align}
The post-measurment state $\ket{\psi_i}$ resulting form measuring $\ket{a}_{\mathrm{AB}}\otimes \ket{b}_{\mathrm{CD}}$ admits the following~expression 
\begin{align}
    \ket{\psi_i} = \frac{1}{\sqrt{p_i}}\sum_k \sqrt{\lambda_k q_{i,k}}\cdot |k\rangle_{\mathsf{A}} \ket{i}_{\mathsf{B}} |\phi_{i,k}\rangle_{\mathsf{CD}}, \text{ with }  p_i = \sum_k \lambda_k q_{i,k}.\label{eq:PostMeasurementPsi_1}
\end{align}
There is also a simpler interpretation to the above, in terms of a random variable $\bm{X}$ which is set to $k\in [d_0]$ with probability $\lambda_k$, and the measurement outcome $\bm{I}$:
    \begin{equation}
    \ket{\psi_i} = \sum_k \sqrt{\Pr[\bm{X} = k|\bm{I} = i]}\cdot \ket{k}_{\mathsf{A}}\ket{i}_{\mathsf{B}}\ket{\phi_{i,k}}_{\mathsf{CD}}.\label{eq:alternativePostMeasurementState}
\end{equation}

\subsubsection{Imperfect swapping from the top Schmidt coefficient}

Our proof strategy is inspired by that of \cite{napp2019efficient2d}, which studies random depth-2 quantum circuits. 
We proceed by establishing a lower bound for the operator norm of the post-measurement reduced state.
This approach demonstrates that the entanglement swapping measurements, resulting from the application of random shallow circuits, are `imperfect' in a well-defined sense.

\begin{definition}[Imperfect entanglement swapping]\label{def:def-imperfect}
    An entanglement swapping protocol on subsystems $\mathsf{ABCD}$ with post-measurement stated given by \eqref{eq:postMeasurementState} is called $\epsilon$-imperfect if $$\min_{i, k, k'}|\braket{\phi_{i, k}}{\phi_{i, k'}}|^2 = \epsilon>0.$$
\end{definition}
\begin{theorem}[Bounding the operator norm, cf. \cite{napp2019efficient2d}]\label{thm:entswap-claim}
    Following the application of an $\epsilon$-imperfect entanglement swapping, the average top Schmidt coefficient of the reduced state on subsystem $\mathsf{A}$ is bounded from below by
    \begin{align}
        \E_{\bm{i}} \norm{\bm{\rho}_{\mathsf{A}}^{\bm{i}}}_\infty \geq (1-\epsilon)\cdot \norm{\rho_{\mathsf{A}}}_\infty  + \epsilon,\label{eq:boundExpectedTopSchmidt}
    \end{align}
where $\rho_{\mathsf{A}}$ is the state on subsystem $\mathsf{A}$ prior to the measurements. 
In particular, this implies that unless $\norm{\rho_{\mathsf{A}}}_\infty = 1$ we have $ \E_{\bm{i}} \norm{\bm{\rho}_{\mathsf{A}}^{\bm{i}}}_\infty > \norm{\rho_{\mathsf{A}}}_\infty$ for imperfect measurements.
\end{theorem}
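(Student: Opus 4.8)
The plan is to convert the operator norm into a variational quantity over test states on the $\mathsf{CD}$ subsystem, and then bound that maximization from below using a single, carefully chosen test state rather than the optimal one. Starting from the expansion of the post-measurement state in \eqref{eq:alternativePostMeasurementState}, namely $\ket{\psi_i} = \sum_k \sqrt{r_k^i}\,\ket{k}_{\mathsf{A}}\ket{i}_{\mathsf{B}}\ket{\phi_{i,k}}_{\mathsf{CD}}$ with $r_k^i := \Pr[\bm{X}=k\mid \bm{I}=i]$, I would first record the elementary identity $\norm{\bm{\rho}_{\mathsf{A}}^i}_\infty = \max_{\ket{g}_{\mathsf{CD}}} \sum_k r_k^i\,\abs{\braket{g}{\phi_{i,k}}}^2$, the maximum ranging over normalized $\ket{g}_{\mathsf{CD}}$. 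This holds because $\bm{\rho}_{\mathsf{A}}^i$ and the reduced state $\rho^i_{\mathsf{CD}} = \sum_k r_k^i\,\ketbra{\phi_{i,k}}{\phi_{i,k}}$ share the same nonzero spectrum (both arise from the pure state $\ket{\psi_i}$), the cross terms in $\rho^i_{\mathsf{CD}}$ vanishing by orthonormality of $\{\ket{k}_{\mathsf{A}}\}$, and the operator norm of $\rho^i_{\mathsf{CD}}$ is exactly the stated maximization.

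The heart of the argument is then to drop the maximization in favor of a fixed test vector. Let $k^\star$ be the index of the largest Schmidt weight of the \emph{pre-measurement} reduced state, so that $\lambda_{k^\star} = \max_k \lambda_k = \norm{\rho_{\mathsf{A}}}_\infty$, and take $\ket{g} = \ket{\phi_{i,k^\star}}$. Invoking the $\epsilon$-imperfectness of \defref{def-imperfect}, which guarantees $\abs{\braket{\phi_{i,k^\star}}{\phi_{i,k}}}^2 \geq \epsilon$ for every $k$ and equals $1$ when $k = k^\star$, I would obtain
\begin{align}
\norm{\bm{\rho}_{\mathsf{A}}^i}_\infty \;\geq\; \sum_k r_k^i\,\abs{\braket{\phi_{i,k^\star}}{\phi_{i,k}}}^2 \;\geq\; r_{k^\star}^i + \epsilon\,\bigl(1 - r_{k^\star}^i\bigr) \;=\; (1-\epsilon)\,r_{k^\star}^i + \epsilon.\nonumber
\end{align}

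Finally, I would average over the measurement outcome. Since $p_i = \Pr[\bm{I}=i]$ and $r_{k^\star}^i = \Pr[\bm{X}=k^\star\mid \bm{I}=i]$, with the law of $\bm{X}$ fixed by the Schmidt weights $\lambda_k$ as in \eqref{eq:PostMeasurementPsi_1}, the law of total probability gives the telescoping identity $\sum_i p_i\,r_{k^\star}^i = \Pr[\bm{X}=k^\star] = \lambda_{k^\star} = \norm{\rho_{\mathsf{A}}}_\infty$. Combining this with the previous display yields $\E_{\bm{i}}\norm{\bm{\rho}_{\mathsf{A}}^{\bm{i}}}_\infty \geq (1-\epsilon)\norm{\rho_{\mathsf{A}}}_\infty + \epsilon$, as claimed. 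The one genuinely delicate point is the choice of reference index: one must commit to the fixed $k^\star$ dictated by the pre-measurement state, rather than the $i$-dependent maximizer of $r_k^i$, since only this choice lets the average over outcomes collapse cleanly to $\norm{\rho_{\mathsf{A}}}_\infty$. Everything else reduces to the variational identity and a single application of imperfectness.
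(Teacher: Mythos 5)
Your proposal is correct and follows essentially the same route as the paper's proof: both pass to the reduced state on $\mathsf{CD}$ (equivalently, the variational characterization of the operator norm), lower bound it with the test vector $\ket{\phi_{i,k^\star}}$ for the fixed index $k^\star$ of the top pre-measurement Schmidt weight, invoke $\epsilon$-imperfectness, and average over outcomes via $\sum_i p_i \Pr[\bm{X}=k^\star\mid\bm{I}=i]=\lambda_{k^\star}$. Your explicit remark about committing to a fixed $k^\star$ rather than an $i$-dependent maximizer is a point the paper leaves implicit, but the argument is the same.
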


\begin{proof}
    Since after measuring subsystem $\mathsf{B}$, the post-measurement state on $\mathsf{ABCD}$ is pure and classical on subsystem $\mathsf{B}$, it holds that $\norm{\rho^i_{\mathsf{A}}}_\infty = \norm{\rho_{\mathsf{CD}}^i}_\infty$ where $\rho^i_{\mathsf{CD}}$ is the reduced state on $\mathsf{CD}$ conditioned on observing outcome $i \in [d_0]$ upon measuring $\mathsf{B}$. 
    Using the expression \eqref{eq:alternativePostMeasurementState}, the reduced state $\rho^i_{\mathsf{CD}}$ satisfies 
    \begin{align}
    \norm{\rho^i_{\mathsf{CD}}}_\infty &\geq \max_k\ \bra{ \phi_{i, k}}\cdot \rho^i_{\mathsf{CD}}\cdot \ket{\phi_{i, k}} \nonumber \\ 
    &= \max_k \ (\Pr[\bm{X}=k|\bm{I}=i] + \sum_{k'\neq k}\Pr[X=k'|\bm{I}=i] \cdot |\braket{\phi_{i, k}}{\phi_{i, k'}}|^2 )\nonumber \\ 
    &\geq  \max_k \ (\Pr[\bm{X}=k|\bm{I}=i] + (1-\Pr[\bm{X}=k|\bm{I}=i]) \cdot \epsilon)
    \end{align} 
for any choice of $k$. 
In expectation over $i \in [d_0]$ and picking $k$ to be the index of the top eigenvalue of the pre-measurement reduced state $\rho$ on subsystem $\mathsf{A}$, we observe the claimed bound in \eqref{eq:boundExpectedTopSchmidt}.

\end{proof}

In our previous analysis, the measurement is only performed on subsystem $\mathsf{B}$. 
Let $\ket{\psi_{ij}}$ denote the reduced state on subsystem $\mathsf{A}$ conditioned on observing outcomes $i \in [d_0]$ and $j \in [d_0]$ after measuring $\mathsf{BC}$.
We next show that a convexity argument ensures that on average, the top Schmidt coefficient $\norm{\rho_{\mathsf{A}}^{(i,j)}}_{\infty}$ strictly increases. 
\begin{corollary}\label{cor:DecayByConvexity}
 Following the application of an $\epsilon$-imperfect entanglement swapping, the average top Schmidt coefficient of the reduced state on subsystem $\mathsf{A}$ is bounded from below by
    \begin{equation}
        \E_{\bm{i}, \bm{j}} \norm{\bm{\rho}_{\mathsf{A}}^{(\bm{i},\bm{j})}}_\infty \geq \E_{\bm{i}} \norm{\E_{\bm{j}} \bm{\rho}_{\mathsf{
        A}}^{(\bm{i},\bm{j})}}_{\infty} \geq (1-\epsilon)\cdot \norm{\rho_A}_\infty + \epsilon
    \end{equation}
\end{corollary}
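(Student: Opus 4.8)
The plan is to chain the two stated inequalities: the left one is pure convexity, and the right one is a direct appeal to \thmref{entswap-claim}. The conceptual bridge between them is the \emph{no-signaling} identity
\begin{align*}
\E_{\bm{j}\mid \bm{i}}\,\bm{\rho}^{(\bm{i},\bm{j})}_{\mathsf{A}} = \rho^{\bm{i}}_{\mathsf{A}},
\end{align*}
that is, averaging the $\mathsf{A}$-reduced state over the outcomes $\bm{j}$ of the $\mathsf{C}$-measurement (conditioned on the $\mathsf{B}$-outcome $\bm{i}$) returns exactly the state $\rho^{\bm{i}}_{\mathsf{A}}$ produced by measuring $\mathsf{B}$ alone. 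Intuitively, discarding the outcome of a measurement on $\mathsf{C}$ leaves the reduced state on the disjoint subsystem $\mathsf{A}$ unchanged.

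First I would establish the left inequality using Jensen's inequality for the convex function $\norm{\cdot}_\infty$. For each fixed $\bm{i}$, convexity gives $\E_{\bm{j}\mid \bm{i}}\norm{\bm{\rho}^{(\bm{i},\bm{j})}_{\mathsf{A}}}_\infty \geq \norm{\E_{\bm{j}\mid \bm{i}}\,\bm{\rho}^{(\bm{i},\bm{j})}_{\mathsf{A}}}_\infty$, and averaging over $\bm{i}$ via the tower property $\E_{\bm{i},\bm{j}} = \E_{\bm{i}}\E_{\bm{j}\mid\bm{i}}$ yields
\begin{align*}
\E_{\bm{i},\bm{j}}\norm{\bm{\rho}^{(\bm{i},\bm{j})}_{\mathsf{A}}}_\infty \;\geq\; \E_{\bm{i}}\norm{\E_{\bm{j}}\,\bm{\rho}^{(\bm{i},\bm{j})}_{\mathsf{A}}}_\infty,
\end{align*}
which is exactly the first claimed bound. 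Next I would substitute the no-signaling identity to rewrite the middle term as $\E_{\bm{i}}\norm{\rho^{\bm{i}}_{\mathsf{A}}}_\infty$, the expected top Schmidt coefficient after measuring only $\mathsf{B}$. Applying \thmref{entswap-claim} then gives $\E_{\bm{i}}\norm{\rho^{\bm{i}}_{\mathsf{A}}}_\infty \geq (1-\epsilon)\norm{\rho_{\mathsf{A}}}_\infty + \epsilon$, closing the chain.

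The hard part will be the clean justification of the no-signaling identity, which is really the only nontrivial step. I would verify it directly: since $\mathsf{C}$ is measured in the computational basis on the post-$\mathsf{B}$ state $\ketbra{\psi_{\bm{i}}}{\psi_{\bm{i}}}$, outcome $\bm{j}$ occurs with probability $p_{\bm{j}\mid\bm{i}}$ via rank-one projectors $\Pi_{\bm{j}}$ on $\mathsf{C}$, leaving the post-measurement $\mathsf{A}$-state $\bm{\rho}^{(\bm{i},\bm{j})}_{\mathsf{A}}$, and
\begin{align*}
\sum_{\bm{j}} p_{\bm{j}\mid\bm{i}}\,\bm{\rho}^{(\bm{i},\bm{j})}_{\mathsf{A}} = \Tr_{\mathsf{BCD}}\!\Big(\sum_{\bm{j}} \Pi_{\bm{j}}\,\ketbra{\psi_{\bm{i}}}{\psi_{\bm{i}}}\,\Pi_{\bm{j}}\Big) = \Tr_{\mathsf{BCD}}\big(\ketbra{\psi_{\bm{i}}}{\psi_{\bm{i}}}\big) = \rho^{\bm{i}}_{\mathsf{A}},
\end{align*}
because the dephasing map $\sum_{\bm{j}}\Pi_{\bm{j}}(\cdot)\Pi_{\bm{j}}$ acts only on $\mathsf{C}$, and dephasing followed by tracing out $\mathsf{C}$ equals tracing it out directly. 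Everything else is a routine application of Jensen's inequality together with the already-proven \thmref{entswap-claim}.
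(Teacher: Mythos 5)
Your proof is correct and follows exactly the route the paper intends: the paper states this corollary as following from ``a convexity argument'' applied to \thmref{entswap-claim} without spelling out the details, and your chain --- Jensen's inequality for the convex function $\norm{\cdot}_\infty$, the no-signaling identity $\E_{\bm{j}\mid\bm{i}}\,\bm{\rho}^{(\bm{i},\bm{j})}_{\mathsf{A}} = \rho^{\bm{i}}_{\mathsf{A}}$ (which correctly interprets the paper's shorthand $\E_{\bm{j}}$ as the conditional average), and then \thmref{entswap-claim} --- is precisely that argument made explicit. The dephasing-then-partial-trace verification of the no-signaling step is the right way to justify the only nontrivial identity.
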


We now discuss how Haar random unitaries and low-depth, random quantum circuits give rise to imperfect entanglement swapping. Namely, we argue that in expectation over the randomness of the unitaries, these protocols are $\epsilon$-imperfect for some constant $\epsilon$, a function of the local dimension (in the Haar random case) and circuit depth. 

In the upcoming analysis, it is convenient to use the following un-normalized version of the overlap between states $\ket{\phi_{i,k}}$ and $\ket{\phi_{i,k'}}$ for $k\neq k'$ which are defined in \eqref{eq:postMeasurementState}:
\begin{align}
    &f(U_{\mathsf{BC}}, i, k, k',\sigma_{\mathsf{C}}):= q_{i,k} \cdot q_{i,k'}\cdot |\braket{\phi_{i,k}}{\phi_{i,k'}}|^2\nonumber\\
    &= \Tr_{\mathsf{BC}, \mathsf{B'C'}}\bigg[(U^\dagger_{\mathsf{BC}}\otimes U^\dagger_{\mathsf{\mathsf{B'C'}}})(\ketbra{i}{i}_{\mathsf{B}}\otimes \ketbra{i}{i}_{\mathsf{B'}}\otimes \iden_{\mathsf{CC'}})(U_{\mathsf{BC}}\otimes U_{\mathsf{B'C'}})\cdot \big(\ketbra{k}{k'}_B \otimes \sigma_{\mathsf{C}}\otimes \ketbra{k'}{k}_{\mathsf{B'}} \otimes \sigma_{\mathsf{C'}} \big)\bigg].\label{eq-imperfect}
    \end{align}
where $$\sigma_{\mathsf{C}} = \Tr_{\mathsf{D}} \ketbra{b}{b}_{\mathsf{CD}}$$ is the reduced state on subsystem $\mathsf{C}$ of the original state $\ket{a}_{\mathsf{AB}}\otimes \ket{b}_{\mathsf{CD}}$ prior to the entanglement swapping measurement.
The notion of $\epsilon$-imperfectness for an entanglement swapping protocol in \defref{def-imperfect} relies on establishing that $\min_{i,k,k'} |\braket{\phi_{i,k}}{\phi_{i,k'}}|^2 \geq \epsilon$.
However, dealing with the normalization factors $q_{i,k}$ and $q_{i,k'}$, turns out to be challenging when averaging over random choices of unitaries $U$.
In the next lemma, we show that one can instead directly lower bound the un-normalized version of the overlaps $\min_{i, k, k'\in [d_0]} f(U, i, k, k', \sigma_{\mathsf{C}}) \geq \epsilon$.
This results in a looser $(d_0\cdot\epsilon)$-imperfectness, but is sufficient for our application of establishing conditional independence.
\begin{lemma} \label{lem:imperfect-expectation}
Suppose $\E_{\bm{U}} f(\bm{U}, i, k, k', \sigma_{\mathsf{C}}) \geq \epsilon$ for all $i \in [d_0]$ and $ k\neq k'\in [d_0]$.
Then, we have
\begin{align}
    \E_{\bm{U},\bm{i}}\norm{\bm{\rho}_{\mathsf{A}}^{\bm{i}}}_\infty \geq \norm{\rho_{\mathsf{A}}}_\infty \cdot (1- d_0\cdot \epsilon)+d_0\cdot \epsilon
\end{align}
\end{lemma}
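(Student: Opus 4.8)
The plan is to replicate the proof of \thmref{entswap-claim}, but to track the \emph{un-normalized} overlaps $f$ rather than the normalized ones $|\braket{\phi_{i,k}}{\phi_{i,k'}}|^2$, whose normalizations $q_{i,k}$ resist control under the average over $\bm{U}$. First I would recall that, since the post-measurement state $\ket{\psi_i}$ on $\mathsf{ACD}$ is pure once subsystem $\mathsf{B}$ has collapsed to $\ket{i}$, one has $\norm{\rho^i_\mathsf{A}}_\infty = \norm{\rho^i_\mathsf{CD}}_\infty$, and from $\rho^i_\mathsf{CD} = \sum_k \Pr[\bm{X}=k\mid\bm{I}=i]\,\ketbra{\phi_{i,k}}{\phi_{i,k}}$ one obtains, for \emph{any} fixed index $k$,
\begin{align}
    \norm{\rho^i_\mathsf{A}}_\infty \;\geq\; \bra{\phi_{i,k}}\rho^i_\mathsf{CD}\ket{\phi_{i,k}} = \Pr[\bm{X}=k\mid\bm{I}=i] + \sum_{k'\neq k}\Pr[\bm{X}=k'\mid\bm{I}=i]\,|\braket{\phi_{i,k}}{\phi_{i,k'}}|^2. \nonumber
\end{align}
This is exactly the inequality established inside \thmref{entswap-claim}.

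Second, I would weight this by $p_i$ and sum over the $\mathsf{B}$-outcomes $i\in[d_0]$. Using $\Pr[\bm{X}=k\mid\bm{I}=i]=\lambda_k q_{i,k}/p_i$ together with the normalization $\sum_i q_{i,k}=1$, the diagonal term telescopes to $\sum_i p_i\Pr[\bm{X}=k\mid\bm{I}=i] = \lambda_k$, while the off-diagonal term becomes $\sum_i\sum_{k'\neq k}\lambda_{k'}\,q_{i,k'}\,|\braket{\phi_{i,k}}{\phi_{i,k'}}|^2$. The crucial step is then to lower bound $q_{i,k'}|\braket{\phi_{i,k}}{\phi_{i,k'}}|^2 \geq q_{i,k}q_{i,k'}|\braket{\phi_{i,k}}{\phi_{i,k'}}|^2 = f(U,i,k,k',\sigma_\mathsf{C})$, which holds simply because $q_{i,k}\leq 1$. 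This single trick converts the unwieldy normalized overlaps into the un-normalized quantity $f$ for which the hypothesis provides control.

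Third, I would take $\E_{\bm{U}}$ of the resulting bound, noting that $\E_{\bm{U},\bm{i}}\norm{\bm{\rho}^{\bm{i}}_\mathsf{A}}_\infty = \E_{\bm{U}}\sum_i p_i\norm{\rho^i_\mathsf{A}}_\infty$ since the outcome distribution $p_i$ itself depends on $\bm{U}$; then apply the hypothesis $\E_{\bm{U}} f(\bm{U},i,k,k',\sigma_\mathsf{C})\geq\epsilon$ and sum over the $d_0$ values of $i$ to get $\sum_{i=1}^{d_0}\E_{\bm{U}}f\geq d_0\epsilon$ — precisely where the factor $d_0$ is lost. Finally, choosing $k$ to be the index of the top eigenvalue of the pre-measurement state $\rho_\mathsf{A}$, so that $\lambda_k=\norm{\rho_\mathsf{A}}_\infty$, and using $\sum_{k'\neq k}\lambda_{k'}=1-\lambda_k$, gives
\begin{align}
    \E_{\bm{U},\bm{i}}\norm{\bm{\rho}^{\bm{i}}_\mathsf{A}}_\infty \;\geq\; \lambda_k + d_0\epsilon(1-\lambda_k) = \norm{\rho_\mathsf{A}}_\infty(1-d_0\epsilon)+d_0\epsilon, \nonumber
\end{align}
as claimed, valid in the relevant regime $d_0\epsilon\leq 1$ where maximizing over the free index $k$ indeed selects the top eigenvalue.

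The main obstacle is conceptual rather than computational: one cannot directly invoke the $\epsilon$-imperfectness of \defref{def-imperfect}, because the normalizations $q_{i,k}$ entering $|\braket{\phi_{i,k}}{\phi_{i,k'}}|^2 = f/(q_{i,k}q_{i,k'})$ are random and correlated with the outcome probabilities $p_i$, so they are not well-behaved under $\E_{\bm{U}}$. The resolution — absorbing one factor $q_{i,k}\leq 1$ to pass from the normalized overlap to $f$ \emph{before} averaging — is what makes the argument go through, at the cost of the $d_0$ factor arising from the sum over the $d_0$ measurement outcomes.
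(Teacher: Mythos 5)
Your proposal is correct and follows essentially the same route as the paper's proof: the same pointwise bound $\norm{\rho^i_{\mathsf{A}}}_\infty \geq \bra{\phi_{i,k}}\rho^i_{\mathsf{CD}}\ket{\phi_{i,k}}$ from \thmref{entswap-claim}, the same telescoping of the diagonal term to $\lambda_k$, and the same key trick of absorbing a factor $q_{i,k}\leq 1$ to replace the normalized overlap by $f$ before averaging over $\bm{U}$ (the paper writes this as $f/q_{i,k}\geq f$, which is identical to your $q_{i,k'}|\braket{\phi_{i,k}}{\phi_{i,k'}}|^2\geq f$). The only addition you make beyond the paper is the explicit caveat that picking $k$ as the top eigenvalue is only the optimal choice when $d_0\epsilon\leq 1$, which is a fair remark but does not change the argument.
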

\begin{proof}
Following the proof of \thmref{entswap-claim} and using the notation in \eqref{eq:postMeasurementState} and \eqref{eq:PostMeasurementPsi_1}, we have that for all $k\in [d_0]$
\begin{align}
   \E_{\bm{U}, \bm{i}} \norm{\bm{\rho}_{\mathsf{A}}^{\bm{i}}}_{\infty} &\geq \lambda_k + \sum_{k'\neq k}\E_{\bm{i}} \E_{\bm{U}}\bigg[  \cdot \frac{q_{\bm{i}, k'}\lambda_{k'}}{p_{\bm{i}}} \cdot |\braket{\phi_{\bm{i}, k}}{\phi_{\bm{i}, k'}}|^2\bigg] \nonumber \\ 
   &=\lambda_k + \sum_{k'\neq k} \lambda_{k'}\cdot \sum_i \E_{\bm{U}} \bigg[\frac{f(\bm{U}, i, k, k',\sigma_{\mathsf{C}})}{q_{i, k}}\bigg] \geq \lambda_k + \sum_{k'\neq k} \lambda_{k'} \cdot d_0\cdot \epsilon\nonumber \\
    &= \lambda_k + (1-\lambda_k)\cdot d_0\cdot \epsilon.
\end{align}
To arrive at the last inequality, we used the fact that $q_{i, k}\leq 1$ and by assumption $\E_{\bm{U}} f(\bm{U}, i, k, k', \sigma_{\mathsf{C}}) \geq \epsilon$ for all $i \in [d_0]$ and $ k\neq k'\in [d_0]$.
\end{proof}

By sequentially repeating the result of \lemref{imperfect-expectation}, we can establish a decay of conditional correlations along a chain of qudits that satisfy $\E_{\bm{U}} f(\bm{U}, i, k, k', \sigma_{\mathsf{C}}) \geq \epsilon$.
To this end, consider the state $\ot_{k=1}^{n}\ket{a_k}_{\mathsf{L}_k\mathsf{R}_k}$ where each subsystem has a local dimension $d_0$.
Suppose each pair of registers $(\mathsf{R}_j, \mathsf{L}_{j+1})$ for $j\in [n-1]$ are first rotated by a random unitary $\bm{U_j}$ and then measured in the computational basis.
We denote the outcomes by $(\bm{i}_1,\dots, \bm{i}_{n-1})$ for $\bm{i}_1,\dots, \bm{i}_{n-1} \in [d_0]^{\times 2}$.
Given a random set of unitaries $\bm{U_1},\dots,\bm{U_{n-1}}$, we let $\bm{\rho}_{1}^{(\bm{i}_1,\dots, \bm{i}_{j})}$ be the post-measurement state on subsystem~$\mathsf{L}_1$ upon the first $j$ measurements.
Our goal is to upper bound $\E_{\bm{U}_1,\dots, \bm{U}_{n-1}} \E_{\bm{i}_1,\dots, \bm{i}_{n-1}} S(\bm{\rho}_{1}^{(\bm{i}_1,\dots, \bm{i}_{n-1})})$.

\begin{proposition}[Exponential decay of average post-measurement entanglement]\label{prop:claim-chain}
     Suppose each of the entanglement swapping protocols on registers $(\mathsf{B_j}, \mathsf{A}_{j+1})$ for $j \in [n-1]$ satisfies $$\E_{\bm{U}_j} f(\bm{U}, i, k, k', \sigma_{\mathsf{A_{j+1}}}) \geq \epsilon$$ for all $i \in [d_0]$, $ k\neq k'\in [d_0]$, and reduced states $\sigma_{\mathsf{A}_{j+1}}$ for some constant $\epsilon > 0$.
     Then, the post measurement reduced state $\bm{\rho}^{(\bm{i_1},\dots,\bm{i_{n-1}})}_{1}$ on $\mathsf{L}_1$ satisfies
    \begin{align}
        \E_{\bm{U_1},\dots, \bm{U_{n-1}}}\E_{\bm{i}_1,\dots, \bm{i}_{n-1}} S(\bm{\rho}^{(\bm{i}_1,\dots, \bm{i}_{n-1})}_{1}) \leq e^{-\epsilon \cdot (n-1)} (1 + \epsilon\cdot (n-1) + \log d_0).
    \end{align}
\end{proposition}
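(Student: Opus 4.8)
The plan is to track the ``defect'' $\delta := 1 - \norm{\rho}_\infty$ of the largest Schmidt coefficient of the state reduced to the left endpoint $\mathsf{L}_1$, show that each swapping step contracts $\E\,\delta$ by a constant factor, iterate this contraction along the chain, and finally convert the resulting bound on $\norm{\bm{\rho}_1}_\infty$ into a bound on the entropy $S(\bm{\rho}_1)$.

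First I would establish a single-step contraction, conditioning on the history $\mathcal{F}_{j-1}$ of all unitaries and outcomes before step $j$. Since the input is a product of pure bipartite pairs and every measurement acts on two adjacent registers, the state obtained after the first $j-1$ swaps is, conditioned on the outcomes, a \emph{pure} bipartite state between $\mathsf{L}_1$ and the current live register $\mathsf{R}_j$; let $\rho$ denote its reduction to $\mathsf{L}_1$. Processing the pair $(\mathsf{R}_j,\mathsf{L}_{j+1})$ is then exactly one entanglement swapping with $\mathsf{A}=\mathsf{L}_1$, $\mathsf{B}=\mathsf{R}_j$, $\mathsf{C}=\mathsf{L}_{j+1}$, $\mathsf{D}=\mathsf{R}_{j+1}$, the fresh pair $\ket{a_{j+1}}$ supplying $\sigma_{\mathsf{C}}=\sigma_{\mathsf{A}_{j+1}}$. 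Crucially, the hypothesis $\E_{\bm{U}_j} f(\bm{U},i,k,k',\sigma_{\mathsf{A}_{j+1}})\ge \epsilon$ constrains only the fixed state $\sigma_{\mathsf{A}_{j+1}}$ and is independent of $\mathcal{F}_{j-1}$, so \lemref{imperfect-expectation} applies conditionally and gives $\E[\,\norm{\bm{\rho}^{\bm{i}}}_\infty \mid \mathcal{F}_{j-1}]\ge \norm{\rho}_\infty(1-d_0\epsilon)+d_0\epsilon$ after measuring $\mathsf{B}$; measuring $\mathsf{C}$ as well only increases this quantity, by the convexity argument of \corref{DecayByConvexity}. Rearranging, and using $d_0\ge 1$ (so $1-d_0\epsilon\le 1-\epsilon$) together with $1-\norm{\rho}_\infty\ge 0$, yields the per-step bound
\begin{equation}
\E[\delta_j \mid \mathcal{F}_{j-1}] \le (1-d_0\epsilon)\,\delta_{j-1} \le (1-\epsilon)\,\delta_{j-1},
\end{equation}
where $\delta_j$ is the defect after step $j$.

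Taking expectations and invoking the tower property, $\E\,\delta_j \le (1-\epsilon)\,\E\,\delta_{j-1}$ for every $j$, so after all $n-1$ swaps
\begin{equation}
\E\bigl[\,1-\norm{\bm{\rho}_1^{(\bm{i}_1,\dots,\bm{i}_{n-1})}}_\infty\bigr] \le (1-\epsilon)^{n-1}\delta_0 \le e^{-\epsilon(n-1)} =: \beta,
\end{equation}
using $\delta_0\le 1$. For the entropy conversion I would use that a $d_0$-dimensional state with top eigenvalue $1-\delta$ satisfies $S(\rho)\le h(\delta)+\delta\log(d_0-1)$ (the entropy is maximized by spreading the remaining weight uniformly), where $h$ is the binary entropy. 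This envelope is concave and increasing in $\delta$ on $[0,\tfrac12]$, so for $n$ large enough that $\beta\le\tfrac12$, Jensen's inequality and monotonicity give $\E\,S(\bm{\rho}_1)\le h(\beta)+\beta\log(d_0-1)$. Finally, $h(\beta)=-\beta\log\beta-(1-\beta)\log(1-\beta)\le \beta\,\epsilon(n-1)+\beta$ (since $-\log\beta=\epsilon(n-1)$ and $-(1-\beta)\log(1-\beta)\le\beta$), together with $\log(d_0-1)\le\log d_0$, gives $\E\,S(\bm{\rho}_1)\le e^{-\epsilon(n-1)}\bigl(1+\epsilon(n-1)+\log d_0\bigr)$, as claimed.

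I expect the main obstacle to lie in the conditional chaining rather than in the final estimates: one must verify that after each swap the accumulated state is genuinely a pure bipartite ``live pair'' on $\mathsf{L}_1\mathsf{R}_j$ (so that $\norm{\rho}_\infty$ is exactly the relevant top Schmidt coefficient) and that the hypothesis of \lemref{imperfect-expectation} is history-independent, which together justify applying the single-step contraction conditionally and then averaging via the tower property. The entropy conversion is routine once one records the concavity and monotonicity of the envelope $\delta\mapsto h(\delta)+\delta\log(d_0-1)$.
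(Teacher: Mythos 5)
Your proposal is correct and follows essentially the same route as the paper: iterate the single-step operator-norm contraction from \lemref{imperfect-expectation} and \corref{DecayByConvexity} along the chain (your defect recursion $\E\,\delta_j \le (1-\epsilon)\,\E\,\delta_{j-1}$ is just a rearrangement of the paper's induction $\E\norm{\bm{\rho}_1^{\bm{I}}}_\infty \ge 1-(1-\epsilon)^{n-1}$), and then convert to entropy via Schur concavity plus concavity of the resulting envelope in the top eigenvalue. The only cosmetic differences are your explicit use of the tower property for the conditional chaining and the binary-entropy envelope $h(\delta)+\delta\log(d_0-1)$ in place of the paper's $(1-\lambda)\log\frac{d_0-1}{1-\lambda}+\lambda\log\frac{1}{\lambda}$, which are the same bound.
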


\begin{proof}
    For a given set of unitaries $\bm{U_1},\dots, \bm{U_{n-1}}$, suppose we have obtained $(\bm{i}_1,\dots, \bm{i}_j)$ upon measuring the first $j$ pairs of subsystems. 
    By \corref{DecayByConvexity}, for all $j \in [n-2]$, the operator norm $\norm{\bm{\rho}_{1}^{(\bm{i_1},\dots, \bm{i_j}, \bm{i_{j+1}})}}_\infty$~satisfies
    \begin{align}
    \E_{\substack{\bm{U_1},\dots, \bm{U_j}\\\bm{i_1},\dots,\bm{i_j}}}\E_{\bm{U_{j+1}}, \bm{i_{j+1}}} \norm{\bm{\rho}_1^{(\bm{i_1},\dots, \bm{i_j}, \bm{i_{j+1}})}}_\infty \geq \E_{\substack{\bm{U_1},\dots, \bm{U_j}\\\bm{i_1},\dots,\bm{i_j}}}\norm{\bm{\rho}_1^{(\bm{i_1},\dots, \bm{i_j})}}_\infty (1-\epsilon) + \epsilon 
    \end{align}
and thereby by induction and using the simplified notation $\bm{I}=(\bm{i}_1,\dots, \bm{i}_{n-1})$ and $\bm{U}= (\bm{U_1},\dots, \bm{U}_{n-1})$, we have
    \begin{align}
        \E_{\bm{U}, \bm{I}} \norm{\bm{\rho}_1^{\bm{I}}}_\infty  &\geq \epsilon\cdot \left((1-\epsilon)^{n-2}+(1-\epsilon)^{n-3}+\dots+1\right)+ \|\rho_{1}\|_\infty\cdot (1-\epsilon)^{n-1}\nonumber\\
         &\geq 1-(1-\epsilon)^{n-1} \geq 1 - e^{-\epsilon (n-1)}.\label{eq:BoundOnLambda}
    \end{align}
    Let $\bm{\lambda}_{\bm{I}} = \norm{\bm{\rho}_1^{\bm{I}}}_\infty  $ be the largest eigenvalue of $\bm{\rho}_1^{\bm{I}}$ and define $\lambda:=\E_{\bm{U}, \bm{I}}[\bm{\lambda}_{\bm{I}}]$.
    We can upper bound the entropy of $\bm{\rho}_1^{\bm{I}}$ by distributing the remaining $1-\bm{\lambda}^{\bm{I}}$ mass evenly among the other $d_0-1$ eigenvalues. 
    We conclude, 
        \begin{align}
          \E_{\bm{U}, \bm{I}} S(\bm{\rho}_1^{\bm{I}}) &\leq  \E_{\bm{U},\bm{I}} S\big(\lambda_{\bm{I}}, \underbrace{\frac{1-\lambda_{\bm{I}}}{d_0-1},\dots, \frac{1-\lambda_{\bm{I}}}{d_0-1}}_{\times (d_0-1)}\big) &&\text{Schur concavity}\nonumber\\
          &\leq S\big(\lambda, \underbrace{\frac{1-\lambda}{d_0-1},\dots, \frac{1-\lambda}{d_0-1}}_{\times (d_0-1)}\big) && \text{concavity}\nonumber\\
          &\leq (1-\lambda)\log \frac{d_0-1}{1-\lambda} + \lambda\log \frac{1}{\lambda} \nonumber\\
          &\leq  e^{-\epsilon (n-1)} (1 + \epsilon (n-1) + \log d_0). && \text{using \eqref{eq:BoundOnLambda}}.
    \end{align}
\end{proof}

In the next lemma, we prove $\min_{i,k\neq k'}\E_{\bm{U}} f(\bm{U}, i, k, k', \sigma_{\mathsf{C}}) \geq \epsilon$ for two cases where the unitaries $\bm{U}_{\mathsf{BC}}$ are drawn from the Haar measure, or when these unitaries are constructed from random shallow quantum circuits.

\begin{lemma}\label{lem:UBCBounds}
    Fix $k\neq k'$, and consider the context of \lemref{imperfect-expectation}.
    \begin{enumerate}
        \item \textbf{Haar random:} if $\bm{U}_{\mathsf{BC}}$ is drawn from the Haar measure over $\mathsf{BC}$ with local dimensions $\dim{\mathcal{H}_{\mathsf{B}}} = \dim{\mathcal{H}_{\mathsf{C}}} = d_0$, then for all $i\in[q]$ and $\sigma_{\mathsf{C}}$, we have
     \begin{align}
        \E_{\bm{U}}f(\bm{U}, i, k, k', \sigma_{\mathsf{C}}) \geq \frac{1}{2d_0^4}.\label{eq:BoundEFHaar}
      \end{align}
        \item \textbf{Shallow circuit:} Suppose $\bm{U}_{\mathsf{BC}}$ is a triangular quantum circuit constructed in \secref{ReductionShallow} and \fig{lightconesShallow} consisting of $D$ levels of two-local gates acting on qudits of dimension $d$. 
        Then, for all $i\in[q]$ and $\sigma_{\mathsf{C}}$, we have
        \begin{align}
        \E_{\bm{U}}f(\bm{U}, i, k, k',\sigma_{\mathsf{C}})  \geq \left(\frac{1}{(d^2+1)(d+1)}\right)^{D-1}\gtrsim \frac{1}{d^{3D}}.\label{eq:BoundEFShallow}
        \end{align}   
    \end{enumerate}
\end{lemma}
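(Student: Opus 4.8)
The plan is to handle both parts through the second-moment (unitary $2$-design) average of the Haar measure, using that the quantity $f$ in~\eqref{eq-imperfect} is a degree-$(2,2)$ polynomial in $\bm U$: it contains one factor $U_{\mathsf{BC}}\otimes U_{\mathsf{B'C'}}$ and one factor $U^\dagger_{\mathsf{BC}}\otimes U^\dagger_{\mathsf{B'C'}}$, so only the symmetric-group $S_2$ twirl enters and no higher moments are needed.

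For the Haar case, I would write $\E_{\bm U} f = \Tr\!\big[\,\E_{\bm U}\big[(U^\dagger\otimes U^\dagger)\,P\,(U\otimes U)\big]\cdot Q\,\big]$, where $P=\ketbra{i}{i}_{\mathsf B}\otimes\ketbra{i}{i}_{\mathsf{B'}}\otimes\iden_{\mathsf{CC'}}$ and $Q=\ketbra{k}{k'}_{\mathsf B}\otimes\sigma_{\mathsf C}\otimes\ketbra{k'}{k}_{\mathsf{B'}}\otimes\sigma_{\mathsf{C'}}$. The twirl over the $d_0^2$-dimensional space $\mathsf{BC}$ gives $\E_{\bm U}[(U^\dagger\otimes U^\dagger)P(U\otimes U)] = c_I\,\iden + c_S\,\mathrm{SWAP}$, with $c_I,c_S$ fixed by $\Tr(P)=d_0^2$ and $\Tr(\mathrm{SWAP}\cdot P)=d_0$. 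Contracting against $Q$, the crucial simplification is that $\Tr(Q)=0$ because $k\neq k'$ makes $\Tr(\ketbra{k}{k'})=\braket{k'}{k}=0$; hence the identity term vanishes and only the swap term survives, giving $\E_{\bm U} f = c_S\,\Tr(\mathrm{SWAP}\cdot Q) = c_S\,\Tr(\sigma_{\mathsf C}^2)$. A one-line evaluation yields $c_S = \frac{d_0-1}{d_0^4-1} = \frac{1}{(d_0+1)(d_0^2+1)}$, and since $\sigma_{\mathsf C}$ is a normalized state on $\mathbb{C}^{d_0}$ we have $\Tr(\sigma_{\mathsf C}^2)\geq 1/d_0$. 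Thus $\E_{\bm U} f \geq \frac{1}{d_0(d_0+1)(d_0^2+1)}$, and the comparison $d_0(d_0+1)(d_0^2+1)\leq 2d_0^4$ for $d_0\geq 2$ gives the claimed $\frac{1}{2d_0^4}$.

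For the shallow-circuit case, I would exploit the triangular light-cone structure of $U_{\mathsf{BC}}$ from~\secref{ReductionShallow} and average over the Haar-random two-qudit gates one at a time, proceeding inward from the tip of the light cone. The key observation is that each such single-gate average is exactly a single-gate instance of the computation above with $d_0=d$ (one qudit playing the role of the measured system $\mathsf B$ and the other the role of $\mathsf C$): it produces the factor $\frac{1}{(d+1)(d^2+1)}$ and collapses the depth-$D$ triangular circuit onto a depth-$(D-1)$ triangular circuit acting on an appropriately propagated reduced state. Iterating over the $D-1$ layers of the light cone and using $f\geq 0$ to discard the non-negative identity-term remainders at each step, I would obtain $\E_{\bm U} f \geq \big(\tfrac{1}{(d+1)(d^2+1)}\big)^{D-1}$; the final $\gtrsim d^{-3D}$ follows from $(d^2+1)(d+1)\leq 4d^3$.

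The hard part will be making the inductive step of the second part precise: I must show that after integrating out the outermost gate, the residual two-copy expression is again of the form handled by the single-gate twirl for a strictly smaller triangular circuit, with a valid (sub)normalized state in place of $\sigma_{\mathsf C}$, and that the leftover identity-channel contributions are genuinely non-negative so they can be dropped while preserving the product lower bound. This requires careful bookkeeping of the partial $S_2$ twirls---tracking how $\iden$ versus $\mathrm{SWAP}$ contributions on one gate feed into the $P$ and $Q$ operators seen by the next gate---and controlling the intermediate purities so that no extra $d$-dependent loss accumulates beyond the stated per-layer factor.
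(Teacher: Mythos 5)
Your Haar-random part is correct and follows essentially the same route as the paper: twirl the projector $\ketbra{i}{i}_{\mathsf B}\otimes\ketbra{i}{i}_{\mathsf{B'}}\otimes\iden_{\mathsf{CC'}}$ into $c_I\,\iden+c_S\,\operatorname{SWAP}$, observe that $k\neq k'$ kills the identity contribution, and bound $c_S\Tr(\sigma_{\mathsf C}^2)=\frac{d_0-1}{d_0^4-1}\Tr(\sigma_{\mathsf C}^2)\geq\frac{1}{2d_0^4}$. No issues there.

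The shallow-circuit part, however, has a genuine gap, and it sits exactly where you flagged it. Your inductive picture---that integrating out the outermost gate "is exactly a single-gate instance of the computation above" and leaves a depth-$(D-1)$ triangular circuit acting on a propagated state---is not accurate. After twirling one two-qudit gate, the boundary operator is no longer a rank-one projector tensored with identity; by the second-moment formula it becomes a linear combination of $\iden\otimes\iden$ and $\operatorname{SWAP}\otimes\operatorname{SWAP}$ on the two qudits and their copies, so the next gate never sees an object of the form $\ketbra{i}{i}\otimes\ketbra{i}{i}\otimes\iden$ again, and the per-gate factor is $\frac{d}{d^2+1}$ (for a domain-wall move), not $\frac{1}{(d+1)(d^2+1)}$. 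The paper obtains the per-layer factor $\frac{1}{(d+1)(d^2+1)}$ only after first inserting a layer of single-qudit Haar gates on $\mathsf B$ (by left-invariance of the Haar measure) to convert $\ketbra{i}{i}^{\otimes 2}$ into a uniform combination over $\{\iden,\operatorname{SWAP}\}^{\otimes(D-1)}$ with weight $\frac{1}{d^{D-1}(d+1)^{D-1}}$, and then tracking a single domain-wall trajectory through the remaining layers, each step of which costs $\frac{d}{d^2+1}$. Your bookkeeping, as proposed, would not reproduce the claimed constant.

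The second missing ingredient is the justification for discarding the other terms. Appealing to $f\geq 0$ is not enough: to lower bound the sum by one term you need every \emph{individual} term $c_\gamma\Tr(\gamma\cdot Q)$ to be non-negative, where $Q=\ketbra{k}{k'}_{\mathsf B}\otimes\ketbra{k'}{k}_{\mathsf{B'}}\otimes\sigma_{\mathsf C}\otimes\sigma_{\mathsf{C'}}$. Since $Q$ is not positive semidefinite as written, this requires the additional maneuver of rewriting $\Tr(\gamma\cdot Q)=\Tr\bigl(\gamma\cdot\operatorname{SWAP}_{\mathsf{BB'}}\otimes\iden_{\mathsf{CC'}}\cdot\ketbra{k'}{k'}_{\mathsf B}\otimes\ketbra{k}{k}_{\mathsf{B'}}\otimes\sigma_{\mathsf C}\otimes\sigma_{\mathsf{C'}}\bigr)$, i.e.\ a trace of a product of identity/swap operators against a genuine product density matrix, which is non-negative; combined with the non-negativity of the twirl coefficients $c_\gamma$, this licenses keeping only the $\gamma=\operatorname{SWAP}_{\mathsf{BB'}}\otimes\iden_{\mathsf{CC'}}$ trajectory. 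Without these two steps the product lower bound $\bigl(\frac{1}{(d+1)(d^2+1)}\bigr)^{D-1}$ does not follow from what you have written.
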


The two bounds \eqref{eq:BoundEFHaar} and \eqref{eq:BoundEFShallow} can be compared by replacing $d_0 = d^{D-1}$. 
A corollary of these bounds is the following exponential decay of conditional mutual information in the output state of a random 1D shallow circuit. 

\begin{corollary}[Conditional independence for random 1D shallow circuits, restatement of \thmref{CMIdecayRandom1D}]
       Consider the family of random depth-$D$ brickwork quantum circuits acting on a one-dimensional chain of qudits with the local dimension $d$. 
       Fix $L \geq c_0 \cdot (4d)^{3D} \cdot \log(n \cdot D\log(d))$ for some sufficiently large constant $c_0$ and define contiguous regions $\mathsf{A}_1\cup \dots\cup\mathsf{A}_{n'}$ from left to right with $n' = n/L$ such that the size of each region is $|\mathsf{A}_i| = L \geq \Omega(\log(n))$ and $\dist(\mathsf{A}_i-\mathsf{A}_j)\geq \Omega(\log(n))$ for $|i-j|>1$.
    With probability $1 - 1/\poly(n)$ over the random choice of circuits, the measurement distribution of these random circuits satisfies the conditional~independence~property
     \begin{align}
    I(\mathsf{A}_i:\mathsf{A}_j|\mathsf{A}_{i+1},\dots, \mathsf{A}_{j-1}) \leq e^{-\Omega{\left(\dist(\mathsf{A}_i, \mathsf{A}_j)\right)}}\leq 1/\poly(n).
     \end{align}
for any $1\leq i<j\leq n'$ with $|i-j|>1$.
\end{corollary}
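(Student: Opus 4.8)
The plan is to combine the reduction of \secref{ReductionShallow} with the exponential decay established in \propref{claim-chain}, and then to promote the resulting expectation bound to a high-probability statement by a Markov-plus-union-bound argument. First I would invoke the decomposition of the depth-$D$ brickwork circuit into backward lightcones (which prepare the entangled pairs $\ket{a}_{\mathsf{L}_k\mathsf{R}_k}$ on blocks of width $D-1$) and forward lightcones (which, together with the computational-basis measurement of $\mathsf{B}=\mathsf{A}_{i+1}\cup\cdots\cup\mathsf{A}_{j-1}$, implement a chain of entanglement-swapping measurements). Reading off the geometry, the number of swaps separating $\mathsf{A}_i$ from $\mathsf{A}_j$ is $m=\Theta(\dist(\mathsf{A}_i,\mathsf{A}_j)/D)$; since $|i-j|>1$ there is a full region of width $L$ between them, so $m\geq\Omega(L/D)$. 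Crucially, in the triangular decomposition adjacent forward lightcones act on disjoint gates, so the swap unitaries $\bm U_1,\dots,\bm U_{m-1}$ are independent, which is exactly what \propref{claim-chain} requires.

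Next I would feed the per-swap imperfection bound of \lemref{UBCBounds} — namely $\E_{\bm U}f(\bm U,i,k,k',\sigma)\geq((d^2+1)(d+1))^{-(D-1)}\gtrsim d^{-3D}$ — through \lemref{imperfect-expectation} and \corref{DecayByConvexity} into \propref{claim-chain}. This yields, for the post-measurement state $\bm\rho$ on the boundary block of $\mathsf{A}_i$ facing $\mathsf{B}$, an expectation bound $\E_{\bm U}\E_{\bm I}S(\bm\rho)\leq e^{-\Omega(m)}\bigl(1+\Omega(m)+\log d_0\bigr)$ with $d_0=d^{D-1}$. The classical CMI is controlled by this entropy through the Holevo bound \eqref{eq:CMIfromES}, $I(\mathsf{A}_i:\mathsf{A}_j\mid\mathsf{B})\leq\E_{\bm x_\mathsf{B}}S(\rho_{\mathsf{A}_i\mid\bm x_\mathsf{B}})$; combined with the decay this gives $\E_{\bm U}I(\mathsf{A}_i:\mathsf{A}_j\mid\mathsf{B})\leq e^{-\Omega(\dist(\mathsf{A}_i,\mathsf{A}_j))}$, the decay rate per unit distance being $\Omega(d^{-3D}/D)$, a positive constant for fixed $D$ and $d$.

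To obtain the stated high-probability guarantee I would apply Markov's inequality to the nonnegative random variable $\E_{\bm I}S(\bm\rho)$, a function of the circuit $\bm U$, noting that for every \emph{fixed} circuit the Holevo bound already gives $I(\mathsf{A}_i:\mathsf{A}_j\mid\mathsf{B})\leq\E_{\bm I}S(\bm\rho)$. Thus $\Pr_{\bm U}[\,I(\mathsf{A}_i:\mathsf{A}_j\mid\mathsf{B})>\delta\,]\leq e^{-\Omega(\dist)}/\delta$, and a union bound over the at most $n'^2\leq n^2$ pairs $(i,j)$ with $|i-j|>1$ costs only a polynomial factor (and needs no independence between pairs). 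The threshold $L\geq c_0\,(4d)^{3D}\log(n\,D\log d)$ is chosen so that the exponent $\Omega(\dist)\gtrsim d^{-3D}\cdot L/D$ exceeds a large constant multiple of $\log(n\,D\log d)$, making $e^{-\Omega(\dist)}\leq1/\poly(n)$ with polynomial degree tunable through $c_0$; taking $c_0$ large enough absorbs both the $1/\delta$ loss and the $n^2$ union bound, leaving total failure probability $1/\poly(n)$.

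The main obstacle I expect is the passage from the region-level quantity $I(\mathsf{A}_i:\mathsf{A}_j\mid\mathsf{B})$ to the single-boundary-block entropy that \propref{claim-chain} actually bounds. A naive use of \eqref{eq:CMIfromES} with $\mathsf{A}=\mathsf{A}_i$ fails, because the reduced state on the full width-$L$ region is highly mixed — it is entangled with the regions outside $\mathsf{A}_i\cup\mathsf{B}\cup\mathsf{A}_j$ — so its entropy is $\Theta(L)$ rather than exponentially small. The fix is to exploit shallow-circuit shielding: only the width-$O(D)$ strip $\mathsf{L}$ of $\mathsf{A}_i$ adjacent to $\mathsf{B}$ can become correlated with $\mathsf{A}_j$ once $\mathsf{B}$ is measured, so the interior of $\mathsf{A}_i$ is conditionally independent of $X_{\mathsf{A}_j}$ given $(X_{\mathsf{L}},X_\mathsf{B})$. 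Using the chain rule for CMI and monotonicity of mutual information under the local measurement channels, one reduces $I(\mathsf{A}_i:\mathsf{A}_j\mid\mathsf{B})$ to the $\mathsf{L}$-to-$\mathsf{A}_j$ correlation mediated by the swap chain, which is exactly the genuine (not local-mixedness) entanglement that \propref{claim-chain} tracks. Making this shielding and the accompanying approximate quantum Markov structure precise, while keeping the error additive and polynomially small, is the step requiring the most care.
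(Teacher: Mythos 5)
Your proposal follows essentially the same route as the paper's proof: the lightcone decomposition into an entanglement-swapping chain, the per-swap imperfection bound of \lemref{UBCBounds} fed through \propref{claim-chain}, the Holevo bound \eqref{eq:CMIfromES}, and Markov plus a union bound over pairs of regions. You also correctly anticipated the one subtle step — reducing the region-level CMI to the entropy of the $O(D)$-wide boundary block via shielding and data processing — which is exactly how the paper handles it (its "lightcone argument" plus data-processing inequality restricting attention to the right boundary of $\mathsf{A}_i$ and left boundary of $\mathsf{A}_j$).
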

\begin{proof}
We consider the setup of \propref{claim-chain}. 
Given the stated choice of $L$, the distance $\dist(\mathsf{A_i}, \mathsf{A_j}) \geq \Omega(\log(n))$ for any $1\leq i<j\leq n'$.
We decompose the circuit into backward $\triangle$ and forward lightcones $\bigtriangledown$ as in \fig{lightconesShallow}, and think of each subsystem $\mathsf{A_i}$ as a region consisting of an integer number of backward lightcones, along with two half-backward lightcones at its right and left boundaries.

Suppose we proceed to measure subsystems $\mathsf{A}_{i+1},\dots, \mathsf{A}_{j-1}$. 
We know from \lemref{UBCBounds} that shallow 1D quantum circuits satisfy $\E_{\bm{U}} f(\bm{U}, i, k, k', \sigma) \geq \epsilon$ with $\epsilon \gtrsim (4d)^{-3D}$ for any choice of reduced state $\sigma$. 
Following a lightcone argument, we see that, the measurement probabilities and the resulting entanglement between the unmeasured subsystems does not change by applying the gates in any of the forward lightcones $\bigtriangledown$ supported in regions $\mathsf{A}_i$ and $\mathsf{A}_j$.
This along with the data processing inequality applied to subsystems $\mathsf{A}_1,\dots, \mathsf{A}_{i-1}$ and $\mathsf{A}_{j+1},\dots, \mathsf{A}_{n'}$, shows that it suffices to apply our entanglement swapping analysis only to a part of the circuit consisting of the middle subsystems $\mathsf{A}_{i+1}, \dots, \mathsf{A_j}-1$ along with the qubits in the right boundary of $\mathsf{A}_i$ and the left boundary of $\mathsf{A}_{j}$, independent of the measurement outcome in the rest of the circuit. 
Applying \eqref{eq:BoundOnLambda} and \eqref{eq:CMIfromES} implies that
$\E_{\bm{U}} I(\mathsf{A}_i:\mathsf{\mathsf{A}_j}|\mathsf{A}_{i+1},\dots, \mathsf{A}_{j-1}) \leq \frac{1}{n^{c_0-1}}$.
An application of the Markov inequality and a union bound over the choice of regions $\mathsf{A}_i$ and $\mathsf{A}_j$ concludes the proof.

\end{proof}

In the following section, we state the proof of \lemref{UBCBounds}.

\subsubsection{Haar Random Unitaries}\label{sec:HaarRandomUBC}

We use the following property of Haar random unitaries:

\begin{fact}[cf. \cite{anticoncentration}] \label{fact:haarfact}
    Let $V$ be a unitary acting on a $q$-dimensional space $\mathcal{H}$, and $O$ a Hermitian operator on the $q^2$-dimensional space $\mathcal{H} \otimes \mathcal{H}$.
    We have
        \begin{equation}
        \E_{\bm{V}}[\bm{V}^{\dagger} \otimes \bm{V}^{\dagger}\cdot  O \cdot \bm{V} \otimes \bm{V}] = \frac{\Tr(O)-q^{-1}\Tr(O \operatorname{SWAP})}{q^2-1} \cdot \iden + \frac{\Tr(O\operatorname{SWAP}) - q^{-1}\Tr(O)}{q^2-1} \cdot \operatorname{SWAP},\nonumber
    \end{equation}
    where the expectation over $\bm{V}$ is with respect to the Haar measure, and $\operatorname{SWAP}$ is the Swap operator such that $\operatorname{SWAP}\cdot \ket{i}_{\mathsf{B}} \ket{j}_{\mathsf{C}} = \ket{j}_{\mathsf{B}} \ket{i}_{\mathsf{C}}$ for any $i,j \in [q]$.
    This, in particular, implies the following equalities for bipartite spaces $\mathcal{H} = \mathsf{BC}$ and $\mathcal{H}=\mathsf{B'C'}$:
\begin{enumerate}[label=\roman*.]
\item $\E_{\bV} (\bV^{\dagger{\ot 2}}\cdot \iden \ot \iden \cdot \bV^{\ot 2})=\iden \ot \iden$,
\item $\E_{\bV} (\bV^{\dagger{\ot 2}}\cdot\mathrm{SWAP} \ot \mathrm{SWAP}\cdot \bV^{\ot 2})=\mathrm{SWAP} \ot \mathrm{SWAP}$,
\item $\E_{\bV} (\bV^{\dagger{\ot 2}}\cdot\mathrm{SWAP} \ot \iden \cdot\bV^{\ot 2})=\frac{q}{q^2+1}\cdot(\iden \ot \iden +\mathrm{SWAP}\ot \mathrm{SWAP})$,
\item $\E_{\bV} (\bV^{\dagger{\ot 2}}\cdot\iden \ot \mathrm{SWAP} \cdot \bV^{\ot 2})=\frac{q}{q^2+1}\cdot(\iden \ot \iden +\mathrm{SWAP}\ot \mathrm{SWAP})$,
\item $\E_{\bV} (\bV^{\dagger} \cdot \ketbra{\phi}{\phi} \cdot \bV)^{\ot 2}=\frac{1}{q(q+1)}\cdot(\iden+\mathrm{SWAP})$~~ for any state $\ket{\phi}$.
\end{enumerate}
\end{fact}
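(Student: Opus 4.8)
The plan is to establish the master second-moment (``twirl'') identity by a representation-theoretic argument and then read off items (i)--(v) as one-line specializations. Write $\Phi(O) := \E_{\bm V}[\bm V^{\dagger}\ot \bm V^{\dagger}\cdot O\cdot \bm V\ot \bm V]$ for the map appearing on the left-hand side, acting on operators over $\mathcal{H}\ot\mathcal{H}$ with $\dim\mathcal{H}=q$. First I would observe that $\Phi(O)$ is invariant under conjugation by every $W\ot W$: by left-invariance of the Haar measure, $(W^\dagger\ot W^\dagger)\,\Phi(O)\,(W\ot W)=\Phi(O)$ for all $W\in U(q)$, so $\Phi(O)$ lies in the commutant of $\{W\ot W : W\in U(q)\}$. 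The $k=2$ case of Schur--Weyl duality identifies this commutant with the algebra generated by the representation of the symmetric group $S_2$, namely $\Span\{\iden,\operatorname{SWAP}\}$. Hence there exist scalars $\alpha,\beta$ (depending on $O$) with $\Phi(O)=\alpha\cdot\iden+\beta\cdot\operatorname{SWAP}$.

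Next I would pin down $\alpha,\beta$ by pairing against the two basis elements. Since both $\iden$ and $\operatorname{SWAP}$ commute with every $\bm V\ot\bm V$, cyclicity of the trace lets the conjugation pass through the expectation, giving $\Tr(\Phi(O)\,\iden)=\Tr(O)$ and $\Tr(\Phi(O)\operatorname{SWAP})=\Tr(O\operatorname{SWAP})$. Using $\Tr(\iden)=q^2$, $\Tr(\operatorname{SWAP})=q$, and $\Tr(\operatorname{SWAP}^2)=q^2$, this yields the linear system
\begin{align}
\Tr(O)=\alpha q^2+\beta q, \qquad \Tr(O\operatorname{SWAP})=\alpha q+\beta q^2,\nonumber
\end{align}
whose solution (determinant $q^2(q^2-1)$) is exactly the stated coefficients $\alpha=\frac{\Tr(O)-q^{-1}\Tr(O\operatorname{SWAP})}{q^2-1}$ and $\beta=\frac{\Tr(O\operatorname{SWAP})-q^{-1}\Tr(O)}{q^2-1}$, which proves the master formula.

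Finally, items (i)--(v) follow by inserting the relevant $O$ and evaluating its two traces. For the bipartite items I would take $\mathcal{H}=\mathsf{BC}$ with $\dim\mathsf{B}=\dim\mathsf{C}=q$, so the master formula applies at total dimension $q^2$ in place of $q$, and read $\operatorname{SWAP}\ot\operatorname{SWAP}$ as the full swap of the two copies, which factorizes as the swap on $\mathsf{BB'}$ times the swap on $\mathsf{CC'}$. Then (i) uses $\Tr(\iden)=q^4$, $\Tr(\operatorname{SWAP})=q^2$, giving $(\alpha,\beta)=(1,0)$; (ii) uses $\Tr(\operatorname{SWAP})=q^2$, $\Tr(\operatorname{SWAP}^2)=q^4$, giving $(0,1)$; and (iii)--(iv) take $O=\operatorname{SWAP}_{\mathsf{BB'}}\ot\iden$ (resp.\ $\iden\ot\operatorname{SWAP}_{\mathsf{CC'}}$), for which both $\Tr(O)=q^3$ and $\Tr(O\operatorname{SWAP})=q^3$ because $\operatorname{SWAP}^2=\iden$ on the swapped factor, so substituting into the solved coefficients at dimension $q^2$ gives $\alpha=\beta=\frac{q}{q^2+1}$. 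Item (v) is the single-copy case $O=\ketbra{\phi}{\phi}^{\ot 2}$, a projector onto the symmetric vector $\ket{\phi}^{\ot 2}$, so $\Tr(O)=\Tr(O\operatorname{SWAP})=1$ and the coefficients collapse to $\frac{1}{q(q+1)}$.

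The trace computations are routine; the step that genuinely needs care is the Schur--Weyl input identifying the commutant as two-dimensional, together with the dimensional bookkeeping that the corollaries invoke the master formula at dimension $q^2$ (local dimension $q$) while item (v) invokes it at dimension $q$. The main obstacle is conceptual rather than computational: ensuring that the $\operatorname{SWAP}$ of the master formula and the factorized $\operatorname{SWAP}\ot\operatorname{SWAP}$ of the corollaries denote the same operator under the identification $\mathcal{H}\ot\mathcal{H}=\mathsf{BB'}\ot\mathsf{CC'}$, and that the factor swaps $\operatorname{SWAP}_{\mathsf{BB'}}$, $\operatorname{SWAP}_{\mathsf{CC'}}$ are correctly traced when computing $\Tr(O)$ and $\Tr(O\operatorname{SWAP})$.
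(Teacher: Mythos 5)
Your proof is correct, and it is worth noting that the paper itself never proves this statement: it is recorded as a Fact with a pointer to the literature (``cf.~\cite{anticoncentration}''), so your argument supplies a proof that the paper only cites. The route you take — showing $\Phi(O)$ lies in the commutant of $\{W\ot W\}$ by Haar invariance, invoking Schur--Weyl duality at $k=2$ to conclude $\Phi(O)=\alpha\,\iden+\beta\,\operatorname{SWAP}$, and then solving the $2\times 2$ trace system $\alpha q^2+\beta q=\Tr(O)$, $\alpha q+\beta q^2=\Tr(O\operatorname{SWAP})$ — is the standard derivation of the second-moment twirl, and your coefficients match the stated ones. Your specializations also check out, including the two points that genuinely require care: (a) items (i)--(iv) invoke the master formula at total dimension $q^2$ (local dimension $q$ on each of $\mathsf{B},\mathsf{C}$) whereas item (v) invokes it at dimension $q$, which is exactly how the paper later uses them (items (iii)--(iv) for the two-qudit boundary gates with coefficient $\frac{d}{d^2+1}$, item (v) for the single-qudit layer with coefficient $\frac{1}{d(d+1)}$); and (b) the $\operatorname{SWAP}$ of the master formula, i.e.\ the exchange of the copies $\mathsf{BC}\leftrightarrow\mathsf{B'C'}$, factorizes as $\operatorname{SWAP}_{\mathsf{BB'}}\ot\operatorname{SWAP}_{\mathsf{CC'}}$, which is what makes the trace computations $\Tr(O)=\Tr(O\operatorname{SWAP})=q^3$ for items (iii)--(iv) come out right. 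One immaterial nitpick: the substitution $\bm V\to \bm V W$ uses right-invariance rather than left-invariance of the Haar measure (both hold on a compact group), and the case $q=1$ is degenerate but vacuous.
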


\begin{lemma}
        Fix $k\neq k'$. 
        In the context of \lemref{imperfect-expectation}, if $\bm{U}_{\mathsf{BC}}$ is drawn from the Haar measure over $\mathsf{BC}$ with local dimensions $d_0$, then for all $i\in[d_0]$, we have

    \begin{equation}
        \E_{\bm{U}}f(\bm{U}, i, k, k', \sigma_{\mathsf{C}}) = \frac{d_0-1}{d_0^4-1}\cdot \Tr[\sigma_C^2] \geq \frac{1}{2d_0^4}.\nonumber
    \end{equation}
    
\end{lemma}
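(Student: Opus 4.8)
The plan is to exploit the fact that $f(\bm{U}, i, k, k', \sigma_{\mathsf{C}})$, as written in \eqref{eq-imperfect}, is a degree-$(2,2)$ expression in the entries of $\bm{U}_{\mathsf{BC}}$: it contains two copies of $\bm{U}$ and two of $\bm{U}^\dagger$. Its Haar average is therefore exactly a second-moment twirl, which is computed by \factref{haarfact}. I would apply that fact with $\mathcal{H} = \mathcal{H}_{\mathsf{BC}}$, of dimension $q = d_0^2$, letting the two tensor copies in the formula play the roles of the unprimed and primed registers $\mathsf{BC}$ and $\mathsf{B'C'}$. Reading off \eqref{eq-imperfect}, the operator sandwiched between $\bm{U}^{\dagger\ot 2}$ and $\bm{U}^{\ot 2}$ is $O = \ketbra{i}{i}_{\mathsf{B}}\ot \ketbra{i}{i}_{\mathsf{B'}}\ot \iden_{\mathsf{CC'}}$, and after twirling I pair the result against $M = \ketbra{k}{k'}_{\mathsf{B}}\ot \sigma_{\mathsf{C}}\ot \ketbra{k'}{k}_{\mathsf{B'}}\ot \sigma_{\mathsf{C'}}$.

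The two scalar inputs to the twirl formula are $\Tr(O)$ and $\Tr(O\cdot \mathrm{SWAP})$, where $\mathrm{SWAP}$ exchanges the \emph{composite} registers $\mathsf{BC}\leftrightarrow \mathsf{B'C'}$. A direct evaluation gives $\Tr(O) = d_0^2$, and using the standard identity $\Tr[(A\ot B)\,\mathrm{SWAP}] = \Tr(AB)$ with $A = B = \ketbra{i}{i}_{\mathsf{B}}\ot \iden_{\mathsf{C}}$ gives $\Tr(O\cdot \mathrm{SWAP}) = \Tr[\ketbra{i}{i}_{\mathsf{B}}\ot \iden_{\mathsf{C}}] = d_0$. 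Substituting into \factref{haarfact} with $q = d_0^2$ expresses $\E_{\bm{U}}[\bm{U}^{\dagger\ot 2} O\, \bm{U}^{\ot 2}]$ as $c_{\iden}\,\iden + c_{\mathrm{SWAP}}\,\mathrm{SWAP}$, where the coefficient of interest is $c_{\mathrm{SWAP}} = \frac{\Tr(O\cdot\mathrm{SWAP}) - d_0^{-2}\Tr(O)}{d_0^4-1} = \frac{d_0 - 1}{d_0^4-1}$.

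Next I would contract each term against $M$. The identity contribution is proportional to $\Tr(M)$, which vanishes because it carries the factor $\Tr[\ketbra{k}{k'}_{\mathsf{B}}] = \langle k'|k\rangle = 0$ for $k\neq k'$; this is precisely where the hypothesis $k\neq k'$ is used. For the $\mathrm{SWAP}$ contribution, applying $\Tr[(A\ot B)\,\mathrm{SWAP}] = \Tr(AB)$ a second time with $A = \ketbra{k}{k'}_{\mathsf{B}}\ot \sigma_{\mathsf{C}}$ and $B = \ketbra{k'}{k}_{\mathsf{B}}\ot \sigma_{\mathsf{C}}$ collapses the $\mathsf{B}$-factor via $\ketbra{k}{k'}\cdot \ketbra{k'}{k} = \ketbra{k}{k}$ and the $\mathsf{C}$-factor into $\Tr(\sigma_{\mathsf{C}}^2)$, so $\Tr[\mathrm{SWAP}\cdot M] = \Tr(\sigma_{\mathsf{C}}^2)$. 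Combining the two contractions yields the exact identity $\E_{\bm{U}} f = \frac{d_0-1}{d_0^4-1}\,\Tr(\sigma_{\mathsf{C}}^2)$, as claimed.

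Finally I would convert the equality into the stated lower bound. Factoring $d_0^4 - 1 = (d_0-1)(d_0+1)(d_0^2+1)$ gives $\frac{d_0-1}{d_0^4-1} = \frac{1}{(d_0+1)(d_0^2+1)}$, and since $\sigma_{\mathsf{C}}$ is a density matrix on a $d_0$-dimensional space its purity obeys $\Tr(\sigma_{\mathsf{C}}^2)\geq 1/d_0$. Hence $\E_{\bm{U}} f \geq \frac{1}{d_0(d_0+1)(d_0^2+1)}$, and a one-line check that $d_0(d_0+1)(d_0^2+1) = d_0^4 + d_0^3 + d_0^2 + d_0 \leq 2d_0^4$ for every $d_0\geq 2$ yields $\E_{\bm{U}} f\geq \frac{1}{2d_0^4}$. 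The only real hazard in this calculation is the bookkeeping: one must keep $\mathrm{SWAP}$ acting on the \emph{full} registers $\mathsf{BC}\leftrightarrow\mathsf{B'C'}$ (not on $\mathsf{B}\leftrightarrow\mathsf{B'}$ alone) and track carefully which tensor factors survive each partial trace. Once the bipartite $\mathrm{SWAP}$ identities are applied consistently, the remaining steps are mechanical.
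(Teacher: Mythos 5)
Your proposal is correct and follows essentially the same route as the paper: a second-moment Haar twirl on $\mathcal{H}_{\mathsf{BC}}$ with $q=d_0^2$, the identity term killed by $\Tr[\ketbra{k}{k'}]=0$, and the $\mathrm{SWAP}$ term contracting to $\frac{d_0-1}{d_0^4-1}\Tr(\sigma_{\mathsf{C}}^2)$, followed by the purity bound $\Tr(\sigma_{\mathsf{C}}^2)\geq 1/d_0$. All intermediate constants match the paper's computation.
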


\begin{proof} From \factref{haarfact} and by letting $\bm{V} = \bm{U}_{\mathsf{BC}}$ and $\mathcal{H}$ to be the space of subsystems $\mathsf{BC}$ with $q=d_0^2$, we get 
    \begin{align}
        &\E_{\bm{U}}\bigg[(\bm{U}^\dagger_{\mathsf{BC}}\otimes \bm{U}^\dagger_{\mathsf{B'C'}})(\ketbra{i}{i}_{\mathsf{B}}\otimes \ketbra{i}{i}_{\mathsf{B}'}\otimes \iden_{\mathsf{CC'}})(\bm{U}_{\mathsf{BC}}\otimes \bm{U}_{\mathsf{B'C'}})\bigg]\nonumber\\
        &= \frac{d_0^2-d_0^{-1}}{d_0^4-1}\cdot \iden_{\mathsf{BB'CC'}} +\frac{d_0-1}{d_0^4-1}\cdot  \operatorname{SWAP}_{\mathsf{BC}, \mathsf{B'C'}}.
    \end{align}
    Hence, assuming $k\neq k'$ 
    \begin{align}
        \E_{\bm{U}}[f(\bm{U}, i, k, k', \sigma_{\mathsf{C}})] &= \frac{d_0-1}{d_0^4-1}\Tr\bigg[ \operatorname{SWAP}_{\mathsf{BC}, \mathsf{B'C'}} \big(\ketbra{kk'}{k'k}_{\mathsf{BB'}} \otimes \sigma_{\mathsf{C}} \otimes \sigma_{\mathsf{C'}} \big)\bigg]\nonumber \\
        &= \frac{d_0-1}{d_0^4-1}\cdot \Tr(\sigma_{\mathsf{C}}^2) \geq \frac{d_0-1}{d_0^4-1}\cdot \frac{1}{d_0} \geq \frac{1}{2d_0^4}
    \end{align}
\end{proof}

\subsubsection{Random shallow circuits}\label{sec:ShallowUBC}

Now we consider unitaries $\bm{U}$ which originate from a 1D, low depth locally-random quantum circuits, as introduced in \secref{ReductionShallow}.

\begin{lemma}
    Fixed $k\neq k', i\in [d^{D-1}]$, if $\bm{U}$ is drawn from the distribution described above,
    \begin{equation}
              \E_{\bm{U}}f(\bm{U}, i, k, k', \sigma_{\mathsf{C}}) = \left(\frac{1}{(d^2+1)(d+1)}\right)^{D-1}\cdot \Tr[\sigma_{\mathsf{C}}^2] \gtrsim \frac{1}{d^{3D}}
    \end{equation}
    
\end{lemma}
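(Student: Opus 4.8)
The plan is to evaluate $\E_{\bm U} f(\bm U, i, k, k', \sigma_{\mathsf C})$ as a \emph{second-moment} (two-replica) quantity and to twirl the circuit one gate at a time. The expression \eqref{eq-imperfect} is already written as a trace over two copies (primed and unprimed) of the lightcone $\bm U_{\mathsf{BC}}$, so for each independent Haar-random two-qudit gate $\bm V$ appearing in $\bm U_{\mathsf{BC}}$ I would apply \factref{haarfact} to the channel $O \mapsto \E_{\bm V}[\bm{V}^{\dagger\otimes 2}\, O\, \bm{V}^{\otimes 2}]$ with $q=d^2$. The standard consequence of parts (i)--(iv) is that on the doubled Hilbert space it suffices to record a single label per qudit wire, indicating whether that wire carries the identity $\iden$ or the swap $\mathrm{SWAP}$ between the two replicas, and that each gate acts as a transfer matrix on $\{\iden,\mathrm{SWAP}\}^{\otimes 2}$ which fixes the homogeneous configurations and sends each mixed configuration into $\tfrac{q}{q^2+1}\big(\iden\otimes\iden + \mathrm{SWAP}\otimes\mathrm{SWAP}\big)$.

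Next I would read off the boundary conditions imposed by the fixed operators in \eqref{eq-imperfect}. On the input $\mathsf B$ wires the operator $\ketbra{k}{k'}\otimes\ketbra{k'}{k}$ with $k\neq k'$ is purely off-diagonal, hence it overlaps with $\mathrm{SWAP}$ and is annihilated by $\iden$; on the input $\mathsf C$ wires, $\sigma_{\mathsf C}\otimes\sigma_{\mathsf{C}'}$ contributes $1$ against $\iden$ and the purity $\Tr[\sigma_{\mathsf C}^2]$ against $\mathrm{SWAP}$. At the output, the measurement projector $\ketbra{i}{i}^{\otimes 2}$ on $\mathsf B$ overlaps equally with $\iden$ and $\mathrm{SWAP}$, while $\iden_{\mathsf{CC'}}$ pins the $\mathsf C$ output wires to $\iden$. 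Thus a single ``swap defect'' is injected at the $\mathsf B$ input, and since the $\mathsf C$ output is forced to $\iden$ the defect has nowhere to terminate except at the $\mathsf C$ input, where it produces exactly the factor $\Tr[\sigma_{\mathsf C}^2]$.

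The triangular (lightcone) geometry is what makes the replica sum collapse: I would argue that, with these boundary conditions, essentially one configuration is consistent, namely the one in which the swap domain tracks the hypotenuse of the lightcone layer by layer, and that summing the transfer-matrix weights together with the projector contractions at the measured $\mathsf B$ qudits contributes a single constant factor per layer. Carrying this out over the $D-1$ layers of the triangular circuit should organize into $\big(\tfrac{1}{(d^2+1)(d+1)}\big)^{D-1}\Tr[\sigma_{\mathsf C}^2]$, giving the claimed identity. The final bound $\gtrsim d^{-3D}$ then follows from $(d^2+1)(d+1)\le 4d^3$ together with a lower bound on the relevant purity.

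The step I expect to be the main obstacle is the bookkeeping in the third paragraph: verifying that only the lightcone-tracking configuration survives and, more delicately, that the per-layer weight collapses to precisely $\tfrac{1}{(d^2+1)(d+1)}$ rather than the naive bulk factor $\tfrac{q}{q^2+1}=\tfrac{d^2}{d^4+1}$. Because $d^4+1$ does not factor as $(d^2+1)(d+1)$ times an integer, this constant cannot come from the bulk gate twirls alone; it must arise from pairing each gate twirl with the correct $\ketbra{i}{i}^{\otimes 2}$ contraction at the boundary of the triangle, including the edge cases at the apex and base of the lightcone. A useful consistency check is that collapsing $\bm U_{\mathsf{BC}}$ to a single global Haar gate must reproduce the exact value $\tfrac{d_0-1}{d_0^4-1}\Tr[\sigma_{\mathsf C}^2]$ obtained in the preceding Haar lemma.
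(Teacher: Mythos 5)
Your setup---writing $\E_{\bm U} f(\bm U,i,k,k',\sigma_{\mathsf C})$ as a two-replica quantity, twirling gate by gate with \factref{haarfact}, and tracking $\iden/\mathrm{SWAP}$ labels per wire---is exactly the machinery the paper uses. But the step you yourself flag as the ``main obstacle'' is where the argument genuinely breaks: it is not true that only the lightcone-tracking domain-wall configuration survives. After the twirl the operator is a sum $\sum_\gamma c_\gamma\,\gamma$ over \emph{all} $\iden/\mathrm{SWAP}$ configurations with non-negative coefficients, and many of them contract non-trivially against the input: on the $\mathsf C$ wires both $\iden$ and $\mathrm{SWAP}$ give non-negative overlaps with $\sigma_{\mathsf C}\otimes\sigma_{\mathsf{C}'}$ (partial purities), and on the $\mathsf B$ wires only the qudit positions where the strings $k$ and $k'$ differ are forced to carry $\mathrm{SWAP}$; positions with $k_j=k'_j$ carry $\ketbra{k_j}{k_j}$ and are \emph{not} annihilated by $\iden$. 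So the sum over domain-wall trajectories does not collapse to one term, and the exact product formula you are trying to derive cannot come out of this route---your own consistency check against the single-Haar-gate value $\frac{d_0-1}{d_0^4-1}\Tr[\sigma_{\mathsf C}^2]$ already signals this. (The displayed ``$=$'' in the lemma is best read as ``$\geq$'', consistent with \eqref{eq:BoundEFShallow}; only the inequality is proved and only the inequality is used downstream.)

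The paper sidesteps the uniqueness issue by proving only a lower bound. First, Haar invariance is used to insert a layer of single-qudit twirls $\bm W_{\mathsf B}$ under the measurement projector, so that property \textit{v.} of \factref{haarfact} converts $\ketbra{i}{i}_{\mathsf B}\otimes\ketbra{i}{i}_{\mathsf{B}'}$ into the uniform non-negative combination $\bigl(\frac{1}{d(d+1)}\bigr)^{D-1}\sum_\gamma \gamma$ over $\iden/\mathrm{SWAP}$ configurations on the $D-1$ qudits of $\mathsf B$. The two-qudit twirls then keep every $c_\gamma\geq 0$, and after multiplying through by $\mathrm{SWAP}_{\mathsf{BB'}}$ to turn $\ketbra{k}{k'}_{\mathsf B}\otimes\ketbra{k'}{k}_{\mathsf{B}'}$ into a product density matrix, \factref{SWAPproperty} shows every term of the final contraction is non-negative. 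One then discards everything except the single term $\gamma=\mathrm{SWAP}_{\mathsf{BB'}}\otimes\iden_{\mathsf{CC'}}$ and lower bounds its coefficient by one alternating domain-wall trajectory, picking up $\frac{1}{d(d+1)}$ per $\mathsf B$ qudit and $\frac{d}{d^2+1}$ per layer, whose product $\bigl(\frac{1}{(d+1)(d^2+1)}\bigr)^{D-1}$ is precisely the constant whose origin puzzled you. Replacing your uniqueness claim with this positivity-plus-one-trajectory argument is what is needed to close the gap.
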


\begin{proof}
Let us briefly high-level the proof of this claim. The key insight is two-fold: we picture the random quantum circuit in reverse, acting on the computational basis state $|b\rangle$ corresponding to the measurement outcome on $B$. We view the action of the locally random quantum circuit as a linear combination of Identity and Swap operators, which evolve and mix in time; in particular, the dynamics of the (reverse time evolved) circuit can be understood through the domain walls between $\iden$ and $S$ regions on the 1D chain. To show the lower bound in the claim, we show that each domain wall adds a positive contribution to the expectation, and we pick a single domain wall trajectory to lower bound the total sum. 

\begin{align}
    &\E_{\bm{U}}\bigg[(\bm{U}^\dagger_{\mathsf{BC}}\otimes \bm{U}^\dagger_{\mathsf{B'C'}})(\ketbra{i}{i}_{\mathsf{B}}\otimes \ketbra{i}{i}_{\mathsf{B}'}\otimes \iden_{\mathsf{CC'}})(\bm{U}_{\mathsf{BC}}\otimes \bm{U}_{\mathsf{B'C'}})\bigg]\nonumber\\
    &= \E_{\bm{U, \bm{W}}}\bigg[(\bm{U}^\dagger_{\mathsf{BC}}\otimes \bm{U}^\dagger_{\mathsf{B'C'}})(\bm{W}_{\mathsf{B}}^{\dagger} \otimes \bm{W}^{\dagger}_{\mathsf{B}'}\cdot \ketbra{i}{i}_{\mathsf{B}}\otimes \ketbra{i}{i}_{\mathsf{B}'}\cdot \bm{W}_{\mathsf{B}} \otimes \bm{W}_{\mathsf{B}'}\otimes \iden_{\mathsf{CC'}})(\bm{U}_{\mathsf{BC}}\otimes \bm{U}_{\mathsf{B'C'}})\bigg]\nonumber\\
    &=  \frac{1}{d^{D-1}(d+1)^{D-1}}\cdot \sum_{\gamma_1,\dots,\gamma_{D-1} \in \{\iden ,\operatorname{SWAP}\}}\E_{\bm{U}}\bigg[(U^\dagger_{\mathsf{BC}}\otimes U^\dagger_{\mathsf{B'C'}})(\otimes_{i\in [D-1]} \gamma_i\otimes \iden_{\mathsf{CC'}})(U_{\mathsf{BC}}\otimes U_{\mathsf{B'C'}})\bigg]
\end{align}
where we used the invariance property of Haar distribution of $\bm{U}$ under left multiplication by single qudit rotations, and applied the equality \textit{v.} of \factref{haarfact}. 
Each operator $\gamma_i$ acts on a qudit in $\mathsf{B}$ and its corresponding copy in $\mathsf{B'}$.

When the expectation over the two-local gates in $\bm{U}$ is taken, we get a mixture of the operators $\iden$ and $\operatorname{SWAP}$ following \factref{haarfact}.
The overall result is a linear combination over identity and swap operators given by $$\sum_{\gamma \in \{\iden, \operatorname{SWAP}\}^{\times 2(D-1)}} c_{\gamma} \cdot \gamma$$
where as seen in properties \textit{i.}-\textit{iv.} of \factref{haarfact}, the coefficients $c_{\gamma}$ are non-negative.
Here, $\gamma = \otimes_{i\in[2(D-1)]} \gamma_i$ where each $\gamma_i$ is supported on a qudit and its copy in $\mathsf{BB'}$ or $\mathsf{CC'}$. 
A simple property of the Swap operator ensures that these coefficients $c_\sigma$ give us a lower bound to the quantity of interest:

\begin{fact}\label{fact:SWAPproperty}
    The swap operator has a positive expectation over product states. That is, for any two density matrices $\rho_{\mathsf{X}}, \rho_{\mathsf{Y}}$, $\Tr(\operatorname{SWAP}_{\mathsf{XY}}\cdot \rho_{\mathsf{X}}\otimes \rho_{\mathsf{Y}}) \geq 0$.
\end{fact}

\noindent By applying this bound we get
\begin{align}
     \E_{\bm{U}}f(\bm{U},  i,k,k',\sigma_{\mathsf{C}}) &=  \sum_{\gamma \in \{\iden, \operatorname{SWAP}\}^{\times 2(D-1)}} c_\gamma \cdot \Tr\left(\gamma\cdot \ketbra{k}{k'}_{\mathsf{B}} \otimes  \ketbra{k'}{k}_{\mathsf{B'}} \otimes \sigma_{\mathsf{C}}\otimes \sigma_{\mathsf{C'}}\right)\nonumber\\
    &=\sum_{\gamma \in \{\iden, \operatorname{SWAP}\}^{\times 2(D-1)}} c_\gamma \cdot \Tr\left(\gamma\cdot \operatorname{SWAP}_{\mathsf{BB'}} \otimes \iden_{\mathsf{CC'}} \cdot \ketbra{k'}{k'}_{\mathsf{B}} \otimes  \ketbra{k}{k}_{\mathsf{B'}} \otimes \sigma_{\mathsf{C}}\otimes \sigma_{\mathsf{C'}}\right)\nonumber\\
    &\geq c_{\operatorname{SWAP}_{\mathsf{BB'}}, \iden_{\mathsf{CC'}}}.
\end{align}
The last inequality holds since $\ketbra{k'}{k'}_{\mathsf{B}} \otimes  \ketbra{k}{k}_{\mathsf{B'}} \otimes \sigma_{\mathsf{C}}\otimes \sigma_{\mathsf{C'}}$ is a density matrix over $\mathsf{B}, \mathsf{B'}, \mathsf{C}, \mathsf{C'}$ and a product state over $\mathsf{BC}, \mathsf{B'C'}$.
Hence, from \factref{SWAPproperty}, we see that the summation in the second line includes non-negative terms.
By choosing the term with $\gamma = \operatorname{SWAP}_{\mathsf{BB'}}\otimes \iden_{\mathsf{CC'}}$, we obtain the stated lower bound in terms of $c_{\operatorname{SWAP}_{\mathsf{BB'}}, \iden_{\mathsf{CC'}}}$.

To conclude, we find a lower bound for the coefficient $c_{\operatorname{SWAP}_{\mathsf{BB'}}, \iden_{\mathsf{CC'}}}$.
In the first place, as stated above, we apply a layer of single-qudit Haar random gates $\bm{W}_{\mathsf{B}}$ on subsystem $\mathsf{B}$ (and equivalently $\mathsf{B'}$).
This results in the coefficient corresponding to the operator $\operatorname{SWAP}_{\mathsf{BB'}}\otimes \iden_{\mathsf{CC'}}$ possessing an initial weight $\frac{1}{d^{D-1}(d+1)^{D-1}}$. 
Let qudit $i_0$ (resp. $i_0+1$) be the boundary qudit of region $\mathsf{B}$ (resp. $\mathsf{C}$). 
The link $(i_0,i_0+1)$ marks the transition from $\operatorname{SWAP}$ to $\iden$ in $\operatorname{SWAP}_{\mathsf{BB'}}\otimes \iden_{\mathsf{CC'}}$, and is often called the `domain wall'. 
We next see how the transformation of $\operatorname{SWAP}_{\mathsf{BB'}}\otimes \iden_{\mathsf{CC'}}$ due the application of the first level of two-qudit gates in $\bm{U}_{\mathsf{BC}}$ can be described by the movement of this domain wall. 
To this end, we divide this layer into three sets: $\bm{U}^{\dagger}_2$ denotes the gate acting on the boundary of $\mathsf{BC}$, and $\bm{U}^{\dagger}_1$ and $\bm{U}^{\dagger}_3$ denote the set of gates acting only on $\mathsf{B}$ and $\mathsf{C}$, respectively. 
We see from properties \textit{i.} and \textit{ii.} of \factref{haarfact} that 
$$\E_{\bm{U}_1} \left(\bm{U}^{\dagger \otimes 2}_1\cdot \operatorname{SWAP}_{\mathsf{BB'}} \otimes \iden_{\mathsf{CC'}}\cdot \bm{U}^{\otimes 2}_1 \right) = \E_{\bm{U}_3} \left(\bm{U}^{\dagger \otimes 2}_3\cdot \operatorname{SWAP}_{\mathsf{BB'}} \otimes \iden_{\mathsf{CC'}}\cdot \bm{U}^{\otimes 2}_3\right) = \operatorname{SWAP}_{\mathsf{BB'}} \otimes \iden_{\mathsf{CC'}}.$$
Moving to the middle gates $\bm{U}_2$, the property \textit{iii.} (or \textit{iv.}) of \factref{haarfact} yields $$\E_{\bm{U}_2}\left( \bm{U}^{\dagger \otimes 2}_2\cdot \operatorname{SWAP}_{i_0,i_0'} \otimes \iden_{i_0+1,i'_0+1}\cdot \bm{U}^{\otimes 2}_2\right) = \frac{d}{d^2+1}\cdot (\iden_{i_0,i'_0} \otimes\iden_{i_0+1,i'_0+1} + \operatorname{SWAP}_{i_0,i'_0}\otimes \operatorname{SWAP}_{i_0+1,i'_0+1}).$$ 
Combining these transformations, we see that the application of the first layer of gates on $\operatorname{SWAP}_{\mathsf{BB'}} \otimes \iden_{\mathsf{CC'}}$, results in a linear combination of two operators such that the domain wall has moved to the left in one or to the right in the other. 
In expectation over the remaining layers of $\bm{U}$, this domain wall propagates in time, resulting in a linear combination over operators that can be understood as a sum over domain wall trajectories. 

Suppose the depth $D$ is odd so that the number of layers $D-1$ in $\bm{U}$ is even.
Let us consider the domain wall trajectory which starts and ends in the same place, resulting in the same operator $\operatorname{SWAP}_{\mathsf{BB'}}\otimes \iden_{\mathsf{CC'}}$. 
One such path is the trivial alternating path, where the domain wall steps between $(i_0,i_0+1)$ and $(i_0+1,i_0+2)$ back and forth. 
By \factref{haarfact}, at each step the coefficient of this path decays by a factor of $\frac{d}{d^2+1}$, resulting in a contribution 
\begin{align}
    c_{\operatorname{SWAP}_{\mathsf{BB'}}, \iden_{\mathsf{CC'}}} &\geq \left(\frac{1}{d(d+1)}\right)^{D-1}\cdot \left(\frac{d}{d^2+1}\right)^{D-1} \gtrsim \frac{1}{d^{3D}}.\nonumber
    \end{align}
\end{proof}

\section{Details of numerical simulations}\label{sec:detailsNumerics}
For the RNN quantum states, both with and without phase information, we employ the tensor-gated recurrent unit (tensor-GRU) architecture \cite{Hibat2020RNN} with a batch size of $256$.
We utilize the Adam optimizer \cite{Adam}, setting the initial learning rate to $2 \times 10^{-3}$. 
The learning rate undergoes stepwise decays to $1 \times 10^{-3}$ after $8,000$ iterations and to $4 \times 10^{-4}$ after $24,000$ iterations. 
Each angle of the rotated cluster state is trained over a total of $10^5$ iterations.

The RBM ansatz is implemented using the NetKet package \cite{netket2:2019, netket3:2022} with a batch size of $2^{12}$. 
Parameters are optimized through stochastic gradient descent (SGD) combined with stochastic reconfiguration (SR). 
The learning rate follows a warmup cosine decay schedule \cite{deepmind2020jax}, starting at $2 \times 10^{-4}$, 
increasing linearly to $2 \times 10^{-4}$ by iteration $500$, and then decaying via a cosine curve to $2.5 \times 10^{-4}$ 
by iteration $2,500$. 

Finally, we investigate the conditional mutual information of random Matrix Product States (MPS) and Projected Entangled Pair States (PEPS) using Quimb \cite{gray2018quimb}. Density Matrix Renormalization Group (DMRG) calculations are performed with ITensor Julia \cite{itensor}.

In addition to the plots presented in this paper, which focus on tensor-GRU and RBM architectures discussed in the main text, our code includes three additional experiments outlined below:

(1) We explore the RWKV ansatz \cite{peng2023rwkv} and the Transformer Quantum State (TQS) ansatz \cite{sprague2024variational}. 
In our preliminary experiments, both architectures are trained with a batch size of $256$, using the Adam optimizer with a learning rate of $1.5 \times 10^{-4}$ with linear learning rate warmup from $10^{-6}$ for the first $10^3$ steps. 
Both ansatz exhibit similar behaviors to tensor-GRU.

(2) Apart from the 1D rotated cluster state model with single-qubit $R_y(\theta)$ rotations, we also observe similar connection between CMI correlation length and the performance of RNN quantum states for rotated graph state on a two-dimensional square lattice with:
$H_{\text{graph}} = -\sum_{1\leq i\leq N, 1\leq j\leq N}Z_{i,j}\prod_{\langle (i,j), (k, l)\rangle\in E} X_{k,l}$
where the subscript represents the location of the operator in the 2D-square lattice and $E$ is the set of all the edges of the 2D-square lattice.

(3) Inspired by the patched approach from \cite{sprague2024variational}, we modified the autoregressive output: instead of generating a two-fold probability distribution per qubit, the model produces a $2^p$-fold probability distribution for $p$ qubits at each step.
This modification reduces the effective depth of the autoregressive architecture at the cost of increasing the input and output dimensions at each step.

\end{document}